\newtheorem{theorem}{Theorem}[section]
\newtheorem{proposition}[theorem]{Proposition}
\newtheorem{definition}[theorem]{Definition}
\newtheorem{lemma}[theorem]{Lemma}
\newtheorem{example}{Example}[section]
\newtheorem{nonexample}[example]{(Non--)Example}
\newtheorem{corollary}[theorem]{Corollary}
\newenvironment{proofsketch}[1][Proof sketch]
{\begin{proof}[#1]}{\end{proof}}
\newenvironment{silentproof}
  {\par\pushQED{\qed}\noindent\ignorespaces}
  {\popQED\par}
\newcommand{\mc}[1]{\mathcal{#1}}
\newcommand{\ms}[1]{\mathscr{#1}}
\newcommand{\NN}{\mathbb{N}}
\newcommand{\ZZ}{\mathbb{Z}}
\newcommand{\argmax}[1]{\underset{#1}{\operatorname{argmax}}}
\newcommand{\ov}[1]{\overline{#1}}
\newcommand{\pa}[1]{\mathrm{pa}(#1)}
\newcommand{\ch}[1]{\mathrm{ch}(#1)}
\newcommand{\pai}[2]{\mathrm{pa}_{#1}(#2)}
\begin{document}
\title{Poset-Markov Channels: \\Capacity via Group Symmetry
% \\ {\fontsize{11}{12}\selectfont Eray Unsal Atay, Eitan Levin, Venkat Chandrasekaran, Victoria Kostina}
}
\author{Eray Unsal Atay, Eitan Levin, Venkat Chandrasekaran, Victoria Kostina
\thanks{
This work was supported in part by the National Science Foundation (NSF) under grant~CCF-1956386, by AFOSR grants FA9550-23-1-0070 and FA9550-23-1-0204, and by the Carver Mead New Adventures Fund.
The authors are with the California Institute of Technology (emails: \{eatay,eitanl,venkatc,vkostina\}@caltech.edu).
}
}

\maketitle

% {\let\thefootnote\relax\footnotetext{
% This work was supported in part by the National Science Foundation (NSF) under grant~CCF-1956386, by AFOSR grants FA9550-23-1-0070 and FA9550-23-1-0204, and by the Carver Mead New Adventures Fund.
% The authors are with the California Institute of Technology (emails: \{eatay,eitanl,venkatc,vkostina\}@caltech.edu).
% }
% }

\begin{abstract}
    Computing channel capacity is in general intractable because it is given by the limit of a sequence of optimization problems whose dimensionality grows to infinity.
    As a result, constant-sized characterizations of feedback or non-feedback capacity are known for only a few classes of channels with memory.
    This paper introduces \emph{poset-causal channels}---a new formalism of a communication channel in which channel inputs and outputs are indexed by the elements of a partially ordered set (poset).
    We develop a novel methodology that allows us to establish a single-letter upper bound on the feedback capacity of a subclass of poset-causal channels
    whose memory structure exhibits a Markov property
    and symmetry.
    The methodology is based on symmetry reduction in optimization.
    % and yields single-letter upper bounds on the capacities of poset-causal channels whose memory structure exhibits certain symmetry properties.
    % Additionally,
    % we establish connections to the literature on graphical models, via which
    % We observe that the upper bound we obtain is tight provided that there exists a global distribution that is consistent with given local distributions on the graphical model defined by the poset.
    % Accordingly, expressed in terms of the running-intersection property,
    % we provide a sufficient condition for when this upper bound is tight.
    We instantiate our method on two channel models: the Noisy Output is The STate (NOST) channel---for which the bound is tight---and a new two-dimensional extension of it.
\end{abstract}

\section{Introduction} \label{intro}

Shannon’s original work~\cite{Shannon1948} established \emph{channel capacity} as the fundamental limit on the rate of reliable communication over a noisy channel. In addition to characterizing the fundamental rate limit, channel capacity motivates constructions of coding schemes that can be implemented in practical communication settings, such as low-density parity-check (LDPC) codes~\cite{Gallager1962}, turbo codes~\cite{Berrou1993}, and polar codes~\cite{Arikan2009}.
Consequently, studying the capacity of a channel remains a central theme in information theory.

\emph{Feedback channel capacity}, which characterizes the maximal rate of reliable communication over the channel when
feedback is present from the channel outputs to the encoder,
is of independent interest
due to its relation with real-time settings---feedback capacity characterizations lead to constructive, real-time coding schemes
that are of particular relevance to remote control~\cite{Khina2019, Han2023, Han2024}.
% Namely,
In particular,
the posterior matching principle~\cite{Shayevitz2011posterior},
% introduced by Shayevitz and Feder,
which arises from the mutual-information capacity formulation with feedback,
achieves capacity for a broad class of memoryless channels.
% Hence, closed-form feedback capacity expressions directly guide practical, real-time codes.
Posterior matching crystallizes the principle behind the Horstein scheme for the binary symmetric channel (BSC)~\cite{Horstein1963} and the Elias scheme for the additive white Gaussian noise (AWGN) channel~\cite{Elias1956}---that of shaping the input to the channel to be most informative given the information that the receiver already has.

% In spite of the substantial interest channel capacity computations have received~\cite{Tatikonda2009, Kim2009, Grigorescu2024feedback},
% exact capacity characterizations are known for only a handful of channel settings.
For general channels with memory, both the feedback capacity and the Shannon (non-feedback) capacity are given by multi-letter expressions---an optimization over $n$-symbol input sequences whose limiting value
equals the capacity (see~\cite{Dobrushin1963} for Shannon capacity and~\cite{Massey1990,Tatikonda2009} for feedback capacity).
It has been shown that no algorithmic procedure (Turing machine) can, in general, take a finite description of such a channel and compute these capacity limits (or even approximate them to arbitrary precision); in computability-theoretic terms, the Shannon and feedback capacity functions are \mbox{\emph{non-computable}~\cite{Boche2020FSC, Grigorescu2024feedback}.}
Accordingly,
exact capacity characterizations are known only for a limited number of channels~\cite{Butman, CoverPombra1989, Elia, Kim-Stationary-Gaussian, Gattami-Gaussian, Sabag-MIMO-Gaussian, trapdoor, Ising, permuter2014post, NOST, huleihel2024capacity, Sabag-no-consec-ones, Sabag-BIBO, Peled-0-k-RLL, alajaji1994effect, alajaji1995feedback, song2018capacity},
% a more tractable approach to understanding channel capacity is to look for
and computable estimates, in the form of upper and lower bounds,
are of interest.

While lower bounds (or achievability results) are usually proven through coding schemes~\cite{Shannon1948, CoverThomas}, upper bounds (or converse results) are often challenging because proving that the rate of communication cannot exceed a certain quantity must essentially rule out all possible coding schemes.
Such converse results are particularly difficult to derive in channel settings where memory is present~\cite{Mushkin1989, trapdoor, Huleihel-upper-bound}.
% Methodologies for obtaining upper bounds on the capacity of channels with memory are therefore of fundamental interest.

The feedback capacity of any given channel serves as an upper bound on its Shannon capacity, since allowing feedback
can only increase the achievable communication rate.
Thus, an upper bound on the feedback capacity implies an upper bound on the Shannon capacity---an approach that was taken in earlier works~\cite{Yang2005, Sabag-upper-bound}.

In this paper, we develop upper bounds on the feedback capacity for a novel channel model we term \emph{poset-causal channels}
whose inputs and outputs are indexed by the elements of a partially ordered set (poset), thus generalizing the classical one-dimensional time-indexed channel,
and whose memory structure exhibits a Markov property
and symmetry.
These upper bounds yield converse results for both feedback and Shannon capacities.
% Additionally,
% we provide a sufficient condition, expressed via the running-intersection property, under which this upper bound is tight.

Throughout the paper,
uppercase letters denote random variables (e.g.,~$X,Y$),
and
lowercase letters denote realizations of random variables (e.g.,~$x,y$).
For positive integers $n$, we use the notations\linebreak ${X^n = (X_1 , \ldots , X_n)}$, $x^n = (x_1 , \ldots , x_n)$.
For random variables $X,Y$ we denote the probability of the event $X=x,Y=y$ by $P_{X,Y} (x,y)$,
and the conditional probability of $X=x$ given $Y=y$ by $P_{X \mid Y} (x \mid y)$.
Calligraphic letters denote finite sets (e.g.,~$\mc X, \mc Y$) and $|\cdot|$ denotes their cardinalities (e.g.,~$|\mc X|, |\mc Y|$),
whereas infinite sets are denoted by script letters (e.g.,~$\ms X, \ms Y$).
For positive integers $n$, we denote the set $\{1,\ldots,n\}$ by $[n]$.

\subsection{Background}

A discrete-time point-to-point communication channel with inputs $X_1,X_2,\ldots$ and outputs $Y_1,Y_2,\ldots$
is specified by a sequence of conditional probability distributions
\begin{equation}
    P_{ Y_t \mid X^t, Y^{t-1} } , \ t = 1, 2, \ldots ,
\end{equation}
Under mild regularity conditions, the feedback capacity of this channel is given by~\cite{Massey1990, Tatikonda2009}
\begin{equation} \label{eq:C-lim}
    C^{\textup{fb}} = \lim_{n \to \infty} \max_{\left\{ P_{X_t|X^{t-1},Y^{t-1}} \right\}_{t=1}^n} \frac{1}{n} I(X^n \to Y^n)
\end{equation}
where the \emph{directed information} $I(X^n \to Y^n)$ is given by~\cite{Massey1990}
\begin{equation}
    I(X^n \to Y^n) \coloneq \sum_{t=1}^n I(X^t ; Y_t \mid Y^{t-1}) ,
\end{equation}
and the $\max$ is over the \emph{channel input distributions} $P_{X_t \mid Y^{t-1}, X^{t-1}}$. Note that, together with the channel constants, the input distributions determine the joint distribution
\begin{equation} \label{eq:joint-induced}
    P_{X^n,Y^n} = \prod_{t=1}^n P_{ Y_t \mid X^t, Y^{t-1} } \, P_{X_t \mid X^{t-1}, Y^{t-1}} .
\end{equation}
% $P_{X^n,Y^n} = \Pi_{t=1}^n P_{ Y_t \mid X^t, Y^{t-1} } \, P_{X_t \mid X^{t-1}, Y^{t-1}}$.

The capacity expression in~\eqref{eq:C-lim} is the limit of a sequence of optimization problems. Even though each problem in the sequence is a concave maximization problem, and hence
in principle
computationally tractable, the dimensionality of the problems in the sequence grows with $n$, rendering the computation of the limit intractable in general~\cite{Boche2020FSC, Grigorescu2024feedback}. In this paper, we present a new methodology for overcoming this challenge by leveraging group symmetry in the underlying optimization problems.

As our point of departure, we consider the classical example of the stationary discrete memoryless channel (DMC), whose channel constants satisfy
\begin{equation} \label{DMC-stationary}
    P_{Y_t \mid X^t, Y^{t-1}} = Q_{Y_1 | X_1} , \ t = 1, 2, \ldots \, .
\end{equation}
Since the DMC is memoryless, the $n$\textsuperscript{th} maximization problem on the right-hand side of~\eqref{eq:C-lim} is equivalent to~\cite{Massey1990}
\begin{align}
    \max_{ \left\{ P_{X_t} \right\}_{t=1}^n } \frac{1}{n}\sum_{t=1}^n I( X_t ; Y_t )
    =\, \max_{P_{X_1}} \, I(X_1; Y_1) . \label{DMC_final_step}
\end{align}
That is, the sequence of growing-dimensional optimization problems reduces to a constant-sized \mbox{single-letter} one.
Although the equality in~\eqref{DMC_final_step} is evident from~\eqref{DMC-stationary}, we can also
interpret the simplification in~\eqref{DMC_final_step} from the perspective of group symmetry.
First, observe that each $I(X_t ; Y_t)$ is a concave function of $P_{X_t}$ as each $P_{Y_t|X_t}$ is fixed~\cite[Thm.~2.7.4]{CoverThomas}, and therefore the objective function $\sum_{t=1}^n I( X_t ; Y_t )$ is a concave function of the whole set of variables $\{ P_{X_t} \}_{t=1}^n$.  Next, this objective function is invariant with respect to the permutation group, which acts by permuting the variables $\{ P_{X_t} \}_{t=1}^n$.
Taken together, these two observations yield the single-letterization~\eqref{DMC_final_step} as a consequence of a general result from convex optimization~\cite[Ex.~4.4]{Boyd2004convex}---in a convex program in which the objective and constraints are invariant under the action of a group (i.e., \emph{symmetric}), the constraint set may be
restricted to variables invariant under the group.
In the case of the DMC,
replacing each $P_{X_t}$ by $\frac{1}{n} \sum_{t=1}^n P_{X_t}$ and by appealing to the concavity,
this amounts to the restriction $P_{X_t}=P_{X_1}$ for all $t$, as in~\eqref{DMC_final_step}.
Motivated by this reinterpretation of~\eqref{DMC_final_step},
we use
% this idea
symmetry reduction
in Theorem~\ref{thm:poset-single-letter}
below
to obtain a single-letter upper bound on capacities of a certain class of channels.

\subsection{Contributions}

In this work, we generalize
the usual notion of causality with respect to time index to a new notion of causality with respect to partial order, which we refer to as \emph{poset causality}.
The $n$-letter optimization problems whose limit gives the capacity are rarely symmetric beyond the simple example of the DMC. However,
% in this paper
we observe that they are \emph{approximately} symmetric for a large class of channels.
We propose a novel methodology for deriving single-letter convex upper bounds for capacities of
% poset-causal channels.
certain channels defined via poset causality.
Our methodology consists of formulating convex relaxations
% of the
% optimization
% problems characterizing
% upper bounding
of the
feedback capacity,
and deducing that these relaxations converge to a single-letter
% concave
convex
problem through the (approximate) symmetry they exhibit.

First, in Section~\ref{sec:NOST}, we present a novel and conceptually
% clean
transparent
derivation of the feedback capacity of the Noisy Output is the STate (NOST) channel~\cite{chen2005capacity, NOST} as a demonstration of our methodology.
The key component of our derivation is establishing the independence of capacity on the initial distribution in Lemma~\ref{lem:NOST-indep-init-dist} via a Markov-mixing argument based on the fixed-point theory of set-valued maps.
The independence with respect to the initial distribution then
% enables
yields
a simple proof of the achievability part in Proposition~\ref{prop:NOST-achievability}. % for the proof of the capacity expression.
% which is connected to the graphical models literature at a higher level.
% This connection facilitates a generalization of our treatment of the NOST channel capacity to a broader class of channels in Section~\ref{sec:poset}. %, which we lay out in more detail afterwards.

% Our proof for NOST feedback capacity involves the application of fixed-point theory of set-valued maps, and deduces the single-letter characterization derived in~\cite{NOST}.

% We extend the NOST channel, motivated by channels previously defined on 2D grids~\cite{berger-2d-ising, ReRAM}, to a communication setting over nodes in a 2D grid, calling the resulting channel model the \emph{2D NOST channel}.
% The fact that the channel's inputs and outputs are indexed by a 2D~grid, instead of a single axis of time as in the NOST channel, drastically complicates the capacity computation.
Next, as motivation for our development of a broader framework, we consider a two-dimensional variant of the NOST channel.
Communication channels defined on two-dimensional grids~\cite{berger-2d-ising, ReRAM} have been studied previously, and the fact that the channel’s inputs and outputs are indexed by a pair of numbers (rather than by a single index such as time) drastically complicates the analysis and computation of channel capacity.
To formalize and address these difficulties,
in Section~\ref{sec:poset} we consider a class of channels with inputs and outputs indexed by elements of a poset,
the \emph{poset-causal channels}.
Under certain Markovianity
and approximate symmetry assumptions detailed in Definitions \ref{def:poset-causal}, \ref{def:poset-markov}, and \ref{def:approx-sym}, we show that the feedback capacity of such channels is upper-bounded by a single-letter concave maximization problem.

Our proof methodology for obtaining the single-letter upper bound on feedback capacity is as follows.
% We start by interpreting the optimization problems defining capacity as optimizations over joint distributions that factorize according to graphical models of increasing dimension.
% We observe that the objectives of these optimizations effectively depend only on fixed-dimensional local marginals of the joint distributions.
% By replacing the joint-distribution variables with these local marginals, we obtain relaxations of the optimization problems of interest in Proposition~\ref{prop:poset-relax}.
% These relaxations are convex optimizations due to concavity of mutual information.
We start by formulating an upper bound on capacity in terms of a limit of maximizations where for each maximization the set of variables and the objective function are in terms of fixed-dimensional local subsets of inputs and outputs of the channel.
We observe that each maximization in this limit is a convex problem due to concavity of mutual information.
We then obtain a single-letter reformulation for
the limit of
these convex
% relaxations
problems
under further symmetry assumptions,
where we use the intuition and the principles of symmetry reduction in~\cite[Ex.~4.4]{Boyd2004convex}.
More specifically, when the channel's representation through a directed acyclic graph (DAG) is an induced subgraph of an acyclic Cayley graph, we show that the growing size optimization problems
% defining
upper-bounding
capacity converge to a single-letter problem.
This methodology results in a single-letter convex upper bound on the feedback capacity for
poset-causal channels
% with
satisfying
a Markov property and approximate symmetry, given in Theorem~\ref{thm:poset-single-letter}.

\subsection{Related Work}

\paragraph*{Capacity characterizations}
The capacity of channels with memory and feedback are explicitly characterized in only a few special cases, namely, certain Gaussian channels with memory~\cite{Butman, CoverPombra1989, Elia, Kim-Stationary-Gaussian, Gattami-Gaussian, Sabag-MIMO-Gaussian} and certain finite-state channels (FSCs)~\cite{trapdoor, Ising, permuter2014post, NOST, huleihel2024capacity, Sabag-no-consec-ones, Sabag-BIBO, Peled-0-k-RLL, alajaji1994effect, alajaji1995feedback, song2018capacity}.
In particular, a single-letter convex optimization expression for the feedback capacity was previously known only for the multiple-input multiple-output (MIMO) Gaussian channel~\cite{Sabag-MIMO-Gaussian} and the Noisy Output is the STate (NOST) channel~\cite{NOST}.

\paragraph*{Gaussian channels}
% In the realm of Gaussian channels,
The feedback capacity of channels with additive Gaussian noise
% processes
is derived as the limit of an $n$-letter matrix optimization problem~\cite{CoverPombra1989}. It is characterized by a finite-dimensional optimization when the Gaussian noise follows a state-space model~\cite{Kim-Stationary-Gaussian}. In~\cite{Gattami-Gaussian}, the idea of using linear matrix inequalities (LMIs) is proposed. The feedback capacity of a class of MIMO Gaussian channels is written as a finite-dimensional convex problem with LMI constraints in~\cite{Sabag-MIMO-Gaussian}.

\paragraph*{Finite-state channels}
An FSC is characterized by the conditional probabilities $Q_{Y, S | X, S^\prime}$, where $S^\prime$ and $S$ represent the previous and the current channel state, respectively.
A prominent example is the \mbox{trapdoor} channel introduced in~\cite{Blackwell1961}, which provides a simple model related to biological communication~\cite{berger2002lec}.
Its feedback capacity is formulated as an infinite-horizon dynamic program, the exact solution for which is explicitly found~\cite{trapdoor}.
Other FSCs with known feedback capacities are binary channels with run-length input constraints~\cite{Sabag-no-consec-ones, Sabag-BIBO, Peled-0-k-RLL}, FSCs with certain input-output \mbox{symmetries~\cite{alajaji1995feedback,alajaji1994effect,song2018capacity},} the Ising channel~\cite{Ising}, and the NOST channel~\cite{NOST}.
The NOST channel, introduced in~\cite{chen2005capacity}, can be characterized without the channel state $S$, through the conditional probabilities $Q_{Y | X, Y^\prime}$ where $Y^\prime$ represents the previous channel output. The feedback capacity of the NOST channel is given as a recursive program in~\cite{chen2005capacity}, and as the following convex optimization in~\cite{NOST}:

\begin{equation} \label{eq:C-NOST}
    C^{\textup{fb}}_{\textup{NOST}} \ = \ \max_{ P_{X,Y'} } I( X ; Y \mid Y' ) \quad \textup{s.t. } P_Y = P_{Y'}
\end{equation}
where $Y^\prime$ represents the previous channel output.
The constraint $P_Y = P_{Y^\prime}$ on the variables $P_{X,Y'}$
asserts that the channel output distribution is stationary when using the input distribution $P_{X|Y'}$.
In the \mbox{single-letter} capacity expression~\eqref{eq:C-NOST}
in~\cite{NOST}, the converse
is proven by upper-bounding the mutual information expression using a stationarity/ergodicity argument,
whereas the achievability is proven by constructing a coding scheme that performs rate-splitting.

\paragraph*{Methodologies}
% In terms of methodologies,
A dynamic programming formulation of channel capacity is given in~\cite{Tatikonda2000thesis, Tatikonda2009} and pursued in~\cite{trapdoor, Sabag-no-consec-ones, Sabag-BIBO, Peled-0-k-RLL, Ising, chen2005capacity}. Input-output symmetry is utilized in 
\cite{alajaji1995feedback,alajaji1994effect,song2018capacity} to show that feedback cannot increase the capacity of the channels considered therein. A method to derive \mbox{single-letter} upper bounds on capacities of certain FSCs based on mapping output sequences to nodes on an auxiliary directed graph called the \emph{Q-graph} is proposed in~\cite{Sabag-upper-bound}.

\paragraph*{Information theory of random fields}
In a standard discrete-time communication channel,
the
input and output processes are indexed by
$\mathbb N$. Channels whose input and output processes are indexed by more general index sets are studied in the
literature on the
information theory of random fields literature~\cite{bonomi1985thesis}. An instance of such a communication channel, the 2D Ising channel~\cite{berger-2d-ising}, has inputs and outputs indexed by the points on 2D grid lattice.
Another instance is the resistive random-access memory (ReRAM) channel~\cite{ReRAM}, which models the readout of data stored in a square‑grid array of resistive cells,
so the physical storage medium behaves as a communication channel with memory.

\paragraph*{Other areas}
Computing the limiting optimal value of sequences of optimization problems of growing size is an important problem in several other applications, including extremal\linebreak \mbox{combinatorics~\cite{raymond2018symmetric,raymond2018symmetry,lovasz2012large,brosch_thesis},} quantum information~\cite{huber2021positive,klep2018positive,klep2022optimization}, and \mbox{mean-field} game theory~\cite{lasry2007mean}.
In all these applications, the optimization problems involved are group invariant,
and symmetry reduction often yields constant-sized programs~\cite{raymond2018symmetric,raymond2018symmetry,riener2013exploiting,levin2023free}. 
Thus, our paper represents a new point of contact between information theory and the above literature on group-invariant optimization.

\paragraph*{Organization}
% The organization of the rest of this paper is as follows.
In Section~\ref{sec:NOST}, we give a novel proof of the single-letter  feedback capacity characterization of the NOST channel.
In Section~\ref{sec:poset}, we introduce poset-causal channels and their subclasses of
poset-Markov and
approximately symmetric channels, illustrated with running examples.
We prove a single-letter convex upper bound on the feedback capacity of approximately symmetric
% poset-causal
poset-Markov
channels.
% and also give sufficient conditions for when the upper bound is tight.
In Section~\ref{sec:numerical}, we present numerical results.
We then conclude in Section~\ref{sec:conclusions}.

\section{NOST Channel} \label{sec:NOST}

One of the contributions of our work is a novel derivation of the single-letter
feedback capacity of the Noisy Output is the STate (NOST) channel~\cite{NOST}.
Unlike the original derivation in~\cite{NOST}—which relies on
% the ``stationary–distribution converse’’ of~\cite{SabagKostinaHassibi2021}
a stationarity/ergodicity argument to restrict their analysis to stationary distributions—we work directly with the
limit expression~\eqref{eq:C-lim} and show that the same capacity emerges under a one-step positivity assumption that is stronger than the connectivity assumption in~\cite[Def.~1]{NOST}.
At a higher level, we use fixed-point theory to present a more transparent treatment of the initial distribution, which is effectively a ``seed'' to the channel.
This results in a simple proof of achievability.
% We also make use of the connection of achievability to graphical models more generally in Section~\ref{sec:poset} for poset-causal channels.

The NOST channel is a discrete-time channel with inputs $\{X_t\}_{t\ge1}$ and outputs $\{Y_t\}_{t\ge0}$ taking values from their corresponding finite alphabets $\mc X$ and $\mc Y$.
% The channel is characterized by channel constants
% $\{P_{Y_t\mid X^t,Y^{t-1}}\}_{t\ge1}$, and the user input distributions are
% $\{P_{X_t\mid X^{t-1},Y^{t-1}}\}_{t\ge1}$, as given in \eqref{eq:C-lim}.
The NOST channel is characterized by channel constants that satisfy
\begin{align}
    P_{Y_t\mid X^t,Y^{t-1}}
    &= P_{Y_t\mid X_t,Y_{t-1}} \label{eq:NOST-channel-const-1} \\
    &= Q_{Y\mid X,Y'} \label{eq:NOST-channel-const-2}
\end{align}
for all $t\ge1$ and for fixed conditional distributions $Q_{Y\mid X,Y'}$ that are
\emph{strictly positive}.\footnote{
The original work~\cite{NOST} requires the ``connectivity'' condition~\cite[Def.~1]{NOST}, namely the existence of a
\mbox{positive-probability} path in finitely many steps.
Our assumption is stronger (positivity in \emph{one} step) and
simplifies certain \mbox{Markov-mixing} arguments.
% all results remain valid under connectivity with minor technical modifications, recovering~\cite{NOST} as a special case.
}
Equation~\eqref{eq:NOST-channel-const-1} states that, given the entire past,
$Y_t$ depends only on the current input $X_t$ and the previous output $Y_{t-1}$, as depicted in Figure~\ref{fig:NOST}
below.
%while~
Equation~\eqref{eq:NOST-channel-const-2} asserts that this dependence is
\emph{stationary}.
% This exactly matches assumptions (1)–(3) in~\cite[Sec. II]{NOST}.

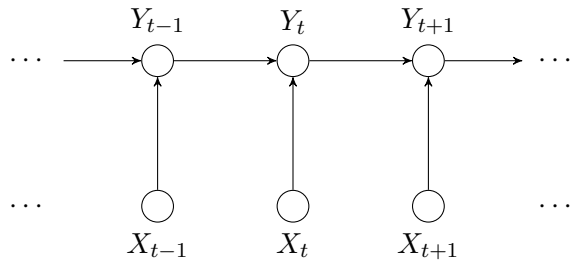
\begin{figure}[ht]
    \centering
    \begin{tikzpicture}[>=stealth', node distance=1.8cm]
        \tikzstyle{state}=[draw,circle,minimum size=12pt,inner sep=2pt]
        \node[state,label=below:$X_{t-1}$] (xprev) {};
        \node[state,right of=xprev,label=below:$X_{t}$] (xcur) {};
        \node[state,right of=xcur,label=below:$X_{t+1}$] (xnext) {};
        
        \node[state,above=1.5cm of xprev,label=above:$Y_{t-1}$] (yprev) {};
        \node[state,above=1.5cm of xcur,label=above:$Y_{t}$] (ycur) {};
        \node[state,above=1.5cm of xnext,label=above:$Y_{t+1}$] (ynext) {};
        
        \draw[->] (xprev) -- (yprev);
        \draw[->] (xcur)  -- (ycur);
        \draw[->] (xnext) -- (ynext);
        
        \draw[->] (yprev) -- (ycur);
        \draw[->] (ycur)  -- (ynext);

        \draw[->] ($(yprev)+(-1.25,0)$) -- (yprev);
        \draw[->] (ynext) -- ($(ynext)+(1.25,0)$);
        
        % feedback cue (dashed)
        % \draw[->,dashed] ($(ycur.south)+(0,-.15)$) -- ++(0,-.6) -| (xcur.south);

        % ellipses indicating continuation
        \node at ($(xprev.west)+(-1.5,0.0)$) {$\cdots$};
        \node at ($(xnext.east)+(1.5,0.0)$) {$\cdots$};
        \node at ($(yprev.west)+(-1.5,0.0)$) {$\cdots$};
        \node at ($(ynext.east)+(1.5,0.0)$) {$\cdots$};
    \end{tikzpicture}
    \caption{Depiction of the NOST channel input-output dependencies with current, previous, and next inputs and outputs.
    Each output $Y_t$ depends on the current input $X_t$ and the previous output $Y_{t-1}$, and is independent of the rest of the history given these two.
    % ($Y_t$ depends on $X_t$ and $Y_{t-1}$; dashed arrow indicates perfect output feedback)
    }
    \label{fig:NOST}
\end{figure}

\paragraph*{Initial output $Y_0$}
We assume that the initial output $Y_0$ has a fixed distribution $P_0$, and is available both at the encoder and the decoder.
% Following~\cite{NOST} we let $Y_0\sim P_0$ have an arbitrary but fixed distribution.
% \footnote{
% The original formulation of the NOST channel as a discrete-time channel does not have the ``initial output'' $P_{Y_0}$, as the first channel constant is $P_{Y_1 \mid X_1}$ and the first input distribution is $P_{X_1}$.
% The alternative formulation that does have $P_{Y_0}$ through the first channel constant $P_{Y_1 \mid X_1 , Y_0}$ and the first input distribution $P_{X_1 \mid Y_0}$ is operationally equivalent to the original formulation.
% We proceed with the alternative formulation for the sake of being able to express the first channel constant with the stationarity in~\eqref{eq:NOST-channel-const-2}. \\
% For coherence, the first channel constant should have $Q'_{Y_1 \mid X_1 , Y_0} = Q_{Y \mid X , Y'}$. To that end, we have an initial output $Y_0$ with fixed distribution $P_{Y_0} = P_0$. The original channel constant $P_{Y_1 \mid X_1}$ in this case is given by $P_{Y_1 \mid X_1} = \sum_{y_0} Q'_{Y_1 \mid X_1 , Y_0} \cdot P_0$, so the two characterizations of the channel with and without $Y_0$ are operationally equivalent.
% This operational equivalence holds with the first user input distributions $P_{X_1}$ and $P'_{X_1 \mid Y_0}$, corresponding to the characterizations without and with $Y_0$, respectively, being related through $P_{X_1} = \sum_{y_0} P'_{X_1 \mid Y_0} \cdot P_0$.
% }
Section~\ref{subsec:NOST-indep} below shows—by an explicit
Markov-mixing argument—that the capacity is in fact independent of
% $P_{Y_0}$.
the distribution of $Y_0$.
While~\cite{NOST} remarks on this independence, we
state it in a stronger form in Lemma~\ref{lem:NOST-indep-init-dist}
below,
and then in
Proposition~\ref{prop:NOST-max-over-init}
below,
we
incorporate
% $P_{Y_0}$
$P_0$
as an optimization variable.
Our goal in doing so is to obtain a more flexible characterization of the capacity than $Y_0$ having a fixed distribution.
% yields a more flexible characterization and later eases the extension
% to encoder side-information (compare with the CSI setting in
%~\cite[Sec. III-B]{NOST}).

Denote by $C^{\textup{fb}}_{\textup{NOST}}$ the feedback capacity~\eqref{eq:C-lim} of the NOST channel, characterized by the channel constants~\eqref{eq:NOST-channel-const-2}.
The next theorem restates the single-letter characterization of $C^{\textup{fb}}_{\textup{NOST}}$, proved \mbox{in~\cite[Thm.~1]{NOST}.}
% In the sequel we reproduce it using our alternative proof methodology based on the “sum-of-conditioned-mutual-information’’ expansion and an achievability that mirrors Gallager’s classic time-sharing construction.

\begin{theorem}[NOST single-letterization~\cite{NOST}] \label{thm:NOST}
    % The NOST channel described above has feedback capacity
    Feedback capacity of the NOST channel equals
    \begin{equation} \label{eq:NOST-C-single-let}
        C^{\textup{fb}}_{\textup{NOST}} \ = \ \max_{ P_{X,Y'} } I( X ; Y \mid Y' ) \quad \textup{s.t. } P_Y = P_{Y'} .
    \end{equation}
\end{theorem}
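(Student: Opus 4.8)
The plan is to prove the two inequalities in~\eqref{eq:NOST-C-single-let} separately, working directly with the multi-letter limit~\eqref{eq:C-lim} rather than restricting to stationary input distributions from the outset. Write $V \coloneq \max_{P_{X,Y'}:\,P_Y=P_{Y'}} I(X;Y\mid Y')$ for the right-hand side, where $P_Y$ always denotes the output marginal obtained by feeding $P_{X,Y'}$ through the channel constants $Q_{Y\mid X,Y'}$, and write $I_P(X;Y\mid Y')$ for the conditional mutual information of the joint law $P_{X,Y'}\,Q_{Y\mid X,Y'}$. As in the DMC discussion around~\eqref{DMC_final_step}, the map $P\mapsto I_P(X;Y\mid Y')$ is concave in $P_{X,Y'}$ because $Q_{Y\mid X,Y'}$ is fixed.

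\emph{Converse, $C^{\textup{fb}}_{\textup{NOST}}\le V$.} Fix $n$ and any admissible input distributions $\{P_{X_t\mid X^{t-1},Y^{t-1}}\}_{t=1}^n$, which induce a joint law via~\eqref{eq:joint-induced}. The NOST structure~\eqref{eq:NOST-channel-const-1} gives $H(Y_t\mid X^t,Y^{t-1})=H(Y_t\mid X_t,Y_{t-1})$, so, since conditioning reduces entropy,
\[
I(X^t;Y_t\mid Y^{t-1}) \;=\; H(Y_t\mid Y^{t-1})-H(Y_t\mid X_t,Y_{t-1}) \;\le\; I(X_t;Y_t\mid Y_{t-1}).
\]
Averaging over $t$ and using concavity (Jensen), $\frac1n I(X^n\to Y^n)\le \frac1n\sum_{t=1}^n I_{P_{X_t,Y_{t-1}}}(X;Y\mid Y')\le I_{\bar P}(X;Y\mid Y')$ with $\bar P \coloneq \frac1n\sum_{t=1}^n P_{X_t,Y_{t-1}}$. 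Now $\bar P$ is \emph{nearly stationary}: its $Y'$-marginal is $\frac1n\sum_{s=0}^{n-1}P_{Y_s}$ while its induced $Y$-marginal is $\frac1n\sum_{s=1}^{n}P_{Y_s}$, so $\bar P_Y-\bar P_{Y'}=\frac1n(P_{Y_n}-P_{Y_0})$ and hence $\|\bar P_Y-\bar P_{Y'}\|_{\mathrm{TV}}\le 1/n$. Thus the $n$th maximand in~\eqref{eq:C-lim} is at most $V_n$, the relaxation of $V$ in which the constraint is weakened to $\|P_Y-P_{Y'}\|_{\mathrm{TV}}\le 1/n$. A routine compactness argument---the simplex is compact, and the objective and the map $P\mapsto(P_Y,P_{Y'})$ are continuous---shows $V_n\downarrow V$, giving $C^{\textup{fb}}_{\textup{NOST}}\le V$.

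\emph{Achievability, $V\le C^{\textup{fb}}_{\textup{NOST}}$.} Let $P^\star_{X,Y'}$ attain $V$ and drive the channel with the stationary input rule $P_{X_t\mid X^{t-1},Y^{t-1}}\coloneq P^\star_{X\mid Y'}(\,\cdot\mid Y_{t-1})$, which is admissible in~\eqref{eq:C-lim}. Under this rule the outputs $Y_0,Y_1,\dots$ form a Markov chain with transition kernel $K(y\mid y')=\sum_x P^\star_{X\mid Y'}(x\mid y')\,Q(y\mid x,y')$, for which $P^\star_{Y'}$ is stationary by the constraint $P^\star_Y=P^\star_{Y'}$; in particular $I(X^t;Y_t\mid Y^{t-1})=I(X_t;Y_t\mid Y_{t-1})$ with no loss, so $\frac1n I(X^n\to Y^n)=\frac1n\sum_{t=1}^n I_{P_{X_t,Y_{t-1}}}(X;Y\mid Y')$. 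Strict positivity of $Q_{Y\mid X,Y'}$ makes $K$ a uniform contraction, so $P_{Y_t}\to P^\star_{Y'}$ irrespective of the seed $P_0$; a Cesàro argument then yields $\frac1n\sum_{t=1}^n I_{P_{X_t,Y_{t-1}}}(X;Y\mid Y')\to I_{P^\star}(X;Y\mid Y')=V$. This is where Lemma~\ref{lem:NOST-indep-init-dist} (independence of the capacity from the initial distribution, via the fixed-point theory of set-valued maps) and Proposition~\ref{prop:NOST-achievability} come in. Hence the $\liminf_n$ of the $n$th maximand in~\eqref{eq:C-lim} is at least $V$, which together with the converse establishes~\eqref{eq:NOST-C-single-let}.

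\emph{Main obstacle.} The converse is the NOST analogue of the symmetry reduction that collapses the DMC problem in~\eqref{DMC_final_step}---average the local joints $P_{X_t,Y_{t-1}}$ over $t$ and invoke concavity---and poses no real difficulty once the near-stationarity estimate is in hand. The substantive step is the achievability direction, and within it Lemma~\ref{lem:NOST-indep-init-dist}: one needs the output marginal to forget its initialization quickly enough that the Cesàro averages of the conditional mutual informations converge to $V$, and uniformly enough over seeds $P_0$ that $P_0$ can subsequently be promoted to a free optimization variable (Proposition~\ref{prop:NOST-max-over-init}). Although strict positivity of $Q$ renders the induced $\mc Y$-chain a Dobrushin contraction, its kernel is the one determined by the chosen near-optimal input rule, so it is cleanest to package the argument as a set-valued fixed-point statement rather than a bare mixing bound.
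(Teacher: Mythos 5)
Your proposal is correct, and its converse half is essentially the paper's argument: the bound $I(X^t;Y_t\mid Y^{t-1})\le I(X_t;Y_t\mid Y_{t-1})$ is the inequality direction of Proposition~\ref{prop:NOST-formulation}, and the time-averaging of the joints $P_{X_t,Y_{t-1}}$, Jensen via concavity, the $\|\bar P_Y-\bar P_{Y'}\|_{\mathrm{TV}}\le 1/n$ near-stationarity estimate, and the nested-compact-sets limit $V_n\downarrow V$ are exactly the steps of Proposition~\ref{prop:NOST-converse} (with its sets $D_{1/n}$ and $D_0$). Where you genuinely diverge is achievability. The paper first proves the strong statement that the capacity is insensitive to the seed uniformly over input policies (Lemma~\ref{lem:NOST-indep-init-dist}, via a set-valued contraction and the Hausdorff metric), promotes $P_0$ to an optimization variable (Proposition~\ref{prop:NOST-max-over-init}), and then in Proposition~\ref{prop:NOST-achievability} simply sets $P_0=P^\star_{Y'}$ so that $P_{X_t,Y_{t-1}}=P^\star_{X,Y'}$ exactly for every $t$ and each summand equals $V$ with no limiting argument. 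You instead keep the given $P_0$, run the stationary rule $P^\star_{X\mid Y'}$, and use the Dobrushin contraction of the single induced kernel $K$ (from strict positivity of $Q$) plus continuity of conditional mutual information and a Ces\`aro argument to get the time-averaged value to converge to $V$. Your route is more elementary—it needs only an ordinary fixed-point/mixing bound for one stochastic matrix rather than the set-valued machinery—while the paper's route buys the reusable and conceptually central facts that the capacity is independent of $P_0$ and that $P_0$ may be optimized over, which it also leverages for the exact (non-asymptotic) replication in achievability. One small correction to your closing remark: for your achievability as written, a bare mixing bound for the fixed kernel $K$ suffices, and the set-valued packaging is needed only for the stronger seed-independence statement of Lemma~\ref{lem:NOST-indep-init-dist} (uniformity over all policies), not for your argument; also, the equality $I(X^t;Y_t\mid Y^{t-1})=I(X_t;Y_t\mid Y_{t-1})$ under your policy follows from the input depending only on $Y_{t-1}$ (as in the second half of the proof of Proposition~\ref{prop:NOST-formulation}), not from stationarity of $P^\star_{Y'}$.
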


\noindent
We prove~\eqref{eq:NOST-C-single-let} in two main steps:

\begin{enumerate}
    \item \textbf{Independence with respect to initial distribution.}
        We show the independence of $C^{\textup{fb}}_{\textup{NOST}}$ with respect to the initial distribution
        % $P_{Y_0}$,
        $P_0$
        in Lemma~\ref{lem:NOST-indep-init-dist}
        below,
        and we add
        % $P_{Y_0}$
        $P_0$
        as a variable to the optimizations in the capacity expression
        in Proposition~\ref{prop:NOST-max-over-init}
        below.
    
    \item \textbf{Single-letter achievability \& converse.}
        We prove the achievability and the converse in~\eqref{eq:NOST-C-single-let}, i.e., that the single-letter expression is an upper bound as well as a lower bound on $C^{\textup{fb}}_{\textup{NOST}}$.
        The achievability part
        in Proposition~\ref{prop:NOST-achievability}
        below
        is rather straightforward, as having $P_0$ as a variable
        makes it possible to
        replicate any feasible choice $P_{X,Y'}$ from the right-hand side on the left-hand side through an inductive construction, ensuring that the objective value on the left is at least as large.
        The converse part
        in Proposition~\ref{prop:NOST-converse}
        below
        uses the intuition and the principles of symmetry \mbox{reduction~\cite[Ex.~4.4]{Boyd2004convex}}
        through
        averaging the sequence of joint distributions and invoking concavity via Jensen's inequality to upper-bound the objective in the limit.
\end{enumerate}

\subsection{Capacity and its independence of the initial distribution}
\label{subsec:NOST-indep}

The first main step is to show that the
% capacity $C^{\textup{fb}}_{\textup{NOST}}$ (to be referred as ``NOST capacity'')
feedback capacity of the NOST channel
is independent of the initial distribution
% $P_{Y_0}$,
$P_0$,
and then to obtain a characterization of the channel capacity where % $P_{Y_0}$
$P_0$
is a variable instead of
% having the fixed value $P_0$.
begin fixed.
In the proposition below, we simplify the capacity problem given in~\eqref{eq:C-lim}.

\begin{proposition}[NOST simplification] \label{prop:NOST-formulation}
    The feedback-capacity limit in~\eqref{eq:C-lim} for the NOST channel equals
    \begin{equation}
        \lim_{n \to \infty} \max_{\left\{ P_{X_t|X^{t-1},Y^{t-1}} \right\}_{t=1}^n} \frac{1}{n} I(X^n \to Y^n)
        \ = \
        \lim_{n \to \infty} \max_{ \left\{ P_{ X_t \mid Y_{t-1} } \right\}_{t=1}^n } \frac{1}{n} \sum_{t=1}^n I( X_t ; Y_t \mid Y_{t-1} )
    \end{equation}
    where $Y_0\sim P_0$.
\end{proposition}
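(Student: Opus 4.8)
The plan is to prove the identity at the level of each fixed $n$ and then pass to the limit: I will show that for every $n$,
\[
\max_{\{P_{X_t\mid X^{t-1},Y^{t-1}}\}_{t=1}^n} \tfrac1n I(X^n\to Y^n)
\;=\;
\max_{\{P_{X_t\mid Y_{t-1}}\}_{t=1}^n} \tfrac1n \sum_{t=1}^n I(X_t;Y_t\mid Y_{t-1}),
\]
with $Y_0\sim P_0$ throughout, so in particular the two limits coincide.

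For ``$\le$'', fix a general strategy $\{P_{X_t\mid X^{t-1},Y^{t-1}}\}$ and let $P_{X^n,Y^n}$ be the induced joint law \eqref{eq:joint-induced}. The NOST property \eqref{eq:NOST-channel-const-1} gives $H(Y_t\mid X^t,Y^{t-1}) = H(Y_t\mid X_t,Y_{t-1})$, so, discarding conditioning in the remaining entropy,
\[
I(X^t;Y_t\mid Y^{t-1}) = H(Y_t\mid Y^{t-1}) - H(Y_t\mid X_t,Y_{t-1}) \;\le\; H(Y_t\mid Y_{t-1}) - H(Y_t\mid X_t,Y_{t-1}) = I(X_t;Y_t\mid Y_{t-1}),
\]
and summing over $t$ gives $\tfrac1n I(X^n\to Y^n)\le \tfrac1n\sum_{t=1}^n I(X_t;Y_t\mid Y_{t-1})$, where the right-hand terms are computed under $P_{X^n,Y^n}$. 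The crucial observation is that both the output marginal $P_{Y_t}$ and each term $I(X_t;Y_t\mid Y_{t-1})$ depend on the strategy only through the induced conditional marginals $g_t := P_{X_t\mid Y_{t-1}}$ and the previous output marginal $P_{Y_{t-1}}$: marginalizing \eqref{eq:joint-induced} and using \eqref{eq:NOST-channel-const-2},
\[
P_{Y_t}(y) = \sum_{x,y'} Q_{Y\mid X,Y'}(y\mid x,y')\, g_t(x\mid y')\, P_{Y_{t-1}}(y'),
\]
while $I(X_t;Y_t\mid Y_{t-1}) = \sum_{y'}P_{Y_{t-1}}(y')\,I(X_t;Y_t\mid Y_{t-1}=y')$ with each inner term a fixed function of $g_t(\cdot\mid y')$ and $Q_{Y\mid X,Y'}(\cdot\mid\cdot,y')$. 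Consequently, the restricted strategy using exactly the conditionals $\{g_t\}_{t=1}^n$ induces, by induction on $t$ starting from $P_{Y_0}=P_0$, the same output marginals $P_{Y_0},\ldots,P_{Y_n}$, hence the same values of $I(X_t;Y_t\mid Y_{t-1})$ for every $t$. Thus $\tfrac1n I(X^n\to Y^n)$ for the general strategy is bounded by $\tfrac1n\sum_{t=1}^n I(X_t;Y_t\mid Y_{t-1})$ for the associated restricted strategy, and maximizing over general strategies yields ``$\le$''.

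For ``$\ge$'', a restricted strategy $\{P_{X_t\mid Y_{t-1}}\}$ is itself a general strategy with $P_{X_t\mid X^{t-1},Y^{t-1}}=P_{X_t\mid Y_{t-1}}$, hence $P_{X_t\mid Y^{t-1}}=P_{X_t\mid Y_{t-1}}$; combined with \eqref{eq:NOST-channel-const-1}, this forces $P_{Y_t\mid Y^{t-1}}=P_{Y_t\mid Y_{t-1}}$, i.e.\ $Y^n$ is a Markov chain. Then $I(X^t;Y_t\mid Y^{t-1}) = H(Y_t\mid Y_{t-1})-H(Y_t\mid X_t,Y_{t-1}) = I(X_t;Y_t\mid Y_{t-1})$ term by term, so $I(X^n\to Y^n)=\sum_{t=1}^n I(X_t;Y_t\mid Y_{t-1})$, which gives ``$\ge$''. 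The two $n$-letter optima therefore agree for every $n$, and the stated equality of limits follows.

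I expect the one genuinely load-bearing step to be the inductive marginal-matching argument of the second paragraph: the relaxation to $Y_{t-1}$-only strategies is lossless precisely because the output process obeys a one-step marginal recursion that is insensitive to whatever long-range correlations a general strategy might introduce among $(X^{t-1},Y^{t-1})$; the remainder is bookkeeping with the chain rule for directed information and the NOST conditional-independence structure \eqref{eq:NOST-channel-const-1}--\eqref{eq:NOST-channel-const-2}. I also note that this proof uses only \eqref{eq:NOST-channel-const-1}--\eqref{eq:NOST-channel-const-2} and not the strict positivity of $Q_{Y\mid X,Y'}$, which is needed only in the later Markov-mixing arguments.
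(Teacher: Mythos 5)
Your proposal is correct and follows essentially the same route as the paper's proof: the per-$n$ identity is established by (i) bounding each directed-information summand via $H(Y_t\mid Y^{t-1})\le H(Y_t\mid Y_{t-1})$ together with the NOST property $H(Y_t\mid X^t,Y^{t-1})=H(Y_t\mid X_t,Y_{t-1})$, then showing by induction that the restricted strategy built from the induced conditionals $P_{X_t\mid Y_{t-1}}$ reproduces the same per-letter joint laws (the paper matches $P_{X_t,Y_{t-1}}$, you match $P_{Y_t}$, which is equivalent given the shared conditionals), and (ii) observing that under the restriction $P_{X_t\mid X^{t-1},Y^{t-1}}=P_{X_t\mid Y_{t-1}}$ the directed information collapses term by term to $\sum_t I(X_t;Y_t\mid Y_{t-1})$. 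Your closing remarks—that the marginal-matching induction is the load-bearing step and that strict positivity of $Q_{Y\mid X,Y'}$ is not needed here—are consistent with the paper's argument.
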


\begin{proofsketch}
    % Leveraging the Markovianity property of the NOST channel in~\eqref{eq:NOST-channel-const-1},
    We simplify each mutual information term in the directed information together with the corresponding input distribution variable.
    The simplified terms reflect the Markovianity of the NOST channel in~\eqref{eq:NOST-channel-const-1},
    as each contains $X_t,Y_t,Y_{t-1}$ only.
    The full proof can be found in Appendix~\ref{sec:proof-prop:NOST-formulation}.
\end{proofsketch}

The result in Proposition~\ref{prop:NOST-formulation} appears in~\cite[eq.~(42-b)]{NOST} as a simplification of the mutual information sum.
Prior to that, it appeared as an upper bound
for the converse proofs in~\cite[eq.~(6)]{chen2005capacity},~\cite[eq.~(36)]{permuter2014post}.
Here, we include it to ensure the completeness of our proof methodology.

% Next, we show that NOST capacity is independent of the initial distribution $P_{Y_0}$.
Next, we formalize the independence of $C^{\textup{fb}}_{\textup{NOST}}$ with respect to the initial distribution
% $P_{Y_0}$.
$P_0$.

\begin{lemma}[NOST independence of the initial distribution] \label{lem:NOST-indep-init-dist}
    For any sequence $\{\widetilde P^{(n)}_{Y_0}\}_{n\ge1}$ of initial distributions,
    \begin{equation} \label{eq:NOST-indep-init-dist}
        \begin{aligned}
            C^{\textup{fb}}_{\textup{NOST}}
            =
            \lim_{n \rightarrow \infty} \frac{1}{n} & \max_{ \left\{ P_{ X_t \mid Y_{t-1} } \right\}_{t=1}^n } \sum_{t=1}^n I( X_t ; Y_t \mid Y_{t-1} )
            =
            \lim_{n \rightarrow \infty} \frac{1}{n} \max_{ \left\{ P_{ X_t \mid Y_{t-1} } \right\}_{t=1}^n } \sum_{t=1}^n I( X_t ; Y_t \mid Y_{t-1} ) \\
            & \hspace{0.8cm} \textup{s.t.} \quad Y_0 \sim P_0
            \hspace{4.7cm} \textup{s.t.} \quad Y_0 \sim \widetilde{P}_0^{(n)} .
        \end{aligned}
    \end{equation}
\end{lemma}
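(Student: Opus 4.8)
The plan is to prove the stronger uniform statement that $|V_n(\mu)-V_n(\nu)|\le K/n$ for a constant $K$ depending only on $Q_{Y\mid X,Y'}$ and $|\mc{Y}|$ and for all probability distributions $\mu,\nu$ on $\mc{Y}$, where
$V_n(\mu)\coloneq\tfrac1n\max_{\{P_{X_t\mid Y_{t-1}}\}_{t=1}^n}\sum_{t=1}^n I(X_t;Y_t\mid Y_{t-1})$ taken under the constraint $Y_0\sim\mu$. Granting this, Proposition~\ref{prop:NOST-formulation} gives $C^{\textup{fb}}_{\textup{NOST}}=\lim_n V_n(P_0)$ (so this limit exists), and then taking $\mu=P_0$, $\nu=\widetilde P_0^{(n)}$ and letting $n\to\infty$ forces $\lim_n V_n(\widetilde P_0^{(n)})$ to exist and equal the same value, which is the claim. (If one prefers not to assume a priori that the limit on the right exists, the bound $|V_n(P_0)-V_n(\widetilde P_0^{(n)})|\le K/n$ forces the $\limsup$ and $\liminf$ of the two sequences to agree.)

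Fix an input policy $\pi=\{P_{X_t\mid Y_{t-1}}\}_{t=1}^n$ and run the channel from two initial laws $\mu$ and $\nu$. Two observations drive the estimate. First, conditioned on $Y_{t-1}=y'$, the law of $(X_t,Y_t)$ is $P_{X_t\mid Y_{t-1}}(\cdot\mid y')\,Q_{Y\mid X,Y'}(\cdot\mid\cdot,y')$, which does not involve the initial law; hence $I(X_t;Y_t\mid Y_{t-1})=\sum_{y'}P_{Y_{t-1}}(y')\,g_t(y')$, where $g_t(y')\coloneq I(X_t;Y_t\mid Y_{t-1}=y')$ is a function of $y'$ alone (depending on $\pi$ and $Q$, but not on the initial law) with $0\le g_t(y')\le\log|\mc{Y}|$. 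Second, the output process is a time-inhomogeneous Markov chain with transition matrices $T_t(y\mid y')=\sum_x Q_{Y\mid X,Y'}(y\mid x,y')\,P_{X_t\mid Y_{t-1}}(x\mid y')$; since $Q_{Y\mid X,Y'}$ is strictly positive, $T_t(y\mid y')\ge\delta\coloneq\min_{x,y,y'}Q_{Y\mid X,Y'}(y\mid x,y')>0$ for \emph{every} $t$ and every policy, so each $T_t$ has Dobrushin contraction coefficient at most $1-|\mc{Y}|\delta<1$. Writing $P_{Y_t}^{\mu},P_{Y_t}^{\nu}$ for the output marginals under the common policy $\pi$, this yields $\|P_{Y_t}^{\mu}-P_{Y_t}^{\nu}\|_{\mathrm{TV}}\le(1-|\mc{Y}|\delta)^{t}$.

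Combining the two observations, for the fixed policy $\pi$,
\[
\Bigl|\sum_{t=1}^{n}I^{\mu}(X_t;Y_t\mid Y_{t-1})-\sum_{t=1}^{n}I^{\nu}(X_t;Y_t\mid Y_{t-1})\Bigr|
=\Bigl|\sum_{t=1}^{n}\langle P_{Y_{t-1}}^{\mu}-P_{Y_{t-1}}^{\nu},\,g_t\rangle\Bigr|
\le\log|\mc{Y}|\sum_{t=1}^{n}(1-|\mc{Y}|\delta)^{t-1}\le\frac{\log|\mc{Y}|}{|\mc{Y}|\delta}\eqqcolon K,
\]
where the middle step uses $|\langle p-q,g\rangle|\le\|p-q\|_{\mathrm{TV}}(\max g-\min g)$, valid whenever $p-q$ sums to zero, and the last step sums a geometric series. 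Now take $\pi$ optimal for the problem with $Y_0\sim\mu$; it is feasible for the problem with $Y_0\sim\nu$, so $n\,V_n(\mu)\le n\,V_n(\nu)+K$, and by symmetry $|V_n(\mu)-V_n(\nu)|\le K/n$, as needed.

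The crux—and the step I expect to be the main obstacle—is the uniform-in-policy minorization $T_t(y\mid y')\ge\delta$: this is exactly where strict positivity of $Q_{Y\mid X,Y'}$ is used (the multi-step connectivity of~\cite{NOST} would give only a weaker, policy-dependent bound after a bounded number of steps, which is enough but fiddlier), and it is what makes the per-step discrepancies geometrically small with a policy-independent ratio, so that their sum is $O(1)$ and disappears after dividing by $n$. A secondary point is recognizing that the objective sees the initial law only through the single-site marginals $P_{Y_{t-1}}$, via the policy-determined functions $g_t$, so that total-variation closeness of those marginals is all that is required. An equivalent but heavier route phrases the same mixing phenomenon as a contraction of a set-valued map on sequences of output distributions; the elementary estimate above makes that machinery unnecessary for the bound we need.
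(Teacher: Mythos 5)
Your proof is correct, and it takes a genuinely different, more elementary route than the paper's. The paper (Appendix~\ref{sec:proof-lem:NOST-indep-init-dist}) proves the same one-step contraction you use---a kernel with entries bounded below by $\gamma$ contracts $d_{\mathrm{TV}}$ by $\alpha=1-|\mc Y|\gamma$ (Lemma~\ref{lem:NOST-F-contr})---but then lifts it to a set-valued map on closed subsets of the simplex, applies the Banach fixed point theorem in the Hausdorff metric to obtain an attracting set $K$ of reachable output laws, and compares the two optimization problems by a policy-splicing construction: it runs an arbitrary prefix of length $N_\delta$ to steer $P_{Y_{N_\delta}}$ within $\delta$ of the marginal generated by the maximizer for the other initialization, copies that maximizer from then on, propagates the $\delta$-closeness of the triple marginals by induction, and invokes uniform continuity of mutual information, letting the $N_\delta$ prefix wash out in the Ces\`aro limit. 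You instead keep the policy identical under both initializations, note that the objective sees the initial law only through the marginals $P_{Y_{t-1}}$ paired against the policy-determined functions $g_t$ (valid because the channel is Markov and the policies from Proposition~\ref{prop:NOST-formulation} condition only on $Y_{t-1}$), and sum the geometric decay of $d_{\mathrm{TV}}(P^{\mu}_{Y_{t-1}},P^{\nu}_{Y_{t-1}})$ to get the uniform bound $|V_n(\mu)-V_n(\nu)|\le K/n$ with an explicit constant; combined with Proposition~\ref{prop:NOST-formulation} this yields the lemma, and in fact the stronger two-sequence version the paper actually proves, with a quantitative $O(1/n)$ rate and with no hyperspace fixed-point machinery or $\epsilon$--$\delta$ continuity bookkeeping. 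What the paper's heavier formulation buys is the explicit attracting set of feasible output distributions, which it highlights as part of its fixed-point methodology, but that object is not needed for this statement. Both arguments exploit the strict positivity of $Q_{Y\mid X,Y'}$ at exactly the same point, the policy-uniform one-step minorization, which you correctly identify as the crux.
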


\begin{proofsketch}
    The independence on the initial distribution essentially pertains to mixing of Markov chains.
    The proof relies on Banach fixed point theorem (a.k.a. contraction mapping theorem).
    We prove that the set of feasible values of $P_{Y_n}$ as $n\to\infty$ concentrates around a closed and bounded subset $K$ of the set of probability distributions over the output alphabet $\mc Y$,
    where the closed and bounded subset $K$ does not depend on
    % $P_{Y_0}$.
    $P_0$.
    The mappings under consideration are proven to be contraction mappings by the strict positivity of the channel constants $Q_{Y \mid X , Y'}$.
    The full proof can be found in Appendix~\ref{sec:proof-lem:NOST-indep-init-dist}.
\end{proofsketch}

% Independence of $C^{\textup{fb}}_{\textup{NOST}}$ with respect to the initial distribution $P_{Y_0}$ is essentially a result pertaining to mixing of Markov chains. In the proof, we show that the possibilities for $P_{Y_n}$ concentrate around a closed subset of the simplex as $n$ grows, regardless of the initial state $P_{Y_0}$ of the simplex.

As a corollary of Lemma~\ref{lem:NOST-indep-init-dist}, we show
that
% $P_{Y_0}$
$P_0$
can be
% included
added
to the maximizations in~\eqref{eq:NOST-indep-init-dist} as a variable.
% in characterizing the capacity.
% It was shown in~\cite{NOST} that NOST capacity is independent of $P_0$.
% We leverage this fact, and formulate NOST capacity as a maximization over distributions including $P_0$.
% This leads to a new limiting characterization of NOST capacity, given below.

\begin{proposition}[NOST maximization over initial distribution] \label{prop:NOST-max-over-init}
    NOST capacity can be equivalently expressed by maximizing over the initial distribution
    % $P_{Y_0}$,
    $P_0$,
    i.e.,
    \begin{align} \label{eq:NOST-max-over-init}
        C^{\textup{fb}}_{\textup{NOST}}
        \ = \
        \lim_{n \rightarrow \infty} \frac{1}{n} \, & \max_{ P_0 \, , \, \left\{ P_{ X_t \mid Y_{t-1} } \right\}_{t=1}^n } \, \sum_{t=1}^n I( X_t ; Y_t \mid Y_{t-1} ) .
    \end{align}
\end{proposition}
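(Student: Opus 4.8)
The plan is to deduce this directly from Lemma~\ref{lem:NOST-indep-init-dist}; the only real content is to verify that the outer maximization over the seed distribution is attained, after which the statement is immediate. Write, for brevity,
\[
  g_n(Q) \ \coloneq\ \frac{1}{n}\,\max_{\{P_{X_t\mid Y_{t-1}}\}_{t=1}^n}\ \sum_{t=1}^n I(X_t;Y_t\mid Y_{t-1})
  \qquad \textup{s.t. } Y_0\sim Q ,
\]
where $Q$ ranges over the probability simplex $\Delta(\mc Y)$, so that the right-hand side of~\eqref{eq:NOST-max-over-init} is $\lim_{n\to\infty}\max_{Q\in\Delta(\mc Y)} g_n(Q)$. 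By Proposition~\ref{prop:NOST-formulation} together with Lemma~\ref{lem:NOST-indep-init-dist}, for \emph{any} sequence $\{Q_n\}_{n\ge1}$ of initial distributions one has $g_n(Q_n)\to C^{\textup{fb}}_{\textup{NOST}}$; in particular $\max_{Q} g_n(Q)\ge g_n(P_0)\to C^{\textup{fb}}_{\textup{NOST}}$, so $\liminf_n \max_Q g_n(Q)\ge C^{\textup{fb}}_{\textup{NOST}}$. It therefore suffices to produce, for each $n$, a maximizer $\widetilde P_0^{(n)}$ of $Q\mapsto g_n(Q)$ and invoke the lemma for the sequence $\{\widetilde P_0^{(n)}\}$.

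To obtain such a maximizer I would argue by continuity and compactness. Since $\mc Y$ is finite, $\Delta(\mc Y)$ is a compact subset of a finite-dimensional space, and likewise the feasible set of input kernels $\{P_{X_t\mid Y_{t-1}}\}_{t=1}^n$ is compact. The induced joint law $P_{X^n,Y^n}$ in~\eqref{eq:joint-induced} is a continuous (in fact polynomial) function of the seed $Q$ and the kernels, and mutual information is continuous on the simplex of joint laws; hence the objective $\frac1n\sum_t I(X_t;Y_t\mid Y_{t-1})$ is jointly continuous in $(Q,\{P_{X_t\mid Y_{t-1}}\})$. A maximum of a jointly continuous function over one compact factor is a continuous function of the remaining variable, so $Q\mapsto g_n(Q)$ is continuous on the compact set $\Delta(\mc Y)$ and attains its maximum at some $\widetilde P_0^{(n)}$. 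Then $\max_{Q} g_n(Q)=g_n(\widetilde P_0^{(n)})$, and applying Lemma~\ref{lem:NOST-indep-init-dist} with $\widetilde P_{Y_0}^{(n)}=\widetilde P_0^{(n)}$ gives $\max_Q g_n(Q)\to C^{\textup{fb}}_{\textup{NOST}}$, which is precisely~\eqref{eq:NOST-max-over-init}.

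The main (essentially the only) obstacle is this attainment of the outer maximum, which is why I would spell out the continuity argument rather than leave it implicit; note in particular that Lemma~\ref{lem:NOST-indep-init-dist} is stated for an arbitrary sequence of initial distributions, so once $\widetilde P_0^{(n)}$ is a genuine maximizer the conclusion follows with no further work. If one wished to bypass attainment entirely, one could instead pick $\varepsilon_n$-near-maximizers $\widetilde P_0^{(n)}$ with $g_n(\widetilde P_0^{(n)})\ge \max_Q g_n(Q)-\tfrac1n$ and run the identical argument, combining $\limsup_n\max_Q g_n(Q)\le \lim_n g_n(\widetilde P_0^{(n)})=C^{\textup{fb}}_{\textup{NOST}}$ with the $\liminf$ bound from the first paragraph; but compactness of $\Delta(\mc Y)$ makes the direct route cleaner.
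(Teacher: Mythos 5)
Your proposal is correct and follows essentially the same route as the paper: pick, for each $n$, the optimal initial distribution and feed that sequence into Lemma~\ref{lem:NOST-indep-init-dist}. The only difference is that you explicitly justify attainment of the outer maximum via compactness and continuity (the paper simply takes the argmax without comment), which is a reasonable addition but not a change of approach.
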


% \begin{proof}
%     See Appendix~\ref{sec:proof-prop:NOST-max-over-PY0}.
% \end{proof}

\begin{proof}
    For each $n\ge1$, for the maximization on the right-hand side of~\eqref{eq:NOST-max-over-init}, denote the maximizing variables by
    \begin{equation} \label{eq:NOST-def-hat}
        \left( \widehat P^{(n)}_{Y_0} \, , \, \left\{ \widehat P^{(n)}_{ X_t \mid Y_{t-1} } \right\}_{t=1}^n \, \right)
        \ \coloneq \
        \argmax{ P_0 \, , \, \left\{ P_{ X_t \mid Y_{t-1} } \right\}_{t=1}^n } \, \sum_{t=1}^n I( X_t ; Y_t \mid Y_{t-1} ) .
    \end{equation}
    We can write the maximization on the RHS of~\eqref{eq:NOST-max-over-init} equivalently as
    \begin{equation} \label{eq:NOST-equiv-hat}
        \begin{aligned}
            \max_{ P_0 \, , \, \left\{ P_{ X_t \mid Y_{t-1} } \right\}_{t=1}^n } \, \sum_{t=1}^n I( X_t ; Y_t \mid Y_{t-1} )
            \ = \
            & \max_{ \left\{ P_{ X_t \mid Y_{t-1} } \right\}_{t=1}^n } \sum_{t=1}^n I( X_t ; Y_t \mid Y_{t-1} ) \\
            & \hspace{0.8cm} \textup{s.t.} \quad Y_0 \sim \widehat P^{(n)}_{Y_0} ,
        \end{aligned}
    \end{equation}
    which follows by the definition of $\widehat P^{(n)}_{Y_0}$ in~\eqref{eq:NOST-def-hat}. Notice that by~\eqref{eq:NOST-indep-init-dist}, the normalized limit of the RHS of~\eqref{eq:NOST-equiv-hat} equals the capacity $C^{\textup{fb}}_{\textup{NOST}}$.
    % Therefore, by \eqref{eq:NOST-indep-init-dist} and \eqref{eq:NOST-equiv-hat}, we obtain
    % \begin{equation}
    %     C^{\textup{fb}}_{\textup{NOST}}
    %     \ = \
    %     \lim_{n\to\infty} \frac{1}{n} \max_{ P_{Y_0} \, , \, \left\{ P_{ X_t \mid Y_{t-1} } \right\}_{t=1}^n } \, \sum_{t=1}^n I( X_t ; Y_t \mid Y_{t-1} ) ,
    % \end{equation}
    % as desired. This concludes the proof of Proposition~\ref{prop:NOST-max-over-PY0}.
   ~\eqref{eq:NOST-indep-init-dist} and~\eqref{eq:NOST-equiv-hat} imply~\eqref{eq:NOST-max-over-init}, as desired.
\end{proof}

% As a corollary of Proposition~\ref{prop:NOST-max-over-PY0}, we obtain the desired characterization of NOST channel capacity where we maximize over the initial distribution $P_{Y_0}$ as well, given below.

The new characterization of NOST capacity in~\eqref{eq:NOST-max-over-init}, where the initial distribution $P_0$ is a variable, makes the analysis of the capacity more tractable than assuming a fixed initial distribution,
% $P_{Y_0} = P_0$,
as the former includes more freedom with an additional variable. Intuitively, the characterization in~\eqref{eq:NOST-max-over-init} corresponds to a communication setting where the initial condition is \emph{user-configurable}, since effectively the initial distribution $P_0$ is an additional input.

% Next, based on the characterization of NOST capacity in~\eqref{eq:NOST-max-over-init}, we prove that it equals the single-letter expression in \ref{eq:NOST-C-single-let}.

\subsection{Capacity single-letterization}

We show that the expression for NOST capacity in~\eqref{eq:NOST-max-over-init}
% \begin{equation} \label{eq:NOST-max-over-init-2}
%     C^{\textup{fb}}_{\textup{NOST}}
%     \ = \
%     \lim_{n\to\infty} \frac{1}{n} \max_{ P_{Y_0} \, , \, \left\{ P_{ X_t \mid Y_{t-1} } \right\}_{t=1}^n } \, \sum_{t=1}^n I( X_t ; Y_t \mid Y_{t-1} ) ,
% \end{equation}
admits the single-letter characterization
% \begin{equation} \label{eq:NOST-C-single-let-2}
%     C^{\textup{fb}}_{\textup{NOST}} \ = \ \max_{ P_{X,Y'} } I( X ; Y \mid Y' ) \quad \textup{s.t. } P_Y = P_{Y'} .
% \end{equation}
in~\eqref{eq:NOST-C-single-let}.
We accomplish this by proving the achievability and the converse. That is, we show that the single-letter expression in~\eqref{eq:NOST-C-single-let} lower bounds and also upper-bounds the expression on the right-hand side of~\eqref{eq:NOST-max-over-init}.

% First, t
The proposition below establishes the achievability.

\begin{proposition}[NOST achievability] \label{prop:NOST-achievability}
    The single-letter expression in~\eqref{eq:NOST-C-single-let} lower-bounds NOST capacity, i.e., we have
    \begin{equation} \label{eq:NOST-achievability}
        \lim_{n \to \infty} \max_{\substack{ P_0 \\ \left\{P_{X_t \mid Y_{t-1}}\right\}_{t=1}^n }} \frac{1}{n} \sum_{t=1}^n I(X_t ; Y_t \mid Y_{t-1})
        \quad \ge \quad
        \max_{ P_{X,Y'} } I( X ; Y \mid Y' ) \quad \textup{s.t. } P_Y = P_{Y'} .
    \end{equation}
\end{proposition}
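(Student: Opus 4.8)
The plan is to fix an optimal single-letter distribution $P_{X,Y'}^\star$ for the right-hand side of~\eqref{eq:NOST-achievability}, satisfying the stationarity constraint $P_Y = P_{Y'}$, and to use it to construct, for every $n$, a feasible choice of the left-hand side variables $\bigl(P_0, \{P_{X_t\mid Y_{t-1}}\}_{t=1}^n\bigr)$ whose normalized objective equals exactly the single-letter value $I(X;Y\mid Y')$ evaluated at $P_{X,Y'}^\star$. Taking $n\to\infty$ then yields the claimed inequality, since the left-hand side is a supremum over all feasible choices.

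The construction is as follows. Write $P_{X,Y'}^\star = P_{Y'}^\star\, P_{X\mid Y'}^\star$. First I would set the initial distribution to the ``seed'' $P_0 \coloneq P_{Y'}^\star$, which is legitimate precisely because Proposition~\ref{prop:NOST-max-over-init} has promoted $P_0$ to an optimization variable. Then I would set $P_{X_t\mid Y_{t-1}} \coloneq P_{X\mid Y'}^\star$ for every $t = 1,\ldots,n$, i.e., use the same single-letter conditional kernel at every step. The key claim, proved by induction on $t$, is that under this choice the marginal of each output satisfies $P_{Y_t} = P_{Y'}^\star$ for all $t\ge 0$. The base case $t=0$ holds by the choice of $P_0$. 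For the inductive step, if $P_{Y_{t-1}} = P_{Y'}^\star$, then the joint law of $(X_t, Y_{t-1}, Y_t)$ is
\[
    P_{Y_{t-1}}(y')\, P_{X_t\mid Y_{t-1}}(x\mid y')\, Q_{Y\mid X,Y'}(y\mid x,y')
    \;=\; P_{Y'}^\star(y')\, P_{X\mid Y'}^\star(x\mid y')\, Q_{Y\mid X,Y'}(y\mid x,y'),
\]
which is exactly the single-letter joint distribution of $(X,Y',Y)$ induced by $P_{X,Y'}^\star$ and the channel $Q$. Marginalizing over $(x,y')$ gives $P_{Y_t} = P_Y^\star = P_{Y'}^\star$, where the last equality is the stationarity constraint satisfied by $P_{X,Y'}^\star$. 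This closes the induction, and moreover shows that for every $t$ the triple $(X_t, Y_{t-1}, Y_t)$ has the same joint law as $(X, Y', Y)$, so that $I(X_t; Y_t \mid Y_{t-1}) = I(X; Y \mid Y')$ for all $t$. Hence $\frac{1}{n}\sum_{t=1}^n I(X_t;Y_t\mid Y_{t-1}) = I(X;Y\mid Y')$ for every $n$, and this particular feasible point certifies that the left-hand side of~\eqref{eq:NOST-achievability} is at least $I(X;Y\mid Y')$ for each $n$, hence in the limit.

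I do not expect a serious obstacle here; the argument is essentially the straightforward direction alluded to in the introduction (``replicate any feasible choice $P_{X,Y'}$ from the right-hand side on the left-hand side through an inductive construction''). The only point requiring a little care is the bookkeeping: one must verify that the input distributions $P_{X_t\mid Y_{t-1}}$ are indeed valid variables in the left-hand optimization (they are, being conditionals on the immediately preceding output, which is a special case of the $P_{X_t\mid X^{t-1},Y^{t-1}}$ allowed after the simplification in Proposition~\ref{prop:NOST-formulation}), and that the stationarity constraint $P_Y = P_{Y'}$ is exactly what makes the induction self-sustaining — without it the output marginal would drift and the per-step mutual informations would no longer all equal the single-letter value. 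The full details can be deferred to an appendix.
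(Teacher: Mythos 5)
Your proposal is correct and matches the paper's own argument essentially step for step: both set $P_0$ equal to the single-letter marginal $P_{Y'}$, use the same conditional kernel $P_{X\mid Y'}$ at every step, and run the same induction (hinging on the constraint $P_Y = P_{Y'}$) to show each $(X_t, Y_{t-1}, Y_t)$ has the single-letter joint law, so every per-step mutual information equals $I(X;Y\mid Y')$. The only cosmetic difference is that the paper replicates an arbitrary feasible $P'_{X,Y'}$ rather than a maximizer, which changes nothing of substance.
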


\begin{proofsketch}
    % The idea is to,
    For any value of the objective function on the right-hand side, we construct a set of variables for the left-hand side with objective value at least as large.
    More explicitly,
    for any fixed feasible $P'_{X,Y'}$ for the right-hand side,
    having
    % $P_{Y_0}$
    $P_0$
    as a variable on the left-hand side
    % facilitates flexibility and
    enables us to set $P_{X_t,Y_{t-1}} = P'_{X,Y'}$ for all $t$.
    This results in the left-hand side having the objective value of the right-hand side;
    thus, the lower bound follows.
    % Having the initial distribution $P_{Y_0}$ as a variable in the maximization on the left-hand side provides flexibility,
    % through which we observe that we are able to maintain all $P_{X_t,Y_{t-1}}$ as equal to any desired variable $P_{X,Y'}$ for the right-hand side.
    % Maintaining this results in the left-hand side objective being equal to the right-hand side objective; thus, the lower bound follows.
    The full proof can be found in Appendix~\ref{sec:proof-prop:NOST-achievability}.
\end{proofsketch}

Next, the proposition below establishes the converse.

\begin{proposition}[NOST converse] \label{prop:NOST-converse}
    The single-letter expression in~\eqref{eq:NOST-C-single-let} upper-bounds NOST capacity, i.e., we have
    \begin{equation} \label{eq:NOST-converse}
        \lim_{n \to \infty} \max_{\substack{ P_0 \\ \left\{P_{X_t \mid Y_{t-1}}\right\}_{t=1}^n }} \frac{1}{n} \sum_{t=1}^n I(X_t ; Y_t \mid Y_{t-1})
        \quad \le \quad
        \max_{ P_{X,Y'} } I( X ; Y \mid Y' ) \quad \textup{s.t. } P_Y = P_{Y'} .
    \end{equation}
\end{proposition}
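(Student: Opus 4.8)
The plan is to mimic the symmetry-reduction argument sketched for the DMC: for each $n$ I would average the $n$ joint distributions $P_{X_t,Y_{t-1}}$ over $t$, push the average inside the objective by concavity, and then observe that the resulting ``single-letter'' distribution is asymptotically feasible for the stationarity constraint $P_Y=P_{Y'}$. Concretely, write $\Phi(P_{X,Y'}) := I(X;Y\mid Y')$ for the conditional mutual information obtained by feeding $(X,Y')\sim P_{X,Y'}$ through the fixed channel constants $Q_{Y\mid X,Y'}$. I would first record that $\Phi$ is concave and continuous on the simplex $\Delta(\mc X\times\mc Y)$: since $Q_{Y\mid X,Y'}$ is fixed, $H(Y\mid X,Y')$ is linear in $P_{X,Y'}$, while $H(Y\mid Y')=\sum_{y'}P_{Y'}(y')\,H\big(\sum_x P_{X\mid Y'}(x\mid y')\,Q_{Y\mid X,Y'}(\cdot\mid x,y')\big)$ is a sum over $y'$ of perspectives of the concave entropy function precomposed with linear maps in $P_{X,Y'}$, hence concave; this is the conditional analogue of the classical fact that $I(X;Y)$ is concave in $P_X$ for a fixed channel~\cite[Thm.~2.7.4]{CoverThomas}.

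Next I would fix $n$, let $P^{(n)}_0$ and $\{P^{(n)}_{X_t\mid Y_{t-1}}\}_{t=1}^n$ attain the left-hand maximum in~\eqref{eq:NOST-converse}, and let $P^{(n)}_{X_t,Y_{t-1}}$ denote the induced marginals; by~\eqref{eq:NOST-channel-const-2} each summand equals $\Phi(P^{(n)}_{X_t,Y_{t-1}})$. Setting $\bar P^{(n)}_{X,Y'}:=\tfrac1n\sum_{t=1}^n P^{(n)}_{X_t,Y_{t-1}}$, concavity of $\Phi$ and Jensen's inequality give $\tfrac1n\sum_t I(X_t;Y_t\mid Y_{t-1}) \le \Phi(\bar P^{(n)}_{X,Y'})$. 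The key observation is then that $\bar P^{(n)}$ is \emph{almost} stationary: pushing $\bar P^{(n)}_{X,Y'}$ through $Q_{Y\mid X,Y'}$ yields $\bar P^{(n)}_Y=\tfrac1n\sum_{t=1}^n P^{(n)}_{Y_t}$, whereas its $Y'$-marginal is $\tfrac1n\sum_{t=1}^n P^{(n)}_{Y_{t-1}}=\tfrac1n\sum_{s=0}^{n-1}P^{(n)}_{Y_s}$, so the two differ only by the telescoping term $\tfrac1n(P^{(n)}_{Y_n}-P^{(n)}_{Y_0})$, which tends to $0$ in total variation as $n\to\infty$.

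To finish I would take a subsequence along which $\tfrac1n\sum_t I(X_t;Y_t\mid Y_{t-1})$ tends to its $\limsup$ and, by compactness of the simplex, a further subsequence along which $\bar P^{(n_k)}_{X,Y'}$ converges to some $P^\star_{X,Y'}$. The telescoping estimate forces $P^\star_Y=P^\star_{Y'}$, so $P^\star$ is feasible for the right-hand side, and continuity of $\Phi$ yields $\limsup_n \tfrac1n\sum_t I(X_t;Y_t\mid Y_{t-1}) \le \lim_k \Phi(\bar P^{(n_k)}_{X,Y'}) = \Phi(P^\star_{X,Y'}) \le \max_{P_{X,Y'}:\,P_Y=P_{Y'}} I(X;Y\mid Y')$; since the left-hand limit in~\eqref{eq:NOST-converse} exists (it equals $C^{\textup{fb}}_{\textup{NOST}}$ by Proposition~\ref{prop:NOST-max-over-init}), this is exactly the claimed inequality. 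I expect the main obstacle to be precisely that $\bar P^{(n)}$ satisfies the stationarity constraint only in the limit rather than exactly --- this is the ``approximate symmetry'' phenomenon, and it is what forces the compactness/continuity step above in place of a one-line appeal to exact symmetry reduction as in~\cite[Ex.~4.4]{Boyd2004convex}; a secondary point requiring care is the joint concavity of $\Phi$ in the full variable $P_{X,Y'}$ (rather than only in $P_{X\mid Y'}$ with $P_{Y'}$ held fixed), which is genuinely needed here because the marginals $P^{(n)}_{Y_{t-1}}$ vary with $t$.
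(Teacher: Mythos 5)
Your proposal is correct and follows essentially the same route as the paper's proof: average the joint distributions $P_{X_t,Y_{t-1}}$ over $t$, apply concavity of $I(X;Y\mid Y')$ in $P_{X,Y'}$ via Jensen, and use the telescoping bound $\tfrac1n\lvert P_{Y_0}-P_{Y_n}\rvert$ to show the averaged distribution is only asymptotically stationary. The sole cosmetic difference is the final limiting step---the paper phrases it as maximization over a decreasing family of compact relaxed constraint sets $D_{1/n}$ shrinking to $D_0=\{P_{X,Y'}: P_Y=P_{Y'}\}$, whereas you extract convergent subsequences and invoke continuity of the objective---and these two devices are interchangeable.
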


\begin{proofsketch}
    The proof makes use of the concavity of mutual information together with Jensen's inequality.
    Fix an $n>0$,
    take any feasible $P'_0 , \left\{P'_{X_t \mid Y_{t-1}}\right\}_{t=1}^n$ for the left-hand side, and consider the resulting joint distributions $\left\{P'_{X_t , Y_{t-1}}\right\}_{t=1}^n$ through the channel.
    Define the averaged-joint distribution $\widetilde P^{(n)}_{X , Y'}$ through
    \begin{equation}
        \widetilde P^{(n)}_{X , Y'} (x , y') \ \coloneq \ \frac{1}{n} \sum_{t=1}^n P'_{X_t,Y_{t-1}} (x , y') \quad \forall x,y' \in \mc X \times \mc Y .
    \end{equation}
    We show that $\widetilde P^{(n)}_{X , Y'}$ satisfies the constraint $P_Y = P_{Y'}$ \emph{in the limit} as $n \to \infty$.
    Hence, together with Jensen's inequality, we obtain the upper bound by taking $n\to\infty$.
    % More specifically, the induced $P_{X_t , Y_{t-1}}$ on the left-hand are used to obtain an \emph{averaged joint}, which in the limit $n \to \infty$ satisfy the constraint $P_Y = P_{Y'}$ for the right-hand side.
    % This fact, together with Jensen's inequality, implies the upper bound.
    The main idea of this proof appeared in~\cite[Lemma~4]{NOST}.
    The full proof can be found in Appendix~\ref{sec:proof-prop:NOST-converse}.
\end{proofsketch}

Propositions~\ref{prop:NOST-achievability} and \ref{prop:NOST-converse} together imply
Theorem~\ref{thm:NOST}. As a result, the feedback capacity of NOST channel admits the single-letter characterization
\eqref{thm:NOST},
% \begin{equation} \label{eq:NOST-single-letter}
%     C^{\textup{fb}}_{\textup{NOST}} \ = \ \max_{ P_{X,Y'} } I( X ; Y \mid Y' ) \quad \textup{s.t. } P_Y = P_{Y'} .
% \end{equation}
which matches~\cite[Thm.~1]{NOST}.

\section{General Framework} \label{sec:poset}

We define \emph{poset-causal channels} (Definition~\ref{def:poset-causal}) whose inputs and outputs are indexed by a poset,
and define the notion of feedback capacity for poset-causal channels.
% nodes of a DAG
% and whose channel constants satisfy a Markov property with respect to the DAG structure.
% We prove a channel coding theorem that characterizes the capacity of poset-causal channels in terms of a limit of maximizations of mutual informations.
% When
If
a poset-causal channel satisfies a Markov property,
% with respect to the DAG induced by the underlying poset,
we call such channel a \emph{poset-Markov channel} (Definition~\ref{def:poset-markov}).
We establish an upper bound on the feedback capacity of poset-Markov channels
that is the limit of a growing-size sequence of convex problems.
Next, we consider a subclass of poset-Markov channels, \emph{approximately symmetric channels} (Definition~\ref{def:approx-sym}).
Our methodology leads to a single-letter upper bound on the feedback capacity of approximately symmetric channels in terms of the local behavior of each node given its neighbors, given in Theorem~\ref{thm:poset-single-letter}
below.
% We also provide sufficient conditions for when this upper bound is tight.
Throughout, we instantiate our results on the previously-studied NOST channel~\cite{chen2005capacity, NOST}, and its novel two-dimensional extension, the 2D NOST channel.

\subsection{Poset-Causal and Poset-Markov Channels} % and Their Capacities}

We define poset-causal channels by associating the nodes of a growing poset with channel input and output pairs,
and its edges with the transmission order.

\begin{definition}[Poset-causal channel]\label{def:poset-causal}
% Let $\ms V$ be a countable set of nodes, equipped with a set of directed edges $\ms E \subset \ms V \times \ms V$ so that $(\ms V , \ms E)$ is a connected DAG. $(\ms V , \ms E)$ induces a partial order\footnote{
%     The primary defining object of a poset-causal channel is the DAG $(\ms V,\ms E)$ rather than the induced poset $(\ms V,\prec)$.
%     As a result, distinct poset-causal channels can have the same induced poset even though they are different channels.
% }
% $\prec$ on $\ms V$, where $v \prec w$ if there exists a directed path from $v$ to $w$. Let $\pa v$ and $\ch v$ denote the set of parents and the set of children of $v$ in $(\ms V , \ms E)$, respectively.

Let $(\ms V , \prec)$ be a countable poset.
A \emph{poset-causal channel} over $(\ms V, \ms E)$ is a collection of channel constants $\{P_{Y_v \mid X^{\preceq v}, Y^{\prec v}}\}_v$ for finite input and output alphabets $\mc X$ and $\mc Y$,
where $(\ms V, \ms E)$ has the following properties:
\begin{enumerate}[font=\textbf, align=left]
    \item[(Increasing sequence)]
    The set of nodes is an increasing union $\ms V = \bigcup_{n=1}^\infty \mc V_n$ of finite sets, so that $\mc V_n \subset V_{n+1}$
    and $\{w \mid w \prec v\} \subset \mc V_n$ for all $v \in \mc V_n$, and for all $n$.
    
    \item[(Initial and communication nodes)]
    The countable set of nodes $\ms V = \ms C \cup \ms I$ is the union of \emph{communication nodes}~$\ms C$ and \emph{initial nodes} $\ms I$. Each instance satisfies $\mc V_n = \mc C_n \cup \mc I_n$ where $\mc C_n = \mc V_n\cap \ms C$ and $\mc I_n = \mc V_n\cap \ms I$.
    
    \item[(Edge orientation)]
    There can be edges from the initial nodes in $\ms I$ to the communication nodes in $\ms C$ but not vice versa, i.e., $( \ms C \times \ms I ) \cap \ms E = \emptyset$.
    
    % \item[(Markovianity)]
    % The channel constants satisfy $P_{ Y_v \mid X^{\preceq v} , Y^{\prec v} } = P_{ Y_v \mid X_v , Y_{\pa v} }$ for all $v \in \ms C$.
\end{enumerate}
\end{definition}

In a poset-causal channel,
the communication nodes in $\ms C$ and the initial nodes in $\ms I$ have different characteristics pertaining to their roles in the communication.
Each initial node $v \in \ms I$ has an associated output $Y_v$, whereas each communication node $v \in \ms C$ has associated input and output $X_v$ and $Y_v$.
% The edges in the graph correspond to causality relationships among the outputs.
The partial order in the poset corresponds to causality relationships among the outputs.
The output at a given communication node $v \in \ms C$ depends only on the input at that node, $X_v$,
and on the collection of outputs
% at its parent nodes, as asserted by Markovianity above.
of the preceding nodes,
% which we denote by
$Y^{\prec v}$.
Each communication node $v \in \ms C$ has feedback from the collection of outputs $Y^{\prec v}$.
% $Y_w$ of the nodes $w \prec v$.
% which we denote by $Y^{\prec v}$.
We assume that the initial nodes $v \in \ms I$ have a fixed joint distribution $P_0$.
% $P_{Y_{\ms I}} = P_0$.
This characterizes the \emph{initial conditions} of the setting.
The results in the remainder of the paper hold for any fixed
% choice of
$P_0$.
% As an illustrative example,
% For the NOST channel,
% % the initial distribution is $P_{Y_0}$, and
% % it is shown in~\cite{NOST} that the
% % its
% the
% feedback capacity is independent of
% % $P_{Y_0}$~\cite[Thm.~1]{NOST},
% the distribution of the initial node $Y_0$~\cite[Thm.~1]{NOST},
% which we proved in Lemma~\ref{lem:NOST-indep-init-dist}.
% For poset-causal channels, this is in general not true, i.e., the feedback capacity \emph{does} depend on the initial distribution $P_0$. Hence, the feedback capacity should be viewed as a function of $P_0$.

The poset $(\ms V , \prec)$ allows us to introduce a sequence of transmissions over the nodes $v \in \ms C$ as follows.
Extend each partial (sub-)order $(\mc C_n \setminus \mc C_{n-1} , \prec)$ to a total order,
which always exists by the order-extension principle~\cite{Szpilrajn1930},
and denote this total order by $\tau_n \colon \mc C_n \setminus \mc C_{n-1} \to \left| \mc C_n \setminus \mc C_{n-1} \right|$.
Next, concatenate the total orders $\{\tau_n\}_{n\ge1}$ to obtain a full total order $\tau \colon \ms C \to \NN$.
This ensures that $\mc C_n$ precedes $\mc C_{n+1}$ in the full total order $\tau$ for all $n$.
In particular, we have
\begin{equation} \label{eq:tau-subseq}
    \left\{ v \in \ms C \ \big| \ \tau(v) \le |\mc C_n| \right\} = \mc C_n \quad \forall n \ge 1.
\end{equation}
The full total order $\tau$ is precisely the order of transmissions.
The transmissions are governed by the channel constants
$P_{ Y_v \mid X^{\preceq v} , Y^{\prec v} }, v \in \ms C$.

The feedback channel setting and the corresponding feedback capacity are as follows.
The above \mbox{poset-causal} channel is a point-to-point communication channel ${(X_v)_{v \in \ms C} \longrightarrow (Y_v)_{v \in \ms C}}$.
To transmit a message over $n$ \emph{transmission stages}, the encoder maps it to the sequence of channel inputs $\{X_v\}_{v\in \mc C_n}$, and the decoder interprets the corresponding outputs $\{Y_v\}_{v\in \mc C_n}$ in an attempt to reconstruct the message, in the following manner.
Let $M$ denote the message size.
For a message $W \in [M] \coloneq \{1,\ldots,M\}$,
an encoder applies a sequence of mappings $(f_v)_{v \in \mc C_n}$ as inputs to nodes in $\mc C_n$, utilizing at stage $t$ the nodes in $\mc C_{t} \setminus \mc C_{t-1}$.
More specifically, at a given node $v \in \mc C_{t} \setminus \mc C_{t-1}$, the encoder applies
function $f_v \colon [M] \times \mc Y^{\prec v} \to \mc X$ that maps the original message $W \in [M]$ and the previous channel outputs $Y^{\prec v}$ to the input $X_v$ at node $v$.
The decoding happens after the $n$ stages when the decoder applies the decoding function $g \colon \mc Y^{\mc C_n} \to [M]$ to the outputs of the nodes in $\mc C_n$, outputting $\hat W \in [M]$, the estimate of the transmitted message $W$.
For an $\epsilon>0$, the pair $\left( \{f_v\}_{v \in \mc C_n} \, , \, g \right)$ is called an \emph{$(n,M,\epsilon)$--code} if the error probability satisfies $\mathbb P \left[ W \neq \hat W \right] \leq \epsilon$.
Denote the maximal achievable message size compatible with error probability $\epsilon$ after $n$ stages as
\begin{equation}
    M^*(n, \epsilon) \coloneq \max \{M \mid \exists \, (n, M, \epsilon) \text{--code}\} .
\end{equation}
The feedback capacity $C^{\textup{fb}}$ of the poset-causal channel is
\begin{equation} \label{eq:poset-c-def}
    C^{\textup{fb}} % (P_0)
    \coloneq \lim_{\epsilon \to 0} \, \liminf_{n \to \infty} \, \frac{1}{|\mc C_n|} \log M^*(n, \epsilon) .
\end{equation}

In the example below, we demonstrate that any classical discrete-time channel can be instantiated as a poset-causal channel.

\begin{example}[Discrete-time channels]
    Take an arbitrary discrete-time channel, and assume that it is characterized by channel constants $\{ P_{ Y_n | X^n, Y^{n-1} } \}_{n\ge1}$.
    The set of nodes is the nonnegative integers $\ms V = \ZZ_{\ge0}$.
    % To capture the full generality of channel constants,
    % define the set of edges $\ms E$ to include all past-to-future edges, i.e.,
    % \begin{equation} \label{eq:time-edges}
    %     \ms E = \left\{ (n,m) \ \big| \ n<m \, , \ n,m\in\ZZ_{\ge0} \right\} .
    % \end{equation}
    Define the partial order as the usual integer order, i.e., ${m \prec n \iff m < n}$.
    Let ${\mc I = \{0\}}$, $\mc C = \{n\}_{n\ge1}$, and $V_n = \{0,1,\ldots,n\}$.
    The channel constants
    $\left\{P_{ Y_v \mid X^{\preceq v} , Y^{\prec v} }\right\}_{v \in \ms C}$
    become precisely $\{ P_{ Y_n | X^n, Y^{n-1} } \}_{n\ge1}$.
\end{example}
\noindent As a result, poset-causal channels subsume classical discrete-time channels.

The next definition pertains to poset-causal channels satisfying a Markov property.
Unlike poset-causal channels,
which are defined through a poset only,
poset-Markov channels are defined through a poset as well as a DAG. %, and the two are related in a particular way as follows.
More specifically,
the partial order is induced by the DAG through the directed paths in the DAG, as follows.
% The DAG induces the partial order through directed paths, as explained in the definition below.

\begin{definition}[Poset-Markov channel] \label{def:poset-markov}
    Let $(\ms V , \ms E)$ be a countable connected DAG.
    The DAG $(\ms V , \ms E)$ induces a partial order
    $\prec$ on $\ms V$ where $v \prec w$ if there exists a directed path from $v$ to $w$.
    Let $\pa v$ and $\ch v$ denote the set of parents and the set of children of $v$ in $(\ms V , \ms E)$, respectively.
    The poset-causal channel associated with the DAG $(\ms V , \ms E)$ and the poset $(\ms V , \prec)$ is a \emph{poset-Markov channel} if the channel constants $\{P_{Y_v \mid X^{\preceq v}, Y^{\prec v}}\}_v$
    % The channel constants
    satisfy $P_{ Y_v \mid X^{\preceq v} , Y^{\prec v} } = P_{ Y_v \mid X_v , Y_{\pa v} }$ for all $v \in \ms C$.
    In other words,
    for each communication node $v \in \ms V$,
    % conditioned on the input $X_v$ and the outputs $Y_{\pa v}$,
    % the output $Y_v$ is independent of the rest of the inputs and outputs associated with the nodes that precede $v$.
    the output $Y_v$ depends only on the input $X_v$ and the outputs $Y_{\pa v}$.
\end{definition}

An infinite DAG is depicted in Figure~\ref{fig:dag} with initial and communications nodes indicated with different symbols.

\begin{figure}[ht]
\centering
\begin{tikzpicture}[>=stealth',scale=1,
    init/.style={diamond,draw,fill=gray!20,minimum size=20pt,inner sep=2pt},
    comm/.style={circle,draw,minimum size=20pt,inner sep=2pt}]

%–––– nodes on a line (0 … 8) ––––––––––––––––––––––––––––––
\foreach \i in {0,...,7}{
   \pgfmathsetmacro{\x}{1.5*\i}
   \ifnum\i<3
      \node[init] (n\i) at (\x,0) {};     % initial nodes 0,1,2
   \else
      \node[comm] (n\i) at (\x,0) {};     % communication nodes 3–7
   \fi
}

%–––– small horizontal arrows (local dependencies) –––––––––
\draw[->] (n3) -- (n4);
\draw[->] (n4) -- (n5);

%–––– long curved arrows above the line ––––––––––––––––––––
\draw[->,bend left=45]  (n1) to (n4);
\draw[->,bend left=45]  (n2) to (n6);
\draw[->,bend left=40]  (n3) to (n5);
\draw[->,bend left=40]  (n5) to (n7);

%–––– long curved arrows below the line ––––––––––––––––––––
\draw[->,bend right=45] (n0) to (n3);   % outermost bottom arc
\draw[->,bend right=40] (n4) to (n6);   % inner bottom arc

%–––– continuation to the right ––––––––––––––––––––––––––––
\node at ($(n7)+(1.0,0)$) {$\cdots$};

\end{tikzpicture}
\caption{Depiction of an infinite DAG. $\diamond$ Initial nodes. $\circ$ Communication nodes.}
\label{fig:dag}
\end{figure}
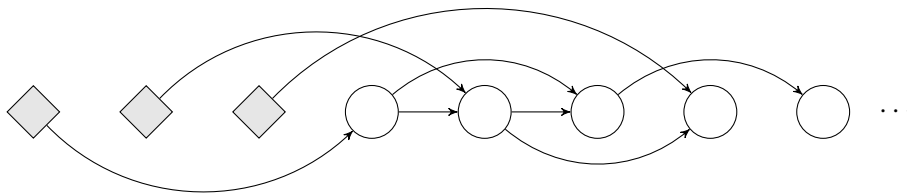

% Below, we illustrate poset-Markov channels with examples.
Below, we provide examples (and non-examples) of
% poset-causal and
poset-Markov channels.

\begin{example}[NOST channel]\label{ex:NOST-1}
    The NOST channel~\cite{NOST}
    % is poset-causal,
    % as well as
    is a
    \mbox{poset-Markov}
    channel.
    Here the set of nodes $\ms V=\ZZ_{\geq0}$ consists of nonnegative integers, with edges
    $\ms E=\{(n,n+1) \mid n\in\mathbb{Z}_{\geq0}\}$,
    % \begin{equation} \label{eq:NOST-edges}
    %     \ms E=\{(n,n+1) \mid n\in\mathbb{Z}_{\geq0}\} ,
    % \end{equation}
    so the partial order on $\ms V$ is the usual integer order.
    The set of initial nodes and communication nodes are $\ms I = \{0\}$ and ${\ms C = \{n\}_{n\ge1}}$, respectively.
    Each finite set $\mc V_n$ in the increasing sequence of sets is given by ${\mc V_n = \{0,1,\ldots,n\}}$.
    The channel constants satisfy Markovianity in the sense that
    ${P_{Y_n|X^n,Y^{n-1}}=P_{Y_n|X_n,Y_{n-1}}}$, so the output at the current step $n$ depends only on the current input and the previous output.
\end{example}

\begin{figure}[ht]
    \centering
    \begin{tikzpicture}[scale=2,>=stealth',
      init/.style={diamond,draw,fill=gray!25,minimum size=22pt,inner sep=1pt},
      comm/.style={circle,draw,minimum size=20pt,inner sep=2pt}]
    
        % initial node (0) – distinctive diamond, light fill
        \node[init] (n0) at (0,0) {0};
        
        % communication nodes 1–3
        \foreach \i in {1,...,3}{
            \node[comm] (n\i) at (\i*1.2,0) {\i};
        }
        
        % arrows between consecutive nodes
        \foreach \i in {0,...,2}{
            \draw[->] (n\i) -- (n\the\numexpr\i+1\relax);
        }
        
        % arrow to continuation dots
        \coordinate (cont) at ($(n3)+(1.2,0)$);
        \draw[->] (n3) -- (cont);
        \node at ($(cont)+(0.35,0)$) {$\cdots$};
    \end{tikzpicture}
    \caption{Infinite DAG corresponding to the NOST channel in Example~\ref{ex:NOST-1}.
    Node $0$ is the initial node;
    % all other nodes
    the rest
    are communication nodes.}
    \label{fig:NOST-line}
\end{figure}
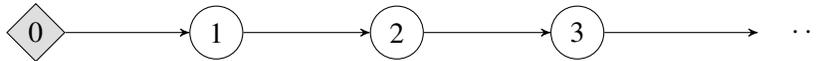

The DAG corresponding to the NOST channel is depicted in Figure~\ref{fig:NOST-line}.
More generally, the directed edges on the set of nodes $\ms V = \ZZ_{\ge0}$ can be defined to instantiate discrete-time channels that are \mbox{\emph{input-memoryless},} i.e., for each output $Y_n$, the input it depends on is only the current input $X_n$.
Such channels have channel constants of the form $\left\{ P_{Y_n \mid X_n , Y^{n-1}} \right\}_{n\ge1}$.

Recall the definition of
% poset-causal channels in Definition~\ref{def:poset-causal} and the Markovianity condition therein.
poset-Markov channels in Definition~\ref{def:poset-markov}.
The equality\linebreak $P_{ Y_v \mid X^{\preceq v} , Y^{\prec v} } = P_{ Y_v \mid X_v , Y_{\pa v} }$ for all $v \in \ms C$ asserts that each output $Y_v$ depends on a single\linebreak input---the input $X_v$ at the same node.
As a result, discrete-time channels that have input memory are not poset-Markov channels.
To better illustrate this limitation of poset-Markov channels,
below we provide an example of a channel that has input memory, and therefore is \emph{not} a channel that can be instantiated as a poset-Markov channel.

\begin{nonexample}[NOST channel with input memory]
    Consider a variant of the NOST\linebreak channel where each channel constant, instead of $P_{Y_n|X^n,Y^{n-1}}=P_{Y_n|X_n,Y_{n-1}}$, satisfies\linebreak $P_{Y_n|X^n,Y^{n-1}}=P_{Y_n|X_n,X_{n-1},Y_{n-1}}$.
    In other words, each output depends on the most recent two inputs rather than only one.
    
\end{nonexample}

The binary trapdoor channel~\cite{trapdoor},
% even though it seems that the output depends on a single input at each time step,
because each transmission has an implicit dependence on all the input history,
cannot be instantiated as a poset-Markov channel.

\begin{nonexample}[Binary trapdoor channel~\cite{trapdoor}] \label{ex:trapdoor}
    % The binary trapdoor channel has outputs $Y_t = \mathrm{Unif}\{X_t,S_{t-1}\}$, and the state variable evolves as $S_t = S_{t-1} \oplus X_t \oplus Y_t$.
    The binary trapdoor channel produces outputs $Y_t$, where $Y_t$ is chosen uniformly at random from the set $\{X_t, S_{t-1}\}$ for each $t$. The state evolves according to $S_t = S_{t-1} \oplus X_t \oplus Y_t$.
    % This represents a ball trapped inside a trapdoor representing the state, and at each transmission the output is drawn uniformly among the state and the input by the user.
    % What remains among the state and the input becomes the state for the next transmission.
\end{nonexample}
% Given the initial state $S_0 = s_0$, the binary trapdoor channel can be characterized through channel constants of the form $P_{Y_n \mid X^n , Y^{n-1}},n\ge1$.
% However, due to the non-vanishing effects of $S_0$ and $X_1$ in the channel,
\noindent A characterization of the trapdoor channel solely in terms of the inputs and outputs that adheres to the Markovianity condition in Definition~\ref{def:poset-markov} does not exist.
In particular,
marginalizing away the state variable
% in order to obtain the input-output relations of the channel
results in the $n$\textsuperscript{th} output $Y_n$ depending on all past inputs and outputs.
Consequently, the trapdoor channel is \emph{not} a poset-Markov channel.
% Consequently, the binary trapdoor channel is \emph{not} an approximately symmetric channel.

% Although the NOST and trapdoor channels are both FSCs, unlike the NOST channel, the trapdoor channel does not have a characterization via channel constants without the state variable $S$. This stems from the fact that the state variable of the NOST channel is a noisy version of the output alone---as the name suggests---whereas the state variable of the trapdoor channel is dependent on both the input and the output.
% Due to lack of a characterization via channel constants of the form $P_{Y_t \mid X_\cdot , Y_\cdot}$, the trapdoor is not a poset-causal channel.

Every poset-causal channel is operationally equivalent to a
% discrete-time channel
classical time-indexed
with no input memory.
One such equivalent
% discrete-time channel
time-indexed channel
can always be constructed via the total order $\tau : \ms C \to \NN$ of transmissions.
However, the capacity
% reformulation and its convex relaxation
upper bound
we present in this section is most interesting for poset-Markov channels based on sparse DAGs. 
We proceed to give further examples of such DAGs.

\begin{example}[2D NOST channel]\label{ex:2DNOST-1}
    Another instantiation of a
    % poset-causal channel
    poset-Markov channel
    is via
    nodes ${\ms V=\ZZ_{\ge0}^2}$,
    and edges
    \begin{equation}
        \ms E=\{\,(\,(i,j),(i,j+1)\,)\,,\,(\,(i,j),(i+1,j)\,) \, \mid \, i,j\in\ZZ_{\ge0}\,\}.
    \end{equation}
    The induced partial order on $\ms V$ posits $(i,j) \preceq (k,l)$ if and only if $i\leq k$ and $j\leq l$.
    The set of initial nodes is the infinite set $\ms I=\{(i,j) \mid i=0 \textrm{ or } j=0\}$, and the communication nodes are $\ms C=\NN^2$.
    Each $\mc V_n$ is the finite square grid $\mc V_n = \{ (i,j) \mid 0 \le i,j \le n \}$.
    The channel constants are defined via $Q_{Y \mid X , Y'_1 , Y'_2}$ as
    $P_{Y_{i,j}|X^{\preceq (i,j)},Y^{\prec(i,j)}} = P_{Y_{i,j}|X_{i,j},Y_{i-1,j},Y_{i,j-1}} = Q_{Y \mid X , Y'_1 , Y'_2}$, so the output at each coordinate $(i,j)$ depends only on the input at coordinate $(i,j)$ and the outputs at coordinates $(i-1,j)$ and $(i,j-1)$.
\end{example}
\vspace{-0.5cm}
\begin{figure}[ht]
\centering
\begin{tikzpicture}[>=stealth',
  init/.style={diamond,draw,fill=gray!25,
               minimum size=12pt,inner sep=1.5pt},
  comm/.style={circle,draw,minimum size=12pt,inner sep=1.5pt}]

%–––– 4×4 grid of nodes –––––––––––––––––––––––––––––
\foreach \x in {0,...,3}{
  \foreach \y in {0,...,3}{
    \ifnum\x=0\relax % left boundary → initial
      \node[init] (n\x\y) at (\x,\y) {};
    \else\ifnum\y=0\relax % bottom boundary → initial
      \node[init] (n\x\y) at (\x,\y) {};
    \else
      \node[comm] (n\x\y) at (\x,\y) {};
    \fi\fi
  }
}

%–––– arrows right (x→x+1) ––––––––––––––––––––––––––
\foreach \x in {0,...,2}{
  \foreach \y in {0,...,3}{
    \draw[->] (n\x\y) -- (n\the\numexpr\x+1\relax\y);
  }
}

%–––– arrows up (y→y+1) ––––––––––––––––––––––––––––
\foreach \x in {0,...,3}{
  \foreach \y in {0,...,2}{
    \draw[->] (n\x\y) -- (n\x\the\numexpr\y+1\relax);
  }
}

%–––– centered ellipses for continuation ––––––––––––
\node at ($(n30)!0.5!(n33)+(1.0,0)$) {$\cdots$}; % right side, centered vertically
\node at ($(n03)!0.5!(n33)+(0,1.0)$) {$\vdots$}; % top side, centered horizontally
% \node at ($(n33)+(1.0,1.0)$)        {$\ddots$}; % diagonal up-right
\node at ($(n33)+(0.5,0.5)$)        {$\cdot$};
\node at ($(n33)+(0.6,0.6)$)        {$\cdot$};
\node at ($(n33)+(0.7,0.7)$)        {$\cdot$};
\end{tikzpicture}
\caption{Infinite DAG corresponding to the 2D NOST channel in Example~\ref{ex:2DNOST-1}. Nodes in the lower and left boundaries are initial nodes; all other nodes are communication nodes.}
\end{figure}
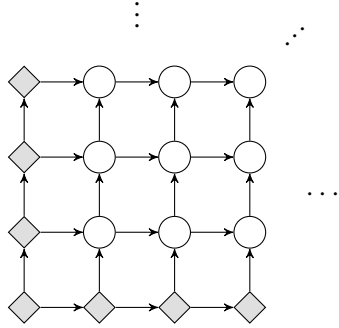

\begin{example}[Infinite binary tree] \label{ex:tree}
    Consider the
    % poset-causal channel
    poset-Markov channel
    with its DAG being the rooted infinite binary tree with all edges oriented away from the root.
    Formally, its set of nodes $\ms V$ is the set of all finite binary strings (denoted $\{0,1\}^*$) with the empty string 
    $\epsilon$ being the root, and its set of edges is
    \begin{equation}
        \ms E = \left\{ \, (w,w0) , \, (w,w1) \, \mid \, w \in \{0,1\}^* \right\} .
    \end{equation}
    The induced partial order on $\ms V$ posits $w_1 \prec w_2$ if and only if $w_1$ is a prefix of $w_2$.
    The set of\linebreak initial and communication nodes are $\ms I = \{\epsilon\}$ and $\ms C = \{0,1\}^* \setminus \{\epsilon\}$, respectively.\linebreak
    Each $\mc V_n$ is given by ${\mc V_n = \{ w \in \{0,1\}^* \ \big| \ |w| \le n \}}$.
    Markovianity of the channel constants is\linebreak ${P_{Y_w|X_{\mathrm{Pref(w)}},Y_{\mathrm{Pref(w)}}} = P_{Y_w|X_{w^-},Y_{w^-}}}$, $w \in \{0,1\}^* \setminus \{\epsilon\}$, where $\mathrm{Pref(w)}$ denotes the set of proper prefixes of $w$, and $w^-$ denotes the string $w$ with its final symbol deleted.
\end{example}

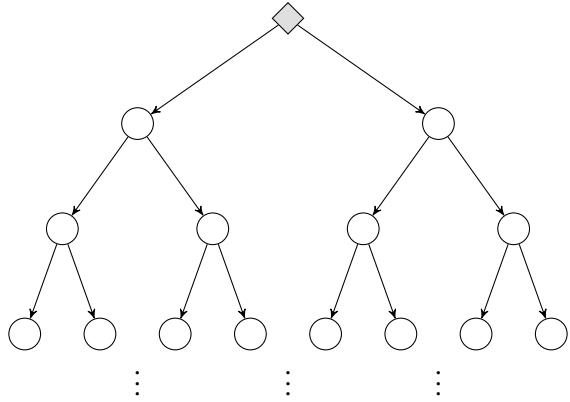
\begin{figure}[ht]
\centering
\begin{tikzpicture}[
  >=stealth',
  level distance=1.4cm,
  level 1/.style={sibling distance=4cm},
  level 2/.style={sibling distance=2cm},
  level 3/.style={sibling distance=1cm},
  edge from parent/.style={->,draw},     % draw arrows from parent (top) to child (bottom)
  init/.style={diamond,draw,fill=gray!25,minimum size=12pt,inner sep=2pt},
  comm/.style={circle,draw,minimum size=12pt,inner sep=2pt}]

% root + 3 further levels (1 + 2 + 4 + 8 = 15 nodes)
\node[init] (root) {} % depth 0
 child{ node[comm] {} % depth 1 L
   child{ node[comm] {} % depth 2 LL
     child{ node[comm] {} } % depth 3
     child{ node[comm] {} }
   }
   child{ node[comm] {} % depth 2 LR
     child{ node[comm] {} }
     child{ node[comm] {} }
   }
 }
 child{ node[comm] {} % depth 1 R
   child{ node[comm] {} % depth 2 RL
     child{ node[comm] {} }
     child{ node[comm] {} }
   }
   child{ node[comm] {} % depth 2 RR
     child{ node[comm] {} }
     child{ node[comm] {} }
   }
 };

% three centred vertical dots indicating infinite continuation downward
\foreach \x in {-2.0,0,2.0}{
  \node at (\x,-4.75) {$\vdots$};
}
\end{tikzpicture}
\caption{Infinite binary tree in Example~\ref{ex:tree}.
The root is the initial node; all other nodes are communication nodes.}
\end{figure}

% The Markov structure of the channel memory in Definition~\ref{def:poset-causal} yields the following characterization of the feedback capacity of the poset-causal channel, which was defined in \eqref{eq:poset-c-def}.

% For the feedback capacity of the poset-causal channel defined in~\eqref{eq:poset-c-def}, the Markov structure of the channel memory in Definition~\ref{def:poset-causal} yields the following characterization. \TODO{replace with upper bound}

% When
If
the poset-causal channel is poset-Markov, 
its feedback capacity defined in~\eqref{eq:poset-c-def} admits the following upper bound
where the mutual information terms and the variables reflect the Markov structure of the channel.

\begin{proposition}[Poset-Markov capacity upper bound] \label{prop:poset-relax}
    For a poset-Markov channel, its feedback capacity
    % $C^{\textup{fb*}}$
    $C^{\textup{fb}}$
    is upper-bounded by
    the
    limit of convex optimizations % as
    \begin{equation} \label{eq:poset-relax}
        \begin{aligned}
            C^{\textup{fb}} \ \le \ C^{\textup{fb}}_{\textup{u.b.}} \ \coloneq \
            \lim_{n\to\infty} & \max_{\left\{P_{X_v,Y_{\pa v}}\right\}_{v \in \mc C_n}} \frac{1}{|\mc C_n|} \sum_{v \in \mc C_n} I( X_v ; Y_v \mid Y_{\pa v} ) \\
            & \hspace{1cm} \textup{s.t.} \hspace{1.05cm} P_{Y_v} \hspace{0.4cm} = \sum_{x_v , y_{\pa v}} P_{Y_v \mid X_v , Y_{\pa v}} P_{X_v , Y_{\pa v}}, \\[-1ex]
            & \sum_{x_v , y_{\pa v \setminus \pa u}} P_{X_v , Y_{\pa v}} = \sum_{x_u , y_{\pa u \setminus \pa v}} P_{X_u , Y_{\pa u}} \quad \forall u,v \in \mc C_n ,
        \end{aligned}
    \end{equation}
    where the constraints impose channel constants as well as consistency at marginals
    that belong to more than one
    set of the form
    $\{v,\pa v\}$ where $v \in \mc C_n$.
    % The upper bound in~\eqref{eq:poset-relax} is an equality if
    % the collection of subsets $\{v,\pa v\}_{v \in \ms C}$ of the set of communication nodes $\ms C$ have an ordering that satisfies the running-intersection property.
\end{proposition}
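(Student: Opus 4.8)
The plan is to derive the bound in three stages: (i) a Fano-type feedback converse that replaces the operational quantity $\tfrac{1}{|\mc C_n|}\log M^*(n,\epsilon)$ by an $n$-letter sum of conditional mutual informations over $\mc C_n$; (ii) a Markovian simplification of that sum into the local terms $I(X_v;Y_v\mid Y_{\pa v})$; and (iii) a relaxation of the resulting growing-size problem to the convex program in~\eqref{eq:poset-relax}. For stage~(i) I would linearize the channel using the transmission order $\tau$ so that, by~\eqref{eq:tau-subseq}, the first $|\mc C_n|$ transmissions are exactly the nodes of $\mc C_n$; enumerate them accordingly as $u_1,\dots,u_{|\mc C_n|}$, and set $\widetilde{Y}^{k}\coloneq(Y_v)_{v\in\mc I_n\cup\{u_1,\dots,u_k\}}$, the outputs available after $k$ transmissions together with the message-independent initial outputs. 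Fixing an $(n,M,\epsilon)$-code with uniform message $W$ and passing to the induced joint distribution, Fano's inequality, the chain rule, and $W\perp(Y_v)_{v\in\mc I_n}$ give
\[
  \log M \ \le\ \frac{1}{1-\epsilon}\Big( I\big(W;(Y_v)_{v\in\mc C_n}\big)+1\Big)
  \ \le\ \frac{1}{1-\epsilon}\Big(\sum_{k=1}^{|\mc C_n|} I\big(W;Y_{u_k}\mid\widetilde{Y}^{k-1}\big)+1\Big),
\]
so everything reduces to bounding each term by $I\big(X_{u_k};Y_{u_k}\mid Y_{\pa{u_k}}\big)$ evaluated at the induced distribution; this is the poset analogue of Proposition~\ref{prop:NOST-formulation}.

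For the key per-term estimate, observe that $\pa{u_k}\prec u_k$ forces $Y_{\pa{u_k}}$ to be among the coordinates of $\widetilde{Y}^{k-1}$, while $X_{u_k}=f_{u_k}(W,Y^{\prec u_k})$ is a deterministic function of $(W,\widetilde{Y}^{k-1})$ because $Y^{\prec u_k}\subseteq\widetilde{Y}^{k-1}$. Hence $I(W;Y_{u_k}\mid\widetilde{Y}^{k-1})=I(X_{u_k};Y_{u_k}\mid\widetilde{Y}^{k-1})+I(W;Y_{u_k}\mid X_{u_k},\widetilde{Y}^{k-1})$, and the last term vanishes: by the Markov property in Definition~\ref{def:poset-markov}, $Y_{u_k}$ conditioned on $(X_{u_k},Y_{\pa{u_k}})$ is independent of everything generated before it, in particular of $(W,\widetilde{Y}^{k-1})$. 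Splitting $\widetilde{Y}^{k-1}=(Y_{\pa{u_k}},Z)$ and using $I(Z;Y_{u_k}\mid X_{u_k},Y_{\pa{u_k}})=0$ (same property), the chain rule yields $I(X_{u_k};Y_{u_k}\mid Y_{\pa{u_k}},Z)=I(X_{u_k};Y_{u_k}\mid Y_{\pa{u_k}})-I(Z;Y_{u_k}\mid Y_{\pa{u_k}})\le I(X_{u_k};Y_{u_k}\mid Y_{\pa{u_k}})$. Summing over $k$ and using $\{u_1,\dots,u_{|\mc C_n|}\}=\mc C_n$ gives $\sum_k I(W;Y_{u_k}\mid\widetilde{Y}^{k-1})\le\sum_{v\in\mc C_n}I(X_v;Y_v\mid Y_{\pa v})$.

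For stage~(iii) I would check that the code-induced marginals $\{P_{X_v,Y_{\pa v}}\}_{v\in\mc C_n}$ are feasible for~\eqref{eq:poset-relax}: the channel-constant equalities hold because, by the Markov property, $Y_v$ given $(X_v,Y_{\pa v})$ follows the fixed channel constant $P_{Y_v\mid X_v,Y_{\pa v}}$, so the induced $P_{Y_v}$ is precisely $\sum_{x_v,y_{\pa v}}P_{Y_v\mid X_v,Y_{\pa v}}P_{X_v,Y_{\pa v}}$; and each consistency equality holds because its two sides are the two ways of writing the genuine marginal $P_{Y_{\pa u\cap\pa v}}$ of the induced joint. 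Writing $a_n$ for the optimal value of~\eqref{eq:poset-relax} at stage $n$, feasibility gives $\tfrac1{|\mc C_n|}\sum_{v\in\mc C_n}I(X_v;Y_v\mid Y_{\pa v})\le a_n$, hence $\tfrac1{|\mc C_n|}\log M^*(n,\epsilon)\le\tfrac{1}{1-\epsilon}\big(a_n+\tfrac1{|\mc C_n|}\big)$; taking $\liminf_{n\to\infty}$ (with $|\mc C_n|\to\infty$) and then $\epsilon\to0$ in~\eqref{eq:poset-c-def} yields $C^{\textup{fb}}\le\liminf_n a_n$. Convexity of each stage-$n$ problem is immediate: with the channel constant fixed, $I(X_v;Y_v\mid Y_{\pa v})=H(Y_v\mid Y_{\pa v})-H(Y_v\mid X_v,Y_{\pa v})$, whose second term is linear in $P_{X_v,Y_{\pa v}}$ and whose first term is a concave function of the linear image $P_{Y_v,Y_{\pa v}}$ of $P_{X_v,Y_{\pa v}}$, so the objective is concave while all constraints are linear.

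The hard part will be stage~(ii): since the feedback available at node $v$ is only $Y^{\prec v}$ and, after linearization, $\widetilde{Y}^{k-1}$ generally contains outputs of nodes incomparable to $u_k$, one must carry the extra conditioning $Z$ through the whole argument and verify it is harmless—the conditional-independence structure of Definition~\ref{def:poset-markov} is exactly what makes this work, and the strict positivity used elsewhere in the paper is not needed here. One remaining gap is purely about the limit: the chain above only delivers $C^{\textup{fb}}\le\liminf_n a_n$, so to match the statement verbatim one either reads $C^{\textup{fb}}_{\textup{u.b.}}$ as this limit inferior, or separately argues that $\{a_n\}$ converges—which, under the approximate-symmetry hypotheses, falls out of the single-letterization established in Theorem~\ref{thm:poset-single-letter}.
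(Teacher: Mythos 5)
Your proposal is correct, and its core (the per-node bound and the relaxation) coincides with the paper's, but you reach the $n$-letter sum by a genuinely different route. The paper treats the poset-Markov channel as an operationally equivalent time-indexed channel transmitting in the order $\tau$ and simply invokes the multi-letter directed-information characterization~\eqref{eq:C-lim}, so its proof starts from $C^{\textup{fb}} = \lim_n \frac{1}{|\mc C_n|}\max I(X^{|\mc C_n|}\to Y^{|\mc C_n|})$ and then bounds each term $I(X^t;Y_t\mid Y^{t-1})\le I(X_v;Y_v\mid Y_{\pa v})$ exactly as you do (via $H(Y_t\mid Y^{t-1})\le H(Y_t\mid Y_{\pa v})$ and $H(Y_t\mid X^t,Y^{t-1})=H(Y_t\mid X_t,Y_{\pa v})$), finishing with the same feasibility-of-induced-marginals argument. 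You instead work directly from the operational definition~\eqref{eq:poset-c-def} via Fano's inequality and the chain rule, in effect re-proving the converse half of~\eqref{eq:C-lim} for this setting; this buys self-containedness (no appeal to the ``mild regularity conditions'' behind~\eqref{eq:C-lim}) at the cost of a longer argument, and your explicit handling of the extra conditioning $Z$ (outputs of $\tau$-earlier but $\prec$-incomparable nodes) is exactly the point the paper compresses into its entropy inequalities. Note that both proofs rely on the same implicit modeling assumption that, under the $\tau$-ordered transmission, the conditional law of $Y_v$ given the entire realized past equals $P_{Y_v\mid X_v,Y_{\pa v}}$ (Definition~\ref{def:poset-markov} literally conditions only on $X^{\preceq v},Y^{\prec v}$), so you are not assuming more than the paper does. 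Your closing caveat about $\liminf$ versus $\lim$ is fair but not a defect relative to the paper, which likewise asserts the limit in~\eqref{eq:poset-relax} without separately proving its existence; reading $C^{\textup{fb}}_{\textup{u.b.}}$ as a $\liminf$, or deferring convergence to the symmetric case of Theorem~\ref{thm:poset-single-letter}, matches the paper's level of rigor.
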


\begin{proofsketch}
    We interpret the poset-Markov channel as an operationally equivalent % discrete-time channel,
    time-indexed channel,
    where the order of transmissions is according to the total order $\tau : \ms C \to \NN$,
    and appeal to~\eqref{eq:C-lim} for an initial characterization of feedback capacity.
    % The rest of the proof proceeds analogously to the proof of Proposition~\ref{prop:NOST-formulation},
    % where we carry out the simplification by making use of the Markovianity in Definition~\ref{def:poset-causal}.
    We then leverage Markovianity to upper-bound the directed information term by a sum of conditional mutual information terms,
    and also relax the variables to marginal joint distributions $\left\{P_{X_v,Y_{\pa v}}\right\}_{v \in \mc C_n}$ to obtain the right-hand side of~\eqref{eq:poset-relax}.
    The full proof can be found in
    % Appendix~\ref{sec:proof-prop:poset-C-characterize}.
    Appendix~\ref{sec:proof-prop:poset-relax}.
\end{proofsketch}

In the next subsection,
we show that the
limit of the
convex problems
in~\eqref{eq:poset-relax}
equal a constant-sized optimization
% in the limit
under additional symmetry assumptions on the channel.

% A corollary of Proposition~\ref{prop:poset-relax} is a characterization of the NOST feedback capacity. It is shown in~\cite{NOST} that the feedback capacity of the NOST channel is, under a connectivity assumption, independent of the initial output $Y_0$. A consequence of this is that, for the NOST channel, the optimistic feedback capacity defined in~\eqref{eq:poset-c-optimistic} is equal to the original feedback capacity given in~\eqref{eq:poset-c-simplified}. Moreover, the subsets $\{t,t-1\}_{1 \le t \le n}$ of $[n]$ satisfy the running-intersection property with the usual integer order. Therefore, we have equality in~\eqref{eq:poset-relax}, and the NOST feedback capacity is equal to the following limit.
% \begin{corollary}[NOST feedback capacity characterization]
%     Feedback capacity of NOST channel is given by
%     \begin{equation}
%         C^{\textup{fb}}_{\textup{NOST}} = \lim_{n\to\infty} \max_{\left\{P_{X_t,Y_{t-1}}\right\}_{t=1}^n} \frac{1}{n} \sum_{t=1}^n I( X_t ; Y_t \mid Y_{t-1} )
%     \end{equation}
% \end{corollary}
% This characterization is also given in~\cite[eq.~(42-b)]{NOST}.

\subsection{Approximately Symmetric Channels}

In Proposition~\ref{prop:poset-relax}, we established an upper bound on the feedback capacity of
% poset-causal channels
poset-Markov channels
in terms of the local distributions in the DAG.
We aim to obtain a \mbox{constant-sized} reduction of this upper bound
under vertex-transitivity so all vertices are ``the same''.
% , which we formulate via isomorphic local neighborhoods.
We obtain such a reduction for a subclass of poset-Markov channels that are induced subgraphs of \emph{Cayley graphs},
% the definition of which is given below.
defined below.

\begin{definition}[Cayley graph~\cite{Cayley1878}]
    Let $\ms G$ be a group, and let $\mc S$ be a (finite) generating set of $\ms G$. The Cayley graph $\mathrm{Cay}(\ms G,\mc S) = (\ms V , \ms E)$ is a directed graph (i.e., $\ms E \subset \ms V \times \ms V$) with the following properties.
    \begin{enumerate}
        \item $\ms V = \ms G$, i.e., the vertices of $\mathrm{Cay}(\ms G,\mc S)$ are precisely the group elements $g \in \ms G$.
        \item Edges of $\mathrm{Cay}(\ms G,\mc S)$ are colored with $s$ colors: one color $c_s$ corresponding to each $s \in \mc S$.
        \item The set of edges is $\ms E = \left\{ (g,gs) \ \big| \ (g,s) \in \ms G \times \mc S \right\}$, i.e.,
        for every pair $(g,s) \in \ms G \times \mc S$, the ordered pair $(g,gs)$ is an edge in $\mathrm{Cay}(\ms G,\mc S)$. The color of the edge $(g,gs) \in \ms E$ is $c_s$.
    \end{enumerate}
\end{definition}

Through the symmetry properties of Cayley graphs, considering induced subgraphs of Cayley graphs for the poset-Markov channels facilitates leveraging isomorphic local neighborhoods of the nodes.
% in the resulting poset-Markov channels.
This in turn provides a means for single-letterization of the channel capacities of such poset-causal channels, which we call \emph{approximately symmetric channels}, defined as follows.

\begin{definition}[Approximately symmetric channel] \label{def:approx-sym}
    A \mbox{poset-Markov} channel on a DAG $(\ms V,\ms E)$ is called \emph{approximately symmetric} if it satisfies the following conditions.
        \begin{enumerate}[label={\arabic{enumi}-}, font=\textbf, align=left]
            \item[1-- (Symmetry via Cayley graph)] $(\ms V,\ms E)$ is an induced subgraph of an
            acyclic
            Cayley graph
            % $\mc G$
            arising from a group with $d$ generators $s_1,\ldots,s_d$,\footnote{
            Such a Cayley graph is a DAG if no product $s_{i_1} \cdots s_{i_k}$ equals the identity element of the group for any $k$ and $i_1 , \ldots , i_k \in [d] $.
            }
            in a way that
            all communication nodes $v \in \ms C$ have in-degree $d_{\textup{in}} (v) = d$.
        
            \item[2-- (Vanishing fraction of boundary nodes)]
            For induced subgraphs $(\mc V',\mc E')$ of $(\ms V , \ms E)$, we say that a node $v \in \mc V'$ is a \emph{boundary node} of $\mc V'$ if it has in-degree or out-degree less than $d$ in $(\mc V',\mc E')$.
            Denote by $\mc B_n$ the set of boundary nodes of $\mc V_n$.
            Then, the fraction of boundary nodes $|\mc B_n| \big/ |\mc V_n| \to 0$ as $n\to\infty$.
            
            \item[3-- (Stationary channel constants)]
            % All communication nodes $v \in \ms C$ have in-degree $d_{\textup{in}} (v) = d$, and
            There is a set of (universal) positive probability distributions $Q_{ Y | X , \overline{Y}' } \in \mathbb R_{>0}^{\mc Y \times \mc X \times \mc Y^d}$
            such that $P_{ Y_v \mid X_v , Y_{\pa v} } = Q_{ Y | X , \overline{Y}' }$ for all $v\in\ms C$,
            where each $Y_{\pa v}$ is the $d$-tuple $Y_{\pa v} = ( Y_{s_1^{-1} \cdot v} , \ldots , Y_{s_d^{-1} \cdot v} )$.
        \end{enumerate}
\end{definition}

\begin{figure}[ht]
\centering
\begin{tikzpicture}[scale=0.9,>=stealth']

%–––––––– nodes –––––––––––––––––––––––––––––––––––––
\foreach \x in {0,...,5}{
  \foreach \y in {0,...,5}{
    \ifnum\x=0\relax
      \node[rectangle,draw] at (\x,\y) {};        % left boundary
    \else\ifnum\x=5\relax
      \node[rectangle,draw] at (\x,\y) {};        % right boundary
    \else\ifnum\y=0\relax
      \node[rectangle,draw] at (\x,\y) {};        % bottom boundary
    \else\ifnum\y=5\relax
      \node[rectangle,draw] at (\x,\y) {};        % top boundary
    \else
      \node[circle,fill=black!70,inner sep=2pt] at (\x,\y) {}; % interior
    \fi\fi\fi\fi
  }
}

%–––––––– directed edges: rightward –––––––––––––––––
\foreach \x in {0,...,4}{            % omit rightmost column
  \foreach \y in {0,...,5}{
    \draw[->] (\x+0.15,\y) -- (\x+0.85,\y);
  }
}

%–––––––– directed edges: upward –––––––––––––––––––
\foreach \x in {0,...,5}{
  \foreach \y in {0,...,4}{          % omit topmost row
    \draw[->] (\x,\y+0.15) -- (\x,\y+0.85);
  }
}

\end{tikzpicture}
\caption{Depiction of vanishing fraction of boundary nodes in approximately symmetric channels.
Induced subgraph $6\times6$ square of the infinite 2D DAG.
Interior nodes (filled circles) vs. boundary nodes (hollow squares).}
\end{figure}
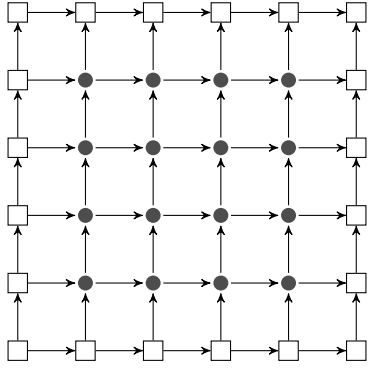

% We show how the previously instantiated NOST and 2D NOST channels are approximately symmetric channels, but the poset-causal channel over the infinite binary tree is not.
Below, we discuss examples (and nonexamples) of approximately symmetric poset-Markov channels.

\setcounter{example}{0}

\begin{example}[NOST channel--cont.] \label{ex:NOST-2}
    The NOST channel is a poset-Markov channel that is also approximately symmetric.
    Recall that it has nodes $\ms V = \ZZ_{\ge0}$ and edges $\ms E=\{(n,n+1) \mid n\in\mathbb{Z}_{\geq0}\}$. The Cayley graph (of which $(\ms V , \ms E)$ is an induced subgraph of) is
    % $\mc G =
    $\mathrm{Cay}(\ZZ , \{1\})$, i.e.,
    the infinite directed line.
    The set of boundary nodes $\mc B_n$ of each $\mc V_n = \{0,\ldots,n\}$ is $\mc B_n = \{0,n\}$, which has vanishing fraction $2/(n+1) \to 0$.
    The channel constants are stationary with $P_{Y_n \mid X_n , Y_{n-1}} = Q_{Y \mid X , Y'},n\ge1$.
\end{example}

% In addition to the three instantiations of poset-causal channels in Examples~\ref{ex:NOST-1}--\ref{ex:tree} above, we provide a \emph{non-example}, i.e., a channel that cannot be instantiated as a poset-causal channel.

\setcounter{example}{3}

\begin{example}[2D NOST channel--cont.] \label{ex:2DNOST-2}
    The 2D NOST channel is approximately symmetric. % as well.
    Recall that $\ms V = \ZZ^2_{\ge0}$, and $\ms E=\{\,(\,(i,j),(i,j+1)\,)\,,\,(\,(i,j),(i+1,j)\,) \mid i,j\ge0\,\}$.
    % \begin{equation}
    %     \ms E=\{\,(\,(i,j),(i,j+1)\,)\,,\,(\,(i,j),(i+1,j)\,) \mid i,j\ge0\,\} .
    % \end{equation}
    The Cayley graph in this case is
    % $\mc G =
    $\mathrm{Cay}(\ZZ^2 , \{(0,1),(1,0)\})$, i.e., the Cayley graph of the group of pairs of integers with generators $(0,1),(1,0)$.
    % $\mc G$
    The Cayley graph
    $\mathrm{Cay}(\ZZ^2 , \{(0,1),(1,0)\})$
    is the two-dimensional integer lattice where each node has outgoing edges upward and to the right.
    The set of boundary nodes $\mc B_n$ of each $\mc V_n = \{ (i,j) \mid 0 \le i,j \le n \}$ is $\mc B_n = \{ (i,j) \mid i \in \{0,n\} \ \textrm{or} \ j \in \{0,n\} \}$, which has vanishing fraction $4n/(n+1)^2 \to 0$.
\end{example}

\begin{nonexample}[Infinite binary tree--cont.] \label{ex:tree-2}
    The poset-Markov channel over the DAG of rooted infinite binary tree is \emph{not} approximately symmetric.
    Recall that $\ms V = \{0,1\}^*$ and
    % $\ms E = \left\{ \, (w,w0) , \, (w,w1) \, \mid \, w \in \{0,1\}^* \right\}$.
    \begin{equation}
        \ms E = \left\{ \, (w,w0) , \, (w,w1) \, \mid \, w \in \{0,1\}^* \right\} .
    \end{equation}
    In this case, the DAG $(\ms V , \ms E)$ is itself a Cayley graph with
    $(\ms V , \ms E) = \mathrm{Cay}(\langle0,1\rangle , \{0,1\})$ where $\langle0,1\rangle$ denotes the free group generated by $0$ and $1$.
    However, the set of boundary nodes $\mc B_n$ of each\linebreak ${\mc V_n = {\{ w \in \{0,1\}^* \ \big| \ |w| \le n \}}}$ is given by
    \begin{align}
        \mc B_n = \{\epsilon\} \cup \{ w \in \{0,1\}^* \ \big| \ |w| = n \} ,
    \end{align}
    and has fraction $(2^n + 1) / (2^{n+1} - 1) \to 1/2$, \emph{not} vanishing.
\end{nonexample}

Before we assert how the upper bound in Proposition~\ref{prop:poset-relax} single-letterizes, we provide a definition pertaining to equivalence of ordered subsets of $[d]$ that is needed to
capture the local consistency constraints in local convex relaxation in~\eqref{eq:poset-relax}.
% \emph{local consistency} of distributions.
For a node $v \in \ms C$ and an ordered subset $\mc S$ of $[d]$ with cardinality $|\mc S| = k$, denote by $\pai{\mc S}{v}$ the ordered $k$-tuple of the subset of $k$ parents of $v$ that are ordered according to the ordering in $\mc S$.

\begin{definition}[Equivalent subsets] \label{def:poset-equiv-ordered-subsets}
    Consider two ordered subsets $\mc S,\mc T$ of $[d]$.
    If there exist two nodes $u, v \in \ms C$ with $\pai{\mc S}{u} = \pai{\mc T}{v}$,
    % we say that $\mc S$ and $\mc T$ are \emph{equivalent}, and denote it by $\mc S \sim \mc T$.
    then $\mc S \sim \mc T$.
\end{definition}

When the poset-Markov channel is further assumed to be approximately symmetric, the limit of maximizations in~\eqref{eq:poset-relax} converges to a single-letter problem, yielding the following single-letter upper bound on the feedback capacity.

\begin{theorem}[Single-letter upper bound] \label{thm:poset-single-letter}
    For an approximately symmetric poset-causal channel, the upper bound on its feedback capacity in Proposition~\ref{prop:poset-relax} single-letterizes as
    \begin{equation} \label{eq:poset-converse}
        \begin{aligned}
            C^{\textup{fb}} \ \le \ C^{\textup{fb}}_{\textup{u.b.}} \ = \
            & \max_{ P_{X , \ov Y'} } I(X;Y \mid \ov Y') \\[-2ex]
            & \ \textup{s.t.} \quad P_Y = P_{Y'_i} \ \ \forall i , \\[-2.25ex]
            & \hspace{0.86cm} P_{Y'_{\mc S}} = P_{Y'_{\mc T}} \ \ \forall \mc S \sim \mc T .
        \end{aligned}
    \end{equation}
    
\end{theorem}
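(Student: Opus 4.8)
The plan is to prove the equality $C^{\textup{fb}}_{\textup{u.b.}}=V$, where $V$ denotes the optimal value of the single-letter program on the right-hand side of~\eqref{eq:poset-converse}, by establishing the two inequalities $C^{\textup{fb}}_{\textup{u.b.}}\ge V$ and $C^{\textup{fb}}_{\textup{u.b.}}\le V$; Proposition~\ref{prop:poset-relax} then gives $C^{\textup{fb}}\le C^{\textup{fb}}_{\textup{u.b.}}=V$. The first step is to fix notation: using condition~1 of Definition~\ref{def:approx-sym}, for every communication node $v$ I identify the parent tuple $Y_{\pa v}=(Y_{s_1^{-1}\cdot v},\dots,Y_{s_d^{-1}\cdot v})$ with a random vector in $\mc Y^d$ through the canonical ``slot'' labelling given by the $d$ edge colors of the Cayley graph; then every variable $P_{X_v,Y_{\pa v}}$ in~\eqref{eq:poset-relax} lives in the same simplex $\Delta(\mc X\times\mc Y^d)$ as the single-letter variable $P_{X,\ov Y'}$, and, by the stationarity in condition~3, each summand $I(X_v;Y_v\mid Y_{\pa v})$ equals the single-letter objective $P\mapsto I(X;Y\mid\ov Y')$ (evaluated with the fixed channel $Q_{Y\mid X,\ov Y'}$) at $P=P_{X_v,Y_{\pa v}}$. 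I would also record the \emph{concavity} of this objective on $\Delta(\mc X\times\mc Y^d)$, which drives the whole argument: in $I(X;Y\mid\ov Y')=H(Y\mid\ov Y')-H(Y\mid X,\ov Y')$ the term $H(Y\mid X,\ov Y')$ is linear in $P_{X,\ov Y'}$, while $H(Y\mid\ov Y')=\sum_{\ov y'}g\big(P(Y,\ov y')\big)$ is a sum of concave functions composed with the linear map sending $P_{X,\ov Y'}$ to the sub-distributions $P(Y,\ov y')$, where $g(\mu)=-\sum_y\mu_y\log\mu_y+(\sum_y\mu_y)\log(\sum_y\mu_y)$ has Hessian $(\sum_y\mu_y)^{-1}\mathbf 1\mathbf 1^\top-\operatorname{diag}(1/\mu_y)\preceq0$ by Cauchy--Schwarz.

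For the easy inequality $C^{\textup{fb}}_{\textup{u.b.}}\ge V$, I would take a maximizer $P^\star_{X,\ov Y'}$ of the single-letter program (it exists by compactness and continuity) and, for each $n$, plug the constant choice $P_{X_v,Y_{\pa v}}=P^\star_{X,\ov Y'}$ for all $v\in\mc C_n$ into~\eqref{eq:poset-relax}. This is feasible: the marginal that $Y_v$ receives through a child having $v$ as its $i$-th parent is $P^\star_{Y'_i}$, the channel-induced marginal is $\sum_{x,\ov y'}Q_{Y\mid X,\ov Y'}P^\star_{X,\ov Y'}=P^\star_Y$, and the constraints $P_Y=P_{Y'_i}$ make these coincide (so in particular $Y_v$'s marginal is the same through all of its children); and whenever $\pa u\cap\pa v$ is listed by ordered subsets $\mc S$ of the slots of $\pa v$ and $\mc T$ of the slots of $\pa u$ in the same node order, one has $\pai{\mc S}{v}=\pai{\mc T}{u}$, hence $\mc S\sim\mc T$ (Definition~\ref{def:poset-equiv-ordered-subsets}), so the constraint $P_{Y'_{\mc S}}=P_{Y'_{\mc T}}$ is precisely the required marginal consistency. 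The objective then equals $I_{P^\star}(X;Y\mid\ov Y')=V$ for every $n$, so the limit is $\ge V$.

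For the substantive inequality $C^{\textup{fb}}_{\textup{u.b.}}\le V$, I would fix $n$ and an arbitrary feasible family $\{P_{X_v,Y_{\pa v}}\}_{v\in\mc C_n}$ and pass to the orbit average $\widetilde P^{(n)}=\frac1{|\mc C_n|}\sum_{v\in\mc C_n}P_{X_v,Y_{\pa v}}\in\Delta(\mc X\times\mc Y^d)$. By the concavity above and Jensen's inequality, $\frac1{|\mc C_n|}\sum_{v\in\mc C_n}I(X_v;Y_v\mid Y_{\pa v})\le I_{\widetilde P^{(n)}}(X;Y\mid\ov Y')$, so it suffices to show $\widetilde P^{(n)}$ is $\varepsilon_n$-feasible for~\eqref{eq:poset-converse} with $\varepsilon_n\to0$ uniformly in the chosen family. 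Summing the channel constraint over $v$ gives $\widetilde P^{(n)}_Y=\frac1{|\mc C_n|}\sum_{v\in\mc C_n}P_{Y_v}$, whereas $\widetilde P^{(n)}_{Y'_i}=\frac1{|\mc C_n|}\sum_{v\in\mc C_n}P_{Y_{s_i^{-1}\cdot v}}$; since $v\mapsto s_i^{-1}\cdot v$ is a bijection, the two averages differ in $\ell_1$-norm by at most $2|\mc C_n\,\triangle\,s_i^{-1}\!\cdot\mc C_n|/|\mc C_n|$, and this symmetric difference has size $O(|\mc B_n|)$ because a node moved out of $\mc V_n$ by $s_i^{\pm1}$ has a parent or child outside $\mc V_n$ and hence lies in $\mc B_n$; by condition~2 this is $o(|\mc V_n|)$, so $P_Y=P_{Y'_i}$ holds in the limit. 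Similarly, if $\mc S\sim\mc T$ then the left-multiplication structure of the Cayley graph produces a group element $h$ (a bounded-length word in the generators) with $\pai{\mc T}{v}=\pai{\mc S}{h\cdot v}$ for every $v$, so $\widetilde P^{(n)}_{Y'_{\mc S}}$ and $\widetilde P^{(n)}_{Y'_{\mc T}}$ differ only through $\mc C_n\,\triangle\,h\!\cdot\mc C_n$, again of size within a bounded-radius neighborhood of $\mc B_n$ and hence $o(|\mc V_n|)$ by condition~2.

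To close the argument I would, for each $n$, pick a maximizing family in~\eqref{eq:poset-relax} and let $\widetilde P^{(n)}$ be the associated average; these points lie in the compact simplex $\Delta(\mc X\times\mc Y^d)$, so every subsequence has a sub-subsequence converging to some $P^{\star\star}$, which satisfies the (closed) equality constraints of~\eqref{eq:poset-converse} because the $\varepsilon_n$-violations vanish; continuity of the objective then gives $\limsup_n I_{\widetilde P^{(n)}}(X;Y\mid\ov Y')\le I_{P^{\star\star}}(X;Y\mid\ov Y')\le V$. Combined with the Jensen bound and $n\to\infty$ in~\eqref{eq:poset-relax}, this yields $C^{\textup{fb}}_{\textup{u.b.}}\le V$ and, together with the lower bound, proves~\eqref{eq:poset-converse}. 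I expect the hard part to be this upper bound, and within it the two pieces of $o(1)$ bookkeeping: bounding the translated index sets $\mc C_n\,\triangle\,s_i^{-1}\!\cdot\mc C_n$ and $\mc C_n\,\triangle\,h\!\cdot\mc C_n$ by a constant times $|\mc B_n|$ so that condition~2 applies (a F\o lner-type estimate, where one must also keep track of proximity to the initial nodes $\ms I$), and converting ``$\widetilde P^{(n)}$ is $\varepsilon_n$-feasible with value $V_n$'' into ``$\limsup_n V_n\le V$'' via compactness of the single-letter feasible set; the concavity fact and the lower bound are comparatively routine.
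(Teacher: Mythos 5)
Your proposal is correct and, for the substantive direction $C^{\textup{fb}}_{\textup{u.b.}}\le V$, follows essentially the paper's own argument: average the family $\{P_{X_v,Y_{\pa v}}\}_{v\in\mc C_n}$, apply concavity of $I(X;Y\mid\ov Y')$ with Jensen's inequality, show the average is feasible up to an $O(|\mc B_n|/|\mc C_n|)$ violation by comparing $\mc C_n$ with its Cayley-graph shifts $s_i^{-1}\cdot\mc C_n$ (and $s_{j_1}s_{i_1}^{-1}\cdot\mc C_n$ for the $\mc S\sim\mc T$ constraints), and pass to the limit via compactness (the paper uses the nested sets $D_{\delta_n}$ decreasing to $D_0$ where you use subsequential limits, an equivalent device). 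Your explicit proof of the reverse inequality $C^{\textup{fb}}_{\textup{u.b.}}\ge V$, obtained by plugging the constant single-letter maximizer into each instance of~\eqref{eq:poset-relax} and checking feasibility through Definition~\ref{def:poset-equiv-ordered-subsets}, is a sound and welcome addition: the paper's appendix only writes out the $\le$ direction, leaving this routine half of the claimed equality implicit.
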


\begin{proofsketch}
    Analogous to the proof of Proposition~\ref{prop:NOST-converse}, we make use of concavity together with Jensen's inequality to show that the average of the distributions $P_{X_v,Y_{\pa v}}, v \in \mc C_n$ has a larger objective function.
    Moreover, by the symmetry properties of the DAG, this average converges in the limit to a feasible point for the right-hand side.
    The full proof can be found in Appendix~\ref{sec:proof-thm:poset-single-letter}.
\end{proofsketch}

% On the right-hand side of \eqref{eq:poset-relax} is an average of mutual informations maximized over local distributions subject to consistency constraints.
% When the underlying poset-causal channel is approximately symmetric,
% the consistency conditions converge to the \emph{limiting constraint} provided in \eqref{eq:poset-converse}.
% Accordingly, the upper bound in~\eqref{eq:poset-relax} reduces to the single-letter concave maximization given above in Theorem~\ref{thm:poset-single-letter}.

% Let $\widehat P_{X , \ov Y'}$ be the maximizer of the right-hand side of \eqref{eq:poset-converse}.
% When the collection of subsets $\{v,\pa v\}_{v \in \mc C_n}$ of $\mc C_n$ has an ordering that satisfies the running-intersection property for all $n$, by~\cite[Lemma~3.14]{Lauritzen1996}, equality holds in \eqref{eq:poset-converse} since there exists a feasible global joint $P_{X_{\mc C_n},Y_{\mc V_n}}$ such that for the local joints we have $P_{X_v , Y_{\pa v}} = \widehat P_{X , \ov Y'}$ for all $v \in \mc C_n$, for all $n$.
% This implies the achievability of the single-letter expression in \eqref{eq:poset-converse} because it suggests that in the right-hand side of \eqref{eq:poset-relax}, each mutual information equals the optimal value -- as in the right-hand side of \eqref{eq:poset-converse}.

Finally, we instantiate Theorem~\ref{thm:poset-single-letter} for the NOST and the 2D~NOST channels.

\begin{corollary}[NOST] \label{corr:NOST}
    For the NOST channel, by Theorem~\ref{thm:poset-single-letter},
    \begin{equation} \label{eq:corr-NOST}
        C^{\textup{fb}}_{\textup{NOST}} \ \le \ \max_{ P_{X,Y'} } I( X ; Y \mid Y' ) \ \ \textup{s.t.} \ P_Y = P_{Y'} .
    \end{equation}
    % where we have equality since $\{t,t-1\}_{t \in [n]}$ satisfies the running-intersection property in the usual order.
\end{corollary}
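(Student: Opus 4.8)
The plan is to instantiate Theorem~\ref{thm:poset-single-letter} directly with the data of the NOST channel, since Example~\ref{ex:NOST-2} already records that this channel meets all the hypotheses. First I would recall that the NOST channel is realized on the DAG $(\ZZ_{\ge0},\{(n,n+1)\})$, which is an induced subgraph of the acyclic Cayley graph $\mathrm{Cay}(\ZZ,\{1\})$ with the single generator $s_1=1$, so that every communication node has in-degree $d=1$; that the boundary sets $\mc B_n=\{0,n\}$ have vanishing fraction $2/(n+1)\to0$; and that the channel constants are stationary and strictly positive, $P_{Y_n\mid X_n,Y_{n-1}}=Q_{Y\mid X,Y'}$. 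Hence all three conditions of Definition~\ref{def:approx-sym} hold and Theorem~\ref{thm:poset-single-letter} is applicable, giving $C^{\textup{fb}}_{\textup{NOST}}\le C^{\textup{fb}}_{\textup{u.b.}}$ with $C^{\textup{fb}}_{\textup{u.b.}}$ the program~\eqref{eq:poset-converse}.

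Next I would specialize~\eqref{eq:poset-converse} to the case $d=1$. With a single generator the tuple of parents $\ov Y'=(Y'_1)$ collapses to the single previous output $Y'$, so the optimization variable becomes $P_{X,Y'}$ and the objective $I(X;Y\mid\ov Y')$ becomes $I(X;Y\mid Y')$; likewise the first family of constraints $P_Y=P_{Y'_i}$ for all $i$ reduces to the single equality $P_Y=P_{Y'}$. It then remains to check that the second family $P_{Y'_{\mc S}}=P_{Y'_{\mc T}}$ for all $\mc S\sim\mc T$ is vacuous. The only ordered subsets of $[1]=\{1\}$ are $\emptyset$ and $\{1\}$, and two ordered subsets can be equivalent in the sense of Definition~\ref{def:poset-equiv-ordered-subsets} only if they have equal cardinality, since the tuples $\pai{\mc S}{u}$ and $\pai{\mc T}{v}$ must coincide and hence have the same length; therefore the only equivalences are the trivial $\emptyset\sim\emptyset$ and $\{1\}\sim\{1\}$, each of which imposes only an identity. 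Substituting these simplifications into~\eqref{eq:poset-converse} yields exactly~\eqref{eq:corr-NOST}.

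I do not expect a genuine obstacle in this corollary, as it is a pure instantiation; the only point requiring a moment's care is the last one, namely verifying that the equivalence relation $\sim$ produces no nontrivial constraint for $d=1$, which follows immediately from the cardinality observation above. One may optionally remark that combining this corollary with the achievability direction in Proposition~\ref{prop:NOST-achievability} recovers the tight single-letter characterization of Theorem~\ref{thm:NOST}, showing that the general upper bound of Theorem~\ref{thm:poset-single-letter} is exact in this one-dimensional case.
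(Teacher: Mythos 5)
Your proposal is correct and matches the paper's (implicit) argument: the paper proves Corollary~\ref{corr:NOST} simply by noting, via Example~\ref{ex:NOST-2}, that the NOST channel satisfies all conditions of Definition~\ref{def:approx-sym} and then specializing Theorem~\ref{thm:poset-single-letter} to $d=1$, exactly as you do. Your added check that the equivalence constraints $P_{Y'_{\mc S}}=P_{Y'_{\mc T}}$ are vacuous when $d=1$ is a correct and welcome detail the paper leaves unstated.
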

% For the NOST channel, the single-letter upper bound exactly characterizes its feedback capacity given in Theorem~\ref{thm:NOST}.
In Section~\ref{sec:NOST}, Theorem~\ref{thm:NOST} asserts that the upper bound in~\eqref{eq:corr-NOST} is tight,
i.e., the single-letter expression equals the NOST capacity.

\begin{corollary}[2D NOST] \label{corr:2DNOST}
    For the 2D NOST channel, Theorem~\ref{thm:poset-single-letter} yields the single-letter upper bound
    \begin{equation}
        C^{\textup{fb}}_{\textrm{2D}} \le \max_{P_{X,Y'_1,Y'_2}} I(X;Y \mid Y'_1,Y'_2) \ \ \textup{s.t.} \ P_Y = P_{Y'_1} = P_{Y'_2} ,
    \end{equation}
    where $Y'_1,Y'_2$ denote the outputs of the left neighbor and the bottom neighbor, respectively.
\end{corollary}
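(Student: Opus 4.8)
The statement to prove is Corollary~\ref{corr:2DNOST}, so the plan is simply to verify that the 2D NOST channel satisfies the hypotheses of Theorem~\ref{thm:poset-single-letter} (that it is an approximately symmetric poset-causal channel) and then to specialize the conclusion~\eqref{eq:poset-converse} to this case. By Example~\ref{ex:2DNOST-2}, the 2D NOST channel is already known to be approximately symmetric: it is an induced subgraph of $\mathrm{Cay}(\ZZ^2,\{(0,1),(1,0)\})$, which is acyclic since no nonempty product of the generators $(0,1),(1,0)$ is the identity of $\ZZ^2$; every communication node $(i,j)\in\ms C=\NN^2$ has in-degree $d=2$ (parents $(i-1,j)$ and $(i,j-1)$); the boundary fraction $4n/(n+1)^2\to 0$; and the channel constants are stationary with $P_{Y_{i,j}\mid X_{i,j},Y_{i-1,j},Y_{i,j-1}}=Q_{Y\mid X,Y'_1,Y'_2}$. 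Hence Theorem~\ref{thm:poset-single-letter} applies with $d=2$.

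It then remains to unwind the two families of constraints in~\eqref{eq:poset-converse} for $d=2$. Writing $\ov Y'=(Y'_1,Y'_2)$ with $Y'_1$ the left neighbor ($s_1^{-1}\cdot v$, i.e.\ $(i-1,j)$) and $Y'_2$ the bottom neighbor ($s_2^{-1}\cdot v$, i.e.\ $(i,j-1)$), the constraint $P_Y=P_{Y'_i}$ for all $i$ becomes $P_Y=P_{Y'_1}$ and $P_Y=P_{Y'_2}$, i.e.\ $P_Y=P_{Y'_1}=P_{Y'_2}$. For the marginal-consistency constraints $P_{Y'_{\mc S}}=P_{Y'_{\mc T}}$ over $\mc S\sim\mc T$ (Definition~\ref{def:poset-equiv-ordered-subsets}): the singleton $\{1\}$ indexes the left-neighbor output and $\{2\}$ the bottom-neighbor output, and these two index families are \emph{not} equivalent in a generic 2D grid because there is no node whose left parent coincides with another node's bottom parent under the required matching; so the only nontrivial identification among single-element subsets is the trivial one, and the only two-element subset $\{1,2\}$ is equivalent only to itself. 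Thus no constraint beyond $P_Y=P_{Y'_1}=P_{Y'_2}$ survives, and~\eqref{eq:poset-converse} collapses to the displayed bound.

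Finally I would assemble these observations: $C^{\textup{fb}}_{\textrm{2D}}\le C^{\textup{fb}}_{\textup{u.b.}}=\max_{P_{X,Y'_1,Y'_2}} I(X;Y\mid Y'_1,Y'_2)$ subject to $P_Y=P_{Y'_1}=P_{Y'_2}$, which is exactly the claimed expression.

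**Main obstacle.** The only genuinely delicate point is the bookkeeping of which ordered subsets $\mc S,\mc T\subseteq[2]$ are equivalent under Definition~\ref{def:poset-equiv-ordered-subsets} on the 2D grid: one must argue carefully that the parent-sets $\{Y'_1\text{ over node }u\}$ and $\{Y'_2\text{ over node }v\}$ never coincide (equivalently, that an output indexed as a ``left neighbor'' is never also indexed as a ``bottom neighbor'' in a way that would force an extra marginal constraint), so that the constraint list in~\eqref{eq:poset-converse} reduces precisely to $P_Y=P_{Y'_1}=P_{Y'_2}$ with no spurious additions or omissions. Everything else is a direct substitution into Theorem~\ref{thm:poset-single-letter} using Example~\ref{ex:2DNOST-2}.
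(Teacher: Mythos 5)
Your overall route is the same as the paper's: Corollary~\ref{corr:2DNOST} is indeed just an instantiation of Theorem~\ref{thm:poset-single-letter}, with the hypotheses checked exactly as in Example~\ref{ex:2DNOST-2}, so the verification of approximate symmetry and the substitution $d=2$ are fine. However, your bookkeeping of the equivalence relation of Definition~\ref{def:poset-equiv-ordered-subsets} contains a factual error. You claim $\{1\}\not\sim\{2\}$ because ``there is no node whose left parent coincides with another node's bottom parent.'' This is false on the 2D grid: the node $(i,j)$ is simultaneously the left parent of $u=(i+1,j)$ and the bottom parent of $v=(i,j+1)$, and both $u,v$ lie in $\ms C=\NN^2$, so $\pai{\{1\}}{u}=\pai{\{2\}}{v}$ and hence $\{1\}\sim\{2\}$. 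The corollary survives only because the resulting constraint $P_{Y'_1}=P_{Y'_2}$ is redundant given $P_Y=P_{Y'_1}$ and $P_Y=P_{Y'_2}$; your stated reason for discarding it is wrong even though the final constraint set is unchanged.

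The check that genuinely matters---and that your singleton argument suggests you have not carried out correctly---is that the two \emph{orderings} of the full parent pair, $(1,2)$ and $(2,1)$, are \emph{not} equivalent. If they were, Theorem~\ref{thm:poset-single-letter} would impose the exchangeability constraint $P_{Y'_1,Y'_2}=P_{Y'_2,Y'_1}$, which is not implied by $P_Y=P_{Y'_1}=P_{Y'_2}$ and would make the corollary as stated incorrect. Here the verification is: $\pai{(1,2)}{u}=\pai{(2,1)}{v}$ requires both $v=s_2s_1^{-1}\cdot u$ and $v=s_1s_2^{-1}\cdot u$, i.e.\ $s_1^2=s_2^2$ in $\ZZ^2$, i.e.\ $(0,2)=(2,0)$, which fails; so $(1,2)\not\sim(2,1)$ and no exchangeability constraint appears. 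With this corrected accounting---$\{1\}\sim\{2\}$ yielding only a redundant constraint, and $(1,2)\not\sim(2,1)$ yielding nothing---the constraint set in~\eqref{eq:poset-converse} reduces precisely to $P_Y=P_{Y'_1}=P_{Y'_2}$, and the corollary follows as you intended.
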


\section{Numerical Results} \label{sec:numerical}

In this section, we provide numerical results for the single-letter capacity upper bound given in Theorem~\ref{thm:poset-single-letter}.
We consider the binary NOST and 2D NOST channels for our simulations.
% We compare our single-letter bound with the more straightforward ``myopic'' upper bound, which was discussed earlier for the NOST channel in~\eqref{eq:NOST-myopic}.

\subsection{NOST Channel}

The NOST channel has a particular ``neighboring dependency'' structure where the output of each transmission affects the next transmission.
% \begin{figure}[ht]
%     \centering
%     \begin{tikzpicture}[scale=1.5,>=stealth',
%       init/.style={diamond,draw,fill=gray!25,minimum size=22pt,inner sep=1pt},
%       comm/.style={circle,draw,minimum size=27pt,inner sep=0pt}]
%         % initial node (0) – distinctive diamond, light fill
%         \node[comm] (n0) at (0,0) {0};
%         % communication nodes 1–3
%         \foreach \i in {1,...,2}{
%             \node[comm] (n\i) at (\i*1.2,0) {\i};
%         }
%         % arrows between consecutive nodes
%         \foreach \i in {0,...,1}{
%             \draw[->] (n\i) -- (n\the\numexpr\i+1\relax);
%         }
%         % arrow to continuation dots
%         \coordinate (cont) at ($(n2)+(1.2,0)$);
%         \draw[->] (n2) -- (cont);
%         \node at ($(cont)+(0.35,0)$) {$\cdots$};
%         % additional
%         \coordinate (n-coor) at ($(cont)+(0.7,0)+(1.2,0)$);
%         \node[comm] (n) at (n-coor) {n};
%         \node[comm] (n+1) at ($(n-coor)+(1.2,0)$) {n+1};
%         \node (cont2) at ($(n+1)+(1.55,0)$) {$\cdots$};
%         \draw[->] ($(cont)+(0.7,0)$) -- (n);
%         \draw[->] (n) -- (n+1);
%         \draw[->] (n+1) -- ($(n+1)+(1.2,0)$);
%     \end{tikzpicture}
%     \caption{Depiction of the NOST channel with neighboring dependencies.}
%     \label{fig:NOST-original}
% \end{figure}
% \noindent These neighboring dependencies are illustrated in the depiction of the NOST channel in Figure~\ref{fig:NOST-original}, where each node represents a time step and each arrow represents a dependency.
The effect of this dependency structure on the capacity is manifested
% in the NOST capacity single-letter characterization
in~\eqref{eq:corr-NOST}
through the constraint $P_Y = P_{Y'}$, which asserts that the maximal rate of communication is achieved through distributions that maintain \emph{output stationarity}.
If these neighboring dependencies are ignored, i.e., each transmission is isolated from the rest, the channel then would have the depiction in the figure below.
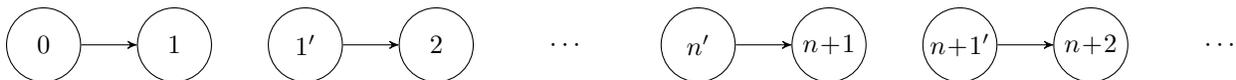
\begin{figure}[ht]
    \centering
    \begin{tikzpicture}[scale=1.45,>=stealth',
      every node/.style = {font=\small},
      init/.style={diamond,draw,fill=gray!25,minimum size=22pt,inner sep=1pt},
      comm/.style={circle,draw,minimum size=28pt,inner sep=0pt}]
      
        \node[comm] (n0) at (0,0) {$0$};
        \node[comm] (n1) at (1.2,0) {$1$};
        \node[comm] (n1') at (2.4,0) {$1'$};
        \node[comm] (n2) at (3.6,0) {$2$};

        \draw[->] (n0) -- (n1);
        \draw[->] (n1') -- (n2);
        
        \coordinate (cont) at ($(n2)+(1.2,0)$);
        \node at (cont) {$\cdots$};

        \node[comm] (n') at ($(cont)+(1.2,0)$) {$n'$};
        \node[comm] (n+1) at ($(cont)+(2.4,0)$) {$n\!+\!1$};
        \node[comm] (n+1') at ($(cont)+(3.6,0)$) {$n\!+\!1'$};
        \node[comm] (n+2) at ($(cont)+(4.8,0)$) {$n\!+\!2$};
        
        \draw[->] (n') -- (n+1);
        \draw[->] (n+1') -- (n+2);
        
        \coordinate (cont2) at ($(n+2)+(1.2,0)$);
        \node at (cont2) {$\cdots$};
    \end{tikzpicture}

    % \vspace{-0.1cm}
    
    \caption{Depiction of the NOST channel with each transmission isolated.}
    \label{fig:NOST-myopic}
\end{figure}

\noindent Assuming the lack of the neighboring dependencies amounts to the
% upper bound on the capacity that is given by
capacity upper bound
\begin{equation} \label{eq:NOST-myopic}
    C^{\textup{fb}}_{\textup{NOST}} \ \le \ \max_{ P_{X,Y'} } I( X ; Y \mid Y' ) . % \quad \textup{s.t. } P_Y = P_{Y'} .
\end{equation}
% Relaxing this constraint
% provides an upper bound on the feedback capacity, given by
% \begin{equation} \label{eq:NOST-myopic}
%     C^{\textup{fb}}_{\textup{NOST}} \ \le \ \max_{ P_{X,Y'} } I( X ; Y \mid Y' ) .
% \end{equation}
This upper bound represents a \emph{myopic} strategy at a single node where full freedom on the feedback from the previous output $Y'$ is assumed,
without necessarily considering the effect of the output in the long run.
% Consequently, we observe that the overall interdependency of the nodes can have significant implications on the capacity.
% In the next section, we define channels over more general graphs and analyze their capacities in a way that captures such dependencies between the nodes.
Denote the right-hand side of~\eqref{eq:NOST-myopic} by $C^{\text{myopic}}_{\textup{NOST}}$.
To observe the effect of the constraint $P_Y = P_{Y'}$,
% Recall from Corollary~\ref{corr:NOST} that the feedback capacity of the NOST channel is given by
% \begin{equation}
%     C^{\textup{fb}}_{\textup{NOST}} \ = \ \max_{ P_{X,Y'} } I( X ; Y \mid Y' ) \ \ \textup{s.t.} \ P_Y = P_{Y'} .
% \end{equation}
% By relaxing the constraint $P_Y = P_{Y'}$, one obtains the ``myopic'' upper bound $C^{\text{myopic}}_{\textup{NOST}}$ on the feedback capacity, given by
% \begin{equation}
%     C^{\textup{fb}}_{\textup{NOST}} \ \le \ C^{\text{myopic}}_{\textup{NOST}}
%     \ \coloneq \
%     \max_{ P_{X,Y'} } I( X ; Y \mid Y' ) .
% \end{equation}
we compare $C^{\textup{fb}}_{\textup{NOST}}$ and $C^{\text{myopic}}_{\textup{NOST}}$ for the particular NOST channel
% described
below.
% which we call the ``majority-vote Z-channel''.

The classical binary Z-channel with error probability $\alpha$ has transition probabilities shown in the figure below.
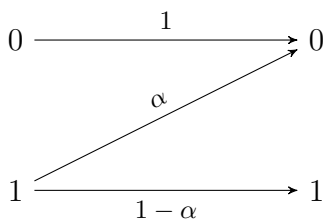
\begin{figure}[ht]
    \centering

    \begin{tikzpicture}[>=stealth',  % arrow tip style
                    node font=\large,  % bigger state labels
                    every node/.style={align=center},
                    x=4cm, y=2cm]      % overall scale
        %--- states ------------------------------------------------
        \node (L0) at (0,1) {$0$};
        \node (R0) at (1,1) {$0$};
        \node (L1) at (0,0) {$1$};
        \node (R1) at (1,0) {$1$};
        
        %--- arrows & labels --------------------------------------
        % 0 -> 0  (probability 1)
        \draw[->] (L0) -- node[above,pos=0.5,font=\small] {$1$} (R0);
        
        % 1 -> 1  (probability 1-α)
        \draw[->] (L1) -- node[below,pos=0.5,font=\small] {$1-\alpha$} (R1);
        
        % 1 -> 0  (probability α)
        \draw[->] (L1) -- node[pos=0.5,sloped,above,font=\small] {$\alpha$} (R0);
    \end{tikzpicture}

    % \vspace{-0.1cm}

    \caption{Classical binary Z-channel with error probability $\alpha$.}
    \label{fig:z}
\end{figure}

\noindent We first define a
%``majority''
variable $M$ that is given by
\begin{equation}
    M =
    \begin{cases}
        X & \text{if } X = Y' , \\
        \mathrm{Bernoulli}(1/2) & \text{if } X \neq Y' .
    \end{cases}
\end{equation}
% $M = \mathrm{maj}\{X,Y'\}$, where if $X \neq Y'$, then $M = \mathrm{Bernoulli}(1/2)$.
Then, $Y$ is the output of the Z-channel with error probability $\alpha$ and input $M$, where $\alpha$ is a noise parameter lying in $[0,1]$.
More explicitly, the channel is defined via the channel constants
\begin{equation} \label{eq:NOST-maj-vote-Q}
    Q_{Y \mid X , Y'} (0 \mid x , y') = 
    % \PP [Y=0 \mid X=x , Y' = y'] =
    \begin{cases}
        1 & \text{if } (x,y') = (0,0) , \\[-1ex]
        \frac{1+\alpha}{2} & \text{if } (x,y') = (0,1) \textrm{ or } (1,0) , \\[-1ex]
        % \frac{1+\alpha}{2} & \text{if } (x,y') = (1,0) \\[-1ex]
        \alpha & \text{if } (x,y') = (1,1) .
    \end{cases}
\end{equation}
We call this channel the ``1D majority-vote Z-channel''.
% For this channel,
The quantities
$C^{\textup{fb}}_{\textup{NOST}}$ and $C^{\text{myopic}}_{\textup{NOST}}$
for
% the majority-vote Z-channel
this channel
are plotted in Figure~\ref{fig:NOST-Z-plot} below
as the noise parameter $\alpha$ varies in $[0,1]$.
\begin{figure}[ht]
    \centering
    \includegraphics[width=0.6\linewidth]{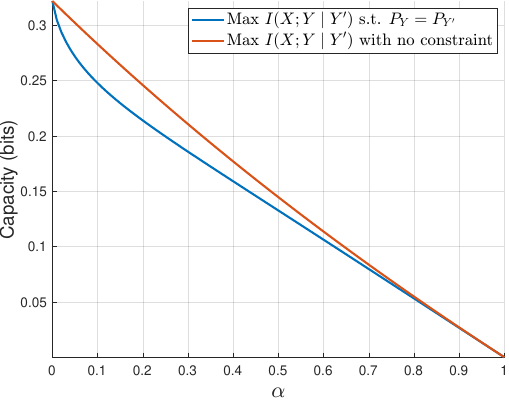}
    \caption{$C^{\textup{fb}}_{\textup{NOST}}$ and $C^{\text{myopic}}_{\textup{NOST}}$ for the NOST channel that is the 1D majority-vote Z-channel.}
    \label{fig:NOST-Z-plot}
\end{figure}

% \noindent
% We observe that the constraint $P_Y = P_{Y'}$ restricts the flow of information from $X$ to $Y$.
With no constraints, the distribution $P_{X,Y'}$ that maximizes $I(X;Y \mid Y')$ does not necessarily satisfy $P_Y = P_{Y'}$.
The constraint $P_Y = P_{Y'}$ thus results in a lower $I(X;Y \mid Y')$.
% restricts the flow of information from $X$ to $Y$.
Consequently, the myopic bound $C^{\text{myopic}}_{\textup{NOST}}$ is a strict upper bound on the capacity $C^{\textup{fb}}_{\textup{NOST}}$ for $\alpha \in (0,1)$,
as apparent in Figure~\ref{fig:NOST-Z-plot}.
For $\alpha=1$, \eqref{eq:NOST-maj-vote-Q} implies that the output is $Y=1$ deterministically, so the capacity vanishes.

\subsection{2D NOST Channel}

Recall from Corollary~\ref{corr:2DNOST} that the single-letter upper bound on the feedback capacity of the 2D NOST channel---to be denoted by $C^{\text{bound}}_{\textrm{2D}}$---is given by
\begin{equation}
    C^{\text{bound}}_{\textrm{2D}} = \max_{P_{X,Y'_1,Y'_2}} I(X;Y \mid Y'_1,Y'_2) \ \ \textup{s.t.} \ P_Y = P_{Y'_1} = P_{Y'_2} .
\end{equation}
By relaxing the constraint $P_Y = P_{Y'_1} = P_{Y'_2}$, one obtains the ``myopic'' upper bound $C^{\text{myopic}}_{\text{2D}}$ on the feedback capacity, given by
\begin{equation}
    C^{\text{myopic}}_{\textrm{2D}} = \max_{P_{X,Y'_1,Y'_2}} I(X;Y \mid Y'_1,Y'_2) .
\end{equation}
We compare $C^{\text{bound}}_{\textrm{2D}}$ and $C^{\text{myopic}}_{\textrm{2D}}$ for two different instants of 2D NOST channels: the 2D majority-vote Z-channel that is analogous to
% the one introduced for the NOST channel
above, and the ``$Y^\prime$-asymmetric channel''.

\subsubsection{2D majority-vote Z-channel}

Let $M = \mathrm{maj}\{X,Y'_1,Y'_2\}$ denote the majority bit among $X,Y'_1,Y'_2$.
The output $Y$ is the result of $M$ passing through the classical binary Z-channel given in Figure~\ref{fig:z}.
This results in the channel constants given by
\begin{equation}
    Q_{Y \mid X , Y'_1 , Y'_2} (0 \mid x , y'_1 , y'_2) =
    \begin{cases}
        1 & \text{if } (x , y'_1 , y'_2) \in \{ (0,0,0) , (0,0,1) ,(0,1,0) ,(1,0,0) \} , \\[-1ex]
        \alpha & \text{if } (x , y'_1 , y'_2) \in \{ (1,1,1) , (1,1,0) ,(1,0,1) ,(0,1,1) \} .
        % a & \text{if } (x , y'_1 , y'_2) = (0,1,0) \\
        % a & \text{if } (x , y'_1 , y'_2) = (0,1,1) \\
        % a & \text{if } (x , y'_1 , y'_2) = (1,) \\
        % a & \text{if } (x , y'_1 , y'_2) = (1,) \\
        % a & \text{if } (x , y'_1 , y'_2) = (1,) \\
        % a & \text{if } (x , y'_1 , y'_2) = (1,) \\
    \end{cases}
\end{equation}
For the 2D NOST channel defined as such,
$C^{\text{bound}}_{\textrm{2D}}$ and $C^{\text{myopic}}_{\textrm{2D}}$ are plotted Figure~\ref{fig:2DNOST-Z-plot} below as the noise parameter $\alpha$ varies in $[0,1]$.
\begin{figure}[ht]
    \centering
    \includegraphics[width=0.54\linewidth]{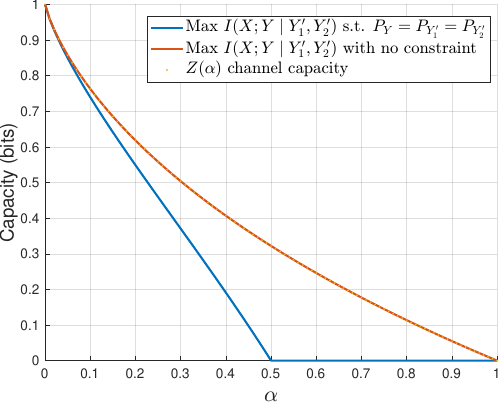}
    \caption{$C^{\textup{fb}}_{\text{2D}}$ and $C^{\text{myopic}}_{\text{2D}}$ for the 2D NOST channel that is the 2D majority-vote Z-channel.}
    \label{fig:2DNOST-Z-plot}
\end{figure}

\noindent When no constraint is present, one can set $(Y'_1,Y'_2)=(0,1)$ deterministically to get $M=X$.
This results in maximal $I(X;Y \mid Y'_1,Y'_2)$ and equals the capacity of the classical $Z$-channel, as seen in Figure~\ref{fig:2DNOST-Z-plot}.

Numerical observation suggests that
the optimal distribution $P_{X,Y'_1,Y'_2}$ does not satisfy $P_Y = P_{Y'_1} = P_{Y'_2}$,
hence the constraint $P_Y = P_{Y'_1} = P_{Y'_2}$ results in a lower $I(X;Y \mid Y'_1,Y'_2)$,
yielding a strictly better upper bound than that of the myopic strategy.

We observe that
when $\alpha \ge 0.5$,
the constraint $P_Y = P_{Y'_1} = P_{Y'_2}$ prevents any information flow, which results in $I(X;Y \mid Y'_1,Y'_2) = 0$.
The proof of this fact can be found in Appendix~\ref{sec:proof-alpha-ge-0.5}.

% \subsubsection{$Y^\prime$-asymmetric channel}
\subsubsection{\texorpdfstring{$Y^\prime$-asymmetric channel}{Y′-asymmetric channel}}

We demonstrate that the constraint $P_Y = P_{Y'_1} = P_{Y'_2}$ can be very restrictive when there is an asymmetry in the channel between the previous outputs $Y'_1$ and $Y'_2$.

Let the parameter $\alpha \in [0,1]$ represent the \emph{degree of asymmetry} between $Y'_1$, $Y'_2$.
We construct the channel such that $\alpha$ is the interpolation parameter between a $(Y'_1,Y'_2)$--symmetric and a \mbox{$(Y'_1,Y'_2)$--asymmetric} channel.
\begin{itemize}
    \item The \textbf{symmetric channel} is the noiseless majority-vote channel, i.e., $Y = \mathrm{maj}\{X,Y'_1,Y'_2\}$.
    \item The \textbf{asymmetric channel} is the channel defined by the logical operation $Y = (X \vee Y'_1) \wedge Y'_2$.
\end{itemize}
Notice that in the asymmetric channel $Y = (X \vee Y'_1) \wedge Y'_2$, for the maximal information flow through having $Y=X$, one would desire $Y'_1=0$ and $Y'_2=1$---hence the asymmetry.

Overall, we define the \emph{$Y^\prime$-asymmetric channel} as the interpolation of these two channels, i.e., % the output $Y$ is given by
\begin{equation}
    Y =
    \begin{cases}
      (X \vee Y'_1) \wedge Y'_2 & \text{with probability} \ \ \alpha , \\[-1ex]
      \mathrm{maj}\{X,Y'_1,Y'_2\} & \text{with probability} \ \ 1-\alpha .
    \end{cases}
\end{equation}
For the 2D NOST channel defined as such,
$C^{\text{bound}}_{\textrm{2D}}$ and $C^{\text{myopic}}_{\textrm{2D}}$ are plotted in Figure~\ref{fig:2DNOST-asym-plot} as the asymmetry parameter $\alpha$ varies in $[0,1]$.
% The capacity plot with respect to $\alpha$ is given below.
\begin{figure}[ht]
    \centering
    \includegraphics[width=0.55\linewidth]{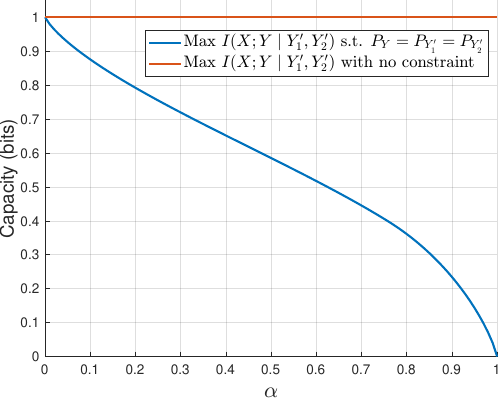}
    \caption{$C^{\textup{fb}}_{\text{2D}}$ and $C^{\text{myopic}}_{\text{2D}}$ for the 2D NOST channel that is the $Y^\prime$-asymmetric channel.}
    \label{fig:2DNOST-asym-plot}
\end{figure}

When there is no constraint, setting $(Y'_1,Y'_2) = (0,1)$ deterministically yields $Y=X$, hence\linebreak ${I(X;Y \mid Y'_1 , Y'_2) = 1}$.
But when the constraint $P_Y = P_{Y'_1} = P_{Y'_2}$ is present,
the maximal value of $I(X;Y \mid Y'_1 , Y'_2)$ gradually decreases to $0$
as the asymmetry increases with $\alpha$.
% flow of information from $X$ to $Y$ becomes more and more difficult.

For $\alpha = 1$, the constraint $P_Y = P_{Y'_1} = P_{Y'_2}$ results in $I(X;Y \mid Y'_1 , Y'_2) = 0$. This is because $Y = (X \vee Y'_1) \wedge Y'_2$ implies
\begin{equation} \label{eq:y'-asym-proof-1}
    P_Y(1) = P_{X,Y'_1,Y'_2}(0,1,1) + P_{X,Y'_1,Y'_2}(1,0,1) + P_{X,Y'_1,Y'_2}(1,1,1) .
\end{equation}
But at the same time, due to $P_Y= P_{Y'_2}$,
\begin{equation} \label{eq:y'-asym-proof-2}
    P_Y(1) = P_{Y'_2}(1) = P_{X,Y'_1,Y'_2}(0,0,1) + P_{X,Y'_1,Y'_2}(0,1,1) + P_{X,Y'_1,Y'_2}(1,0,1) + P_{X,Y'_1,Y'_2}(1,1,1) .
\end{equation}
\eqref{eq:y'-asym-proof-1} and \eqref{eq:y'-asym-proof-2} imply $P_{X,Y'_1,Y'_2}(0,0,1) = 0$.
This means that either the event $(Y'_1,Y'_2) = (0,1)$ has probability $0$,
or given $(Y'_1,Y'_2) = (0,1)$, $X$ equals $1$ deterministically.
In either case, $Y = (X \vee Y'_1) \wedge Y'_2$ forces that given $Y'_1,Y'_2$, $Y$ is independent of $X$, hence $I(X;Y \mid Y'_1 , Y'_2) = 0$.

\section{Conclusion and Future Directions} \label{sec:conclusions}

In this paper, we study feedback capacities of communication channels whose inputs and outputs are indexed by elements of a poset.
For the subclass of channels satisfying
Markovianity and
approximate symmetry, we obtain an upper bound on the feedback capacity via a single-letter concave maximization in Theorem~\ref{thm:poset-single-letter}.
% and give sufficient conditions for the upper bound being tight.
Our methodology for obtaining the single-letter upper bound is novel,
and relies on leveraging symmetry properties along with concavity.
Our framework recovers the single-letter expression for the capacity of the NOST channel with a novel proof methodology, and enables us to derive a \mbox{single-letter} upper bound for a new two-dimensional analogue of NOST.

The methodology we propose is capable of obtaining constant-sized bounds on capacity provided that certain Markovianity and approximate symmetry conditions hold solely in terms of the inputs and outputs of the channel.
As an example, our methodology does not apply to the trapdoor channel in Example~\ref{ex:trapdoor} due to the lack of such characteristics; more precisely, the channel lacking a simple description without the state variable.
It therefore would be of interest to understand whether our framework and methodology can be generalized to treat more general finite-state channels beyond the NOST channel.

Additional
future directions
% of research
include applying our methodology to modified variants of the setting,
% where some variants include
such as
channels with continuous input and output alphabets,
or finite-state channels defined through a state variable that follows a hidden Markov model.
It would also be of interest to see whether our methodology can yield capacities of channels without feedback.
% For enlarging the scope of our methodology, further exploring the interplay between feedback capacity and graphical models would also be of significance.

% In parallel to all of this is to study lossy compression, and whether similar approaches can be developed to understand single-letterizations of the rate distortion function.

% Future directions of research include considering channels over graph models that are not restricted to DAGs associated to graded posets. It would be of interest to see whether our methodology can yield capacities of channels without feedback.
% We also seek to understand how channels with hidden states could be studied with this methodology. An additional question that is parallel to all of these is whether channels with continuous alphabets (e.g., Gaussian) could be approached and studied analogously.

\newpage

\appendices

\section{Proof of Proposition~\ref{prop:NOST-formulation}} \label{sec:proof-prop:NOST-formulation}

% proof without X_0 or Y_0
\begin{silentproof}
    % By~\eqref{eq:C-lim}, we have that the NOST capacity equals
    % \begin{equation} \label{eq:NOST-t-0}
    %     C^{\textup{fb}}_{\textup{NOST}} = \lim_{n \to \infty} \max_{\left\{P_{X_t|X^{t-1},Y^{t-1}}\right\}_{t=1}^n} \frac{1}{n} I(X^n \to Y^n) .
    % \end{equation}
    
    The NOST capacity is given by~\eqref{eq:C-lim}.
    We show that for each $n$ we have
    \begin{equation} \label{eq:NOST-lemma}
        \max_{\left\{P_{X_t|X^{t-1},Y^{t-1}}\right\}_{t=1}^n} I(X^n \to Y^n)
        \ = \
        \max_{ \left\{ P_{ X_t \mid Y_{t-1} } \right\}_{t=1}^n } \sum_{t=1}^n I( X_t ; Y_t \mid Y_{t-1} ) .
    \end{equation}
    We first prove $\text{LHS} \le \text{RHS}$, and then $\text{LHS} \ge \text{RHS}$.

    \noindent \underline{(i) $\text{LHS} \le \text{RHS}$:}
    The directed information on the LHS of \eqref{eq:NOST-lemma} is given by\linebreak ${I(X^n \to Y^n) = \sum_{t=1}^n I( X^t ; Y_t \mid Y^{t-1} )}$.
    % where for $t=0$, the summand is $I(X_0;Y_0)=0$.
    For each of the rest of the summands, we have
    \begin{align}
        I( X^t ; Y_t \mid Y^{t-1} ) &= \underbrace{ H(Y_t \mid Y^{t-1}) }_{ \le H(Y_t \mid Y_{t-1}) } - \underbrace{ H( Y_t \mid X^t , Y^{t-1} ) }_{ = H( Y_t \mid X_t , Y_{t-1} ) } \\
        &\le H(Y_t \mid Y_{t-1}) - H( Y_t \mid X_t , Y_{t-1} )
        = I( X_t ; Y_t \mid Y_{t-1} ) .
    \end{align}
    Therefore, for their sum we get
    \begin{equation} \label{eq:NOST-mutual-info-decomp}
        \max_{\left\{P_{X_t|X^{t-1},Y^{t-1}}\right\}_{t=1}^n} I(X^n \to Y^n)
        \ \le \
        \max_{\left\{P_{X_t|X^{t-1},Y^{t-1}}\right\}_{t=1}^n} \sum_{t=1}^n I( X_t ; Y_t \mid Y_{t-1} ) .
    \end{equation}
    We will prove that the RHS of~\eqref{eq:NOST-mutual-info-decomp} is less than or equal to the RHS of~\eqref{eq:NOST-lemma}.
    For that, fix a set of variables $\left\{P'_{X_t|X^{t-1},Y^{t-1}}\right\}_{t=1}^n$ for the RHS of~\eqref{eq:NOST-mutual-info-decomp}.
    As in~\eqref{eq:joint-induced},
    these variables, together with the channel constants, induce a full-joint distribution $P'_{X^n,Y^n}$ through
    \begin{equation}
        P'_{X^n,Y^n} = \prod_{t=1}^n P'_{X_t|X^{t-1},Y^{t-1}} P_{Y_t|X^{t},Y^{t-1}}
        = \prod_{t=1}^n P'_{X_t|X^{t-1},Y^{t-1}} Q_{Y_t \mid X_t , Y_{t-1}} .
    \end{equation}
    % these variables induce the full-joint distribution $P'_{X^n,Y^n}$.
    Now construct the set of variables $\left\{ P_{ X_t \mid Y_{t-1} } \right\}_{t=1}^n$ for the RHS of~\eqref{eq:NOST-lemma} through
    % \begin{align}
    %     P_{X_t \mid Y_{t-1}} (x_t \mid y_{t-1}) &\coloneq P'_{X_t \mid Y_{t-1}} (x_t \mid y_{t-1}) \\
    %     &= \dfrac{ P'_{X_t , Y_{t-1}} (x_t , y_{t-1}) }{ P'_{Y_{t-1}} (y_{t-1}) } \quad \forall x^t \in \mc X^t , y^{t-1} \in \mc Y^{t-1} ; \ t=1,\ldots,n, \label{eq:NOST-defn-P-prime}
    %     % &= P'_{X_t \mid Y_{t-1}} (x_t \mid y_{t-1})
    %     % \quad \forall x^t \in \mc X^t , y^{t-1} \in \mc Y^{t-1} ; \ t=1,\ldots,n. \label{eq:NOST-defn-P-prime}
    % \end{align}
    \begin{align}
        P_{X_t \mid Y_{t-1}} &\coloneq P'_{X_t \mid Y_{t-1}}
        = \dfrac{ P'_{X_t , Y_{t-1}} }{ P'_{Y_{t-1}} } , \quad \ t=1,\ldots,n, \label{eq:NOST-defn-P-prime}
    \end{align}
    where both the numerator and the denominator in~\eqref{eq:NOST-defn-P-prime} are obtained by marginalizing the full-joint $P'_{X^n,Y^n}$.
    % \begin{align}
    %     P_{X_t \mid Y_{t-1}} (x_t \mid y_{t-1}) &\coloneq \sum_{x^{t-1},y^{t-2}} P'_{X_t|X^{t-1},Y^{t-1}} (x_t \mid x^{t-1},y^{t-1}) \\
    %     &= P'_{X_t \mid Y_{t-1}} (x_t \mid y_{t-1})
    %     \quad \forall x^t \in \mc X^t , y^{t-1} \in \mc Y^{t-1} ; \ t=1,\ldots,n. \label{eq:NOST-defn-P-prime}
    % \end{align}
    We show via induction that the joint distributions of $(X_t,Y_{t-1})$ for $t=1,\ldots,n$ are the same for the two sets of variables, i.e., $P_{X_t , Y_{t-1}} = P'_{X_t , Y_{t-1}}$ for $t=1,\ldots,n$.
    The base case holds as $P_{X_1} \coloneq P'_{X_1}$.
    For the inductive step, assume $P_{X_t , Y_{t-1}} = P'_{X_t , Y_{t-1}}$ for some $t$.
    % Note that for the marginal of $Y_{t-1}$ this implies $P_{Y_{t-1}} = P'_{Y_{t-1}}$.
    % Together with~\eqref{eq:NOST-defn-P-prime}, we have
    We have that $P_{X_{t+1} , Y_t}$ equals
    \begin{align}
        P_{X_{t+1} , Y_t} (x_{t+1} , y_t) &= P_{X_{t+1} \mid Y_t} (x_{t+1} \mid y_t) \cdot P_{Y_t} (y_t) \label{eq:NOST-joint-equal-0} \\
        & \hspace{-1.5cm} = P_{X_{t+1} \mid Y_t} (x_{t+1} \mid y_t) \sum_{ x_t , y_{t-1} } Q_{ Y_t \mid X_t , Y_{t-1} } ( y_t \mid x_t , y_{t-1} ) \cdot P_{ X_t \mid Y_{t-1} } ( x_t \mid y_{t-1} ) \cdot P_{Y_{t-1}} (y_{t-1}) \label{eq:NOST-joint-equal-1} \\
        & \hspace{-1.5cm} = P'_{X_{t+1} \mid Y_t} (x_{t+1} \mid y_t) \sum_{ x_t , y_{t-1} } Q_{ Y_t \mid X_t , Y_{t-1} } ( y_t \mid x_t , y_{t-1} ) \cdot P'_{ X_t \mid Y_{t-1} } ( x_t \mid y_{t-1} ) \cdot P'_{Y_{t-1}} (y_{t-1}) \label{eq:NOST-joint-equal-2} \\
        & \hspace{-1.5cm} = P'_{X_{t+1} \mid Y_t} (x_{t+1} \mid y_t) \cdot P'_{Y_t} (y_t) % \\
        = P'_{X_{t+1} , Y_t} (x_{t+1} , y_t) \qquad \forall x_{t+1} \in \mc X , y_t \in \mc Y ,
    \end{align}
    where in~\eqref{eq:NOST-joint-equal-1} we used
    % the equality of the channel constants,
    the channel output distributions in terms of the channel constants, and in~\eqref{eq:NOST-joint-equal-2} we used the construction of $\left\{P_{ X_t | Y_{t-1} }\right\}_{t=1}^n$ together with the inductive assumption.
    This concludes the induction and hence implies $P_{X_t , Y_{t-1}} = P'_{X_t , Y_{t-1}}$ for $t=1,\ldots,n$.
    By the channel constants, the same holds for the joint distributions of the triples $(Y_t , X_t , Y_{t-1})$ for $t=1,\ldots,n$, i.e.,
    \begin{align}
        P_{Y_t , X_t , Y_{t-1}} &= Q_{Y_t \mid X_t , Y_{t-1}} \cdot P_{X_t , Y_{t-1}} \\
        &= Q_{Y_t \mid X_t , Y_{t-1}} \cdot P'_{X_t , Y_{t-1}} = P'_{Y_t , X_t , Y_{t-1}} \label{eq:NOST-joint-equal-3}
    \end{align}
    and this in turn implies identical value of the conditional mutual information terms $I( X_t ; Y_t \mid Y_{t-1} )$ for $t=1,\ldots,n$.
    Therefore, for each feasible set of variables for the RHS of~\eqref{eq:NOST-mutual-info-decomp},
    there exists a feasible set of variables for the RHS of~\eqref{eq:NOST-lemma}
    with equal objective value.
    Hence, LHS $\le$ RHS in~\eqref{eq:NOST-lemma} follows.

    \noindent \underline{(ii) $\text{LHS} \ge \text{RHS}$:}
    We can lower bound the maximization on the LHS of \eqref{eq:NOST-lemma} by imposing constraints as
    \begin{equation} \label{eq:NOST-impose}
        \begin{aligned}
            \max_{\left\{P_{X_t|X^{t-1},Y^{t-1}}\right\}_{t=1}^n} I(X^n \to Y^n)
            \ \ge \
            &\max_{\left\{P_{X_t|X^{t-1},Y^{t-1}}\right\}_{t=1}^n} I(X^n \to Y^n) \\
            & \hspace{1.2cm} \textup{s.t. } \ P_{X_t|X^{t-1},Y^{t-1}} = P_{X_t \mid Y_{t-1}}\, , \ t=1,\ldots,n .
        \end{aligned}
    \end{equation}
    With the new constraints imposed, we have
    \begin{align}
        P_{Y_t \mid Y^{t-1}} (y_t \mid y^{t-1}) &= \sum_{x^t} P_{ Y_t \mid X^t , Y^{t-1} } ( y_t \mid x^t , y^{t-1} ) \cdot P_{ X^t \mid Y^{t-1} } ( x^t \mid y^{t-1} ) \notag \\
        &= \sum_{x^t} \underbrace{ P_{ Y_t \mid X^t , Y^{t-1} } ( y_t \mid x^t , y^{t-1} ) }_{ = P_{ Y_t \mid X_t , Y_{t-1} } ( y_t \mid x_t , y_{t-1} ) } \cdot \underbrace{ P_{ X_t \mid X^{t-1} , Y^{t-1} } ( x_t \mid x^{t-1} , y^{t-1} ) }_{ = P_{ X_t \mid Y_{t-1} } ( x_t \mid y_{t-1} ) } \cdot P_{ X^{t-1} \mid Y^{t-1} } ( x^{t-1} \mid y^{t-1} ) \notag \\
        &= \sum_{x_t} P_{ Y_t \mid X_t , Y_{t-1} } ( y_t \mid x_t , y_{t-1} ) \cdot P_{ X_t \mid Y_{t-1} } ( x_t \mid y_{t-1} ) \underbrace{ \sum_{x^{t-1}}  P_{ X^{t-1} \mid Y^{t-1} } ( x^{t-1} \mid y^{t-1} ) }_{=1} \notag \\
        &= P_{Y_t \mid Y_{t-1}} (y_t \mid y_{t-1}) .
    \end{align}
    For the directed information, this gives
    \begin{align}
        I(X^n \to Y^n) &= \sum_{t=1}^n I( X^t ; Y_t \mid Y^{t-1} )
        = \sum_{t=1}^n \underbrace{ H(Y_t \mid Y^{t-1}) }_{ = H(Y_t \mid Y_{t-1}) } - \underbrace{ H( Y_t \mid X^t , Y^{t-1} ) }_{ = H( Y_t \mid X_t , Y_{t-1} ) } \\
        &= \sum_{t=1}^n H(Y_t \mid Y_{t-1}) - H( Y_t \mid X_t , Y_{t-1} )
        = \sum_{t=1}^n I( X_t ; Y_t \mid Y_{t-1} ) .
    \end{align}
    As explained in the inductive steps of the LHS $\le$ RHS part above in \eqref{eq:NOST-joint-equal-0}--\eqref{eq:NOST-joint-equal-3},
    the sum\linebreak ${\sum_{t=1}^n I( X_t ; Y_t \mid Y_{t-1} )}$ is induced by the variables $\left\{ P_{ X_t \mid Y_{t-1} } \right\}_{t=1}^n$, hence
    \begin{equation}
        \begin{aligned}
            &\max_{\left\{P_{X_t|X^{t-1},Y^{t-1}}\right\}_{t=1}^n} I(X^n \to Y^n)
            \qquad = \qquad
            \max_{\left\{ P_{ X_t \mid Y_{t-1} } \right\}_{t=1}^n } \sum_{t=1}^n I( X_t ; Y_t \mid Y_{t-1} ) . \\[-0.5em]
            & \hspace{1.1cm} \textup{s.t. } \ P_{X_t|X^{t-1},Y^{t-1}} = P_{X_t \mid Y_{t-1}}\, , \\[-1.0em]
            & \hspace{2.25cm} t=1,\ldots,n
        \end{aligned}
    \end{equation}
    This, together with \eqref{eq:NOST-impose}, yields
    \begin{equation}
        \begin{aligned}
            \max_{\left\{P_{X_t|X^{t-1},Y^{t-1}}\right\}_{t=1}^n} I(X^n \to Y^n)
            \ \ge \
            \max_{ \left\{ P_{ X_t \mid Y_{t-1} } \right\}_{t=1}^n } \sum_{t=1}^n I( X_t ; Y_t \mid Y_{t-1} ) ,
        \end{aligned}
    \end{equation}
    and hence LHS $\ge$ RHS in~\eqref{eq:NOST-lemma}.

    As a result, LHS $\le$ RHS and LHS $\ge$ RHS imply LHS $=$ RHS in~\eqref{eq:NOST-lemma}, as desired.
    % This concludes the proof of Lemma~\ref{lem:NOST-simplification}.
    Normalizing and
    taking $n\to\infty$ in~\eqref{eq:NOST-lemma} yields Proposition~\ref{prop:NOST-formulation}.
\end{silentproof}

\section{Proof of Lemma~\ref{lem:NOST-indep-init-dist}} \label{sec:proof-lem:NOST-indep-init-dist}

\begin{silentproof}
    
    Denote by $\Delta = \Delta^{\mc Y}$ the set of probability distributions on the output alphabet $\mc Y$. Likewise, denote by $(\Delta_X)^Y = (\Delta^\mc X)^\mc Y = \left( P_{X \mid Y'}(\ \cdot \ \mid Y' = y') \in \Delta^\mc X \right)_{y' \in \mc Y}$ the set of all possible input distributions $P_{X \mid Y'}$, conditional on the previous output.

    Our goal is to use Banach fixed point theorem to show that for any $P_0$, the set of all feasible $P_{Y_n}$'s,
    as the input distributions range over all possible values,
    converges to a fixed subset $K \subset \Delta$ as $n \to \infty$, called the \emph{attracting fixed point}.

    % We define linear maps over $\Delta$ as follows. Upon fixing the input distribution $P_{X \mid Y'} \in (\Delta_X)^Y$, each $P_{Y'} \in \Delta$ is mapped to a certain $P_{Y} \in \Delta$ through the corresponding linear map $F_{ P_{X \mid Y'} } : \Delta \to \Delta$, i.e., $F_{ P_{X \mid Y'} } : P_{Y'} \mapsto P_{Y}$, where this $P_{Y}$ is given by
    For each fixed $P_{X \mid Y'}$, define $F_{P_{X \mid Y'}}$ sending $P_{Y'}$ to $P_Y$ given by:
    \begin{equation} \label{NOST-F-linear}
        % F_{ P_{X \mid Y'} } \left( P_{Y'} \right) =
        P_{Y}(\cdot) = \sum_{(x,y') \in \mc X \times \mc Y} Q_{Y \mid X,Y'}(\cdot \mid x,y') \cdot P_{X \mid Y'}(x \mid y') \cdot P_{Y'}(y') .
    \end{equation}

    Endow $\Delta$ with the total variation metric $d_{\textup{TV}}(\cdot,\cdot)$, defined as
    \begin{equation}
        d_{\textup{TV}} \left( \, (p_y)_{y\in\mc Y} , (q_y)_{y\in\mc Y} \, \right) \coloneq \frac{1}{2} \sum_{y \in \mc Y} |p_y - q_y| .
    \end{equation}
    Since $d_{\textup{TV}}$ is equivalent to the $\ell_1$ metric, the space $(\Delta , d_{\textup{TV}})$ is compact with diameter
    \begin{equation}
        \mathrm{diam}_{\textup{TV}} (\Delta) \coloneq \max_{p,q \in \Delta} d_{\textup{TV}} (p,q) = 1.
    \end{equation}
    % complete, as well as bounded\footnote{
    % More precisely, the diameter of $\Delta$ with respect to the metric $d_{\textup{TV}}$ is $\mathrm{diam}_{\textup{TV}} \coloneq \max_{p,q} d_{\textup{TV}} (p,q) = 1$.
    % }.
    We show that all maps $F_{ P_{X|Y'} }$ are contractions in $d_{\textup{TV}}$ with Lipschitz constant
    \begin{equation} \label{eq:alpha}
        \alpha \coloneq 1 - |\mc Y| \cdot \gamma
        % < 1 - \gamma
        < 1 ,%\footnote{If $|\mc Y| \cdot \gamma > 1$, one can choose an arbitrary smaller $\gamma' < \gamma$ that ensures $|\mc Y| \cdot \gamma' < 1$.}
    \end{equation}
    where $\gamma > 0$ is the minimal channel constant,
    % $Q_{Y \mid X , Y'} (\cdot \mid \cdot , \cdot)$.
    given by
    \begin{equation}
        \gamma \coloneq \min_{y,x,y'} Q_{Y \mid X , Y'} (y \mid x , y') > 0.
    \end{equation}

    \begin{lemma} \label{lem:NOST-F-contr}
        % There exists a universal $\alpha \in [0,1)$ such that
        For all $P_{X \mid Y'} \in (\Delta_X)^Y$, the map $F_{P_{X|Y'}}$ is a contraction with Lipschitz constant $\alpha$, i.e.,
        \begin{equation}
            d_{\textup{TV}} \left( F_{P_{X|Y'}}(P_Y) \, , \, F_{P_{X|Y'}}(R_Y) \right) \le \alpha \cdot d_{\textup{TV}} (P_Y , R_Y) \quad \forall P_{X \mid Y'} \in (\Delta_X)^Y , \ \forall P_Y , R_Y \in \Delta .
        \end{equation}
    \end{lemma}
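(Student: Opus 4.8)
The plan is to recognize $F_{P_{X\mid Y'}}$ as the linear map induced by a row-stochastic matrix on $\mc Y$ all of whose entries are bounded below by the minimal channel constant $\gamma>0$, and to read off the contraction factor directly from this uniform minorization — this is the elementary instance of Dobrushin's contraction estimate, so strict positivity of the channel constants is exactly what will do the work.

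First I would fix $P_{X\mid Y'}\in(\Delta_X)^Y$ and define the matrix $W\in\mathbb R_{>0}^{\mc Y\times\mc Y}$ by
\begin{equation}
  W(z\mid y')\coloneq\sum_{x\in\mc X}Q_{Y\mid X,Y'}(z\mid x,y')\,P_{X\mid Y'}(x\mid y'),
\end{equation}
so that \eqref{NOST-F-linear} becomes $F_{P_{X\mid Y'}}(P_Y)(z)=\sum_{y'}W(z\mid y')\,P_Y(y')$. Two elementary checks: summing $Q$ over $z$ first and then $P_{X\mid Y'}$ over $x$ gives $\sum_z W(z\mid y')=1$ for every $y'$, so $W$ is stochastic; and since every channel constant is at least $\gamma$ we get $W(z\mid y')\ge\sum_x\gamma\,P_{X\mid Y'}(x\mid y')=\gamma$ for all $z,y'$. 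Because for each fixed $(x,y')$ the $|\mc Y|$ numbers $Q_{Y\mid X,Y'}(\cdot\mid x,y')$ sum to $1$, we have $0<\gamma\le1/|\mc Y|$, hence $\alpha=1-|\mc Y|\gamma\in[0,1)$; it is precisely the strict positivity of $Q_{Y\mid X,Y'}$ that forces $\alpha<1$.

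The contraction bound then follows from a one-line cancellation. Given $P_Y,R_Y\in\Delta$, put $\delta(y')\coloneq P_Y(y')-R_Y(y')$, so $\sum_{y'}\delta(y')=0$, and write $W(z\mid y')=\gamma+\widetilde W(z\mid y')$ with $\widetilde W(z\mid y')\ge0$ and $\sum_z\widetilde W(z\mid y')=1-|\mc Y|\gamma=\alpha$. Linearity together with the zero-sum property annihilates the constant term:
\begin{equation}
  F_{P_{X\mid Y'}}(P_Y)(z)-F_{P_{X\mid Y'}}(R_Y)(z)=\sum_{y'}W(z\mid y')\,\delta(y')=\sum_{y'}\widetilde W(z\mid y')\,\delta(y').
\end{equation}
Taking absolute values, summing over $z$, exchanging the order of summation, and using $\sum_z\widetilde W(z\mid y')=\alpha$ yields $\sum_z\bigl|F_{P_{X\mid Y'}}(P_Y)(z)-F_{P_{X\mid Y'}}(R_Y)(z)\bigr|\le\alpha\sum_{y'}|\delta(y')|$, which is $d_{\textup{TV}}\bigl(F_{P_{X\mid Y'}}(P_Y),F_{P_{X\mid Y'}}(R_Y)\bigr)\le\alpha\,d_{\textup{TV}}(P_Y,R_Y)$ after dividing by $2$; since the bound is uniform in $P_{X\mid Y'}$, the lemma is proved. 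There is no substantive obstacle here: the only points requiring a little care are the factor-of-two bookkeeping in the definition of $d_{\textup{TV}}$ and confirming that $\alpha\in[0,1)$, and both reduce to the observation that $0<\gamma\le1/|\mc Y|$.
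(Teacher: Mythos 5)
Your proof is correct and takes essentially the same approach as the paper: you build the same stochastic matrix $A$ (your $W$) induced by $P_{X\mid Y'}$ and the channel constants, observe that every entry is at least $\gamma$, and conclude the contraction with constant $\alpha = 1-|\mc Y|\gamma$. The only difference is that the paper invokes a cited result on Markov kernels with entrywise minorization for this last step, whereas you prove it directly via the decomposition $W=\gamma+\widetilde W$ and the zero-sum cancellation, which merely makes the argument self-contained.
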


    \begin{proof}
        % We first show that each $F_{P_{X|Y'}}$ on its own is a contraction with some constant $\alpha(P_{X|Y'})$.
        % Fix a $P_{X|Y'} \in (\Delta_X)^Y$, and consider the square matrix $[A_{y,y'}]_{y,y'\in\mc Y}$ given by
        % \begin{equation}
        %     A = \left[ \sum\limits_{x\in\mc X} Q_{Y|X,Y'}(y|x,y') \cdot P_{X|Y'}(x|y') \right]_{y,y'\in\mc Y} .
        % \end{equation}
        % Notice that $A$ is the matrix that maps (in vector form) each $P_{Y'}$ to the corresponding ${P_Y = F_{P_{X|Y'}}(P_{Y'})}$.
        % By Assumption~\ref{}, each entry of $A$ is $\ge \gamma$. This implies that $A$ is a contraction with respect to $d_{\textup{TV}}$~\cite[Prop.~5]{Lalley_MC_LecNotes}, and hence there exists a $\alpha(P_{X|Y'}) \in [0,1)$ such that we have
        % \begin{equation}
        %     d_{\textup{TV}} \left( F_{P_{X|Y'}}(P_Y) \, , \, F_{P_{X|Y'}}(R_Y) \right) \le \alpha(P_{X|Y'}) \cdot d_{\textup{TV}} (P_Y , R_Y) \quad \forall P_Y , R_Y \in \Delta .
        % \end{equation}
        Fix a $P_{X|Y'} \in (\Delta_X)^Y$, and consider the square matrix $[A_{y,y'}]_{y,y'\in\mc Y}$ given by
        \begin{equation} \label{eq:NOST-matrix-A}
            A = \left[ \sum\limits_{x\in\mc X} Q_{Y \mid X,Y'}(y \mid x,y') \cdot P_{X \mid Y'}(x \mid y') \right]_{y,y'\in\mc Y} .
        \end{equation}
        Notice that $A$ is the matrix that maps (in vector form) each $P_{Y'}$ to the corresponding ${P_Y = F_{P_{X \mid Y'}}(P_{Y'})}$.
        By assumption, each entry of $A$ is $\ge \gamma$. This implies~\cite[Prop.~5]{Lalley_MC_LecNotes} that $A$ is a contraction with respect to $d_{\textup{TV}}$ with Lipschitz constant $\alpha = 1 - |\mc Y| \cdot \gamma$.
        % As these two values are independent of the fixed $P_{X|Y'}$ in the beginning, Lemma~\ref{lem:NOST-F-contr} follows.
        This yields Lemma~\ref{lem:NOST-F-contr}.
    \end{proof}
    
    For a given $P_{Y'} \in \Delta$, we are interested in the values its image $P_{Y} = F_{P_{X|Y'}} ( P_{Y'} )$ can take, as $P_{X|Y'}$ ranges over all of $(\Delta_X)^Y$. To that end, we extend $F$ to the set-valued map $\mc F : \Delta \to 2^\Delta$ where the image is the set of all $F_{P_{X|Y'}} ( P_{Y'} )$ as $P_{X|Y'}$ ranges over all of $(\Delta_X)^Y$, i.e.,
    \begin{equation} \label{eq:NOST-mc-F-defn}
        \mc F( P_{Y'} )
        := \left\{ F_{P_{X|Y'}} (P_{Y'} ) \ \left| \ P_{X|Y'} \in (\Delta_X)^Y \right. \right\} .
    \end{equation}

    We further extend $\mc F$ to map subsets of $\Delta$ instead of elements. Denote by $\mathrm{C}(\Delta)$ the set of closed subsets of $\Delta$. Recall that $\Delta$ is bounded in the metric $d_{\textup{TV}}$, so any closed subset of $\Delta$ is a closed and bounded subset.
    We extend $\mc F$ to the set-valued map $\overline{\mc F} : \mathrm{C}(\Delta) \to \mathrm{C}(\Delta)$, defined by
    \begin{equation} \label{eq:NOST-overline-F-defn}
        \overline{\mc F} (S) := \bigcup_{P_{Y'} \, \in \, S} \mc F( P_{Y'} ) \quad \forall S \in \mathrm{C}(\Delta) ,% \footnote{
        % The fact that the RHS is closed and bounded follows from~\cite[Lemma~3.11]{widder2009fixed}.
        % }
    \end{equation}
    where the RHS is closed and bounded follows from~\cite[Lemma~3.11]{widder2009fixed}.
    The map $\overline{\mc F}$ enables us to represent the set of possible values of $P_{Y_t}$ as all the input distributions range over $(\Delta_X)^Y$. For an arbitrary $P_0 \in \Delta$, the set of possible $P_{Y_1}$'s is given by $\mc F( P_0 ) = \overline{\mc F} \left( \{ P_0 \} \right)$,
    % then the set of all possible $P_{Y_2}$'s is given by $\overline{\mc F} \left( \overline{\mc F} \left( \{ P_0 \} \right) \right) = \overline{\mc F}^2 \left( \{ P_0 \} \right)$, and continuing in this manner,
    and analogously, the set of possible $P_{Y_t}$'s is given by $\overline{\mc F}^t \left( \{ P_0 \} \right)$, for all $t\in\NN$.

    The spaces $2^{\Delta}$ and $\mathrm C(\Delta)$ are endowed with the Hausdorff distance $d_\textup{H}$ that is induced by the total variation metric $d_{\textup{TV}}$ on $\Delta$, as follows. Let $S_1,S_2 \in 2^{\Delta}$. For each $x\in S_1$, its distance to the set $S_2$ is given by $d_{\textup{TV}}(x,S_2) := \inf_{y\in S_2} d_{\textup{TV}}(x,y)$. Now denote
    \begin{equation}
        D(S_1||S_2) := \sup_{x\in S_1} d_{\textup{TV}}(x,S_2) .
    \end{equation}
    In words, $D(S_1||S_2)$ is the largest of the distances of the points in $S_1$ to the set $S_2$. The Hausdorff distance between $S_1$ and $S_2$ is then defined as the largest among $D(S_1||S_2)$ and $D(S_2||S_1)$, i.e.,
    \begin{align}
        d_{\textup{H}} (S_1 , S_2) &:= \max \left\{ D(S_1||S_2) \ , \ D(S_2||S_1) \right\} . % \\
        % &\; = \max \left\{ \sup_{x\in S_1}d_{\textup{TV}}(x,S_2) \ , \ \sup_{y\in S_2}d_{\textup{TV}}(y,S_1) \right\} \\
        % &\; = \max \left\{ \sup_{x\in S_1} \inf_{y\in S_2} d_{\textup{TV}}(x,y) \ , \ \sup_{y\in S_2} \inf_{x\in S_1} d_{\textup{TV}}(y,x) \right\} .
    \end{align}

    Next, we prove that $\mc F$ is a contraction in $d_{\textup{H}}$ with the same Lipschitz constant $\alpha$ in Lemma~\ref{lem:NOST-F-contr}.

    \begin{lemma} \label{lem:NOST-mc-F-contraction}
        $\mc F$ is a contraction with the same Lipschitz constant $\alpha$ given in~\eqref{eq:alpha}, i.e., we have
        \begin{equation}
            d_{\textup{H}} \left( \, \mc F(P_Y) \, , \, \mc F(R_Y) \, \right) \le \alpha \cdot d_{\textup{TV}}(P_Y , R_Y) \quad \forall P_Y , R_Y \in \Delta .
        \end{equation}
    \end{lemma}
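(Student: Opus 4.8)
The plan is to reduce the Hausdorff-distance bound to the pointwise contraction already established in Lemma~\ref{lem:NOST-F-contr}, via a ``same-conditional'' coupling. Recall that $d_{\textup{H}}(\mc F(P_Y),\mc F(R_Y))$ is the larger of the two one-sided quantities $D(\mc F(P_Y)\|\mc F(R_Y))$ and $D(\mc F(R_Y)\|\mc F(P_Y))$, so it suffices to bound each of them by $\alpha\cdot d_{\textup{TV}}(P_Y,R_Y)$; by the symmetry of the roles of $P_Y$ and $R_Y$ it is in fact enough to treat one of them, say $D(\mc F(P_Y)\|\mc F(R_Y))$.

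First I would take an arbitrary point of $\mc F(P_Y)$. By the definition~\eqref{eq:NOST-mc-F-defn}, such a point is $F_{P_{X|Y'}}(P_Y)$ for some conditional distribution $P_{X|Y'}\in(\Delta_X)^Y$. The crucial observation is that $F_{P_{X|Y'}}(R_Y)$ lies in $\mc F(R_Y)$ as well---it is the image of $R_Y$ under the \emph{same} conditional---so
\[
d_{\textup{TV}}\bigl(F_{P_{X|Y'}}(P_Y),\,\mc F(R_Y)\bigr)
\;\le\;
d_{\textup{TV}}\bigl(F_{P_{X|Y'}}(P_Y),\,F_{P_{X|Y'}}(R_Y)\bigr)
\;\le\;
\alpha\cdot d_{\textup{TV}}(P_Y,R_Y),
\]
where the last inequality is exactly Lemma~\ref{lem:NOST-F-contr}. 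Since the right-hand side is independent of the choice of $P_{X|Y'}$, taking the supremum over all points of $\mc F(P_Y)$ gives $D(\mc F(P_Y)\|\mc F(R_Y))\le\alpha\cdot d_{\textup{TV}}(P_Y,R_Y)$. Interchanging $P_Y$ and $R_Y$ yields the same bound for $D(\mc F(R_Y)\|\mc F(P_Y))$, and taking the maximum of the two proves the claim.

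I do not expect a genuine obstacle here: the entire argument is the realization that one conditional distribution $P_{X|Y'}$ can be used simultaneously on both sides, which collapses the set-valued comparison to the single-map contraction of Lemma~\ref{lem:NOST-F-contr}. The only points needing a line of care are that $\mc F(P_Y)$ and $\mc F(R_Y)$ are nonempty (immediate, as $(\Delta_X)^Y\neq\emptyset$) and that the infimum defining $d_{\textup{TV}}(x,S)$ need not be attained---harmless, since we only ever upper-bound it by the distance to one specific element of $S$. The same one-line coupling moreover extends verbatim to show that the set map $\overline{\mc F}$ of~\eqref{eq:NOST-overline-F-defn} is also an $\alpha$-contraction on $(\mathrm{C}(\Delta),d_{\textup{H}})$, which is what the subsequent Banach fixed-point step requires.
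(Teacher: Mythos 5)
Your proposal is correct and follows essentially the same argument as the paper: apply the same conditional $P_{X|Y'}$ to both $P_Y$ and $R_Y$, invoke the single-map contraction of Lemma~\ref{lem:NOST-F-contr} to bound the point-to-set distance, take the supremum, and conclude by symmetry.
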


    \begin{proof}
        Consider arbitrary $P_Y , R_Y \in \Delta$,
        and fix any $T \in \mc F(P_Y)$. By~\eqref{eq:NOST-mc-F-defn}, we have $T = F_{P_{X|Y'}} (P_Y)$ for some $P_{X|Y'} \in (\Delta_X)^Y$.
        Note that for this particular $P_{X|Y'}$, we also have that $F_{P_{X|Y'}} (R_Y) \in \mc F(R_Y)$.
        This gives
        \begin{align}
            d_{\textup{TV}}( T , \mc F(R_Y) ) &:= \inf_{U\in\mc F(R_Y)} d_{\textup{TV}}(T,U) % \\
            \; = \inf_{U\in\mc F(R_Y)} d_{\textup{TV}} \left( F_{P_{X|Y'}} (P_Y) \, , \, U \right) \\
            &\; \le d_{\textup{TV}} \left( F_{P_{X|Y'}} (P_Y) \, , \, F_{P_{X|Y'}} (R_Y) \right) % \\
            \le \alpha \cdot d_{\textup{TV}}(P_Y , R_Y) .
        \end{align}
        Since $T \in \mc F(P_Y)$ was arbitrary, we conclude that
        \begin{align}
            D(\mc F(P_Y) \ || \ \mc F(R_Y)) &\coloneq \sup_{T\in\mc F(P_Y)} d_{\textup{TV}}( T , \mc F(R_Y) ) % \\
            \le \alpha \cdot d_{\textup{TV}}(P_Y , R_Y) .
        \end{align}
        Interchanging the roles of $P_Y$ and $R_Y$, we obtain
        % \begin{equation}
        %     D(\mc F(R_Y) \ || \ \mc F(P_Y)) \le \alpha \cdot d_{\textup{TV}}(P_Y , R_Y) .
        % \end{equation}
        % These together imply
        % \begin{equation}
        %     d_{\textup{H}} ( \mc F(P_Y) , \mc F(R_Y) ) \le \alpha \cdot d_{\textup{TV}}(P_Y , R_Y) ,
        % \end{equation}
        % % which proves that $\mc F$ is a contraction,
        $d_{\textup{H}} ( \mc F(P_Y) , \mc F(R_Y) ) \le \alpha \cdot d_{\textup{TV}}(P_Y , R_Y)$
        as desired, hence Lemma~\ref{lem:NOST-mc-F-contraction} follows.
    \end{proof}

    Next, we show that the set-valued extension $\overline{\mc F}$ of $\mc F$, defined in~\eqref{eq:NOST-overline-F-defn}, is also a contraction with the same Lipschitz constant $\alpha$.
    \begin{lemma} \label{lem:NOST-overline-F-contraction}
        $\overline{\mc F}$ is a contraction with the same Lipschitz constant $\alpha \in [0,1)$, i.e., we have
        \begin{equation}
            d_{\textup{H}} \left( \overline{\mc F} (S_1) , \overline{\mc F} (S_2) \right) \le \alpha \cdot d_{\textup{H}} (S_1 , S_2) \quad \forall S_1,S_2 \in \mathrm{C}(\Delta) .
        \end{equation}
    \end{lemma}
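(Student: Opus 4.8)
The plan is to bootstrap the contraction property of $\overline{\mc F}$ from the one already established for $\mc F$ in Lemma~\ref{lem:NOST-mc-F-contraction}, using only the definition of the Hausdorff distance and compactness of the relevant sets. Fix $S_1,S_2 \in \mathrm{C}(\Delta)$ and write $\delta := d_{\textup{H}}(S_1,S_2)$. Since $d_{\textup{H}}(\overline{\mc F}(S_1),\overline{\mc F}(S_2)) = \max\{ D(\overline{\mc F}(S_1)||\overline{\mc F}(S_2)) , D(\overline{\mc F}(S_2)||\overline{\mc F}(S_1)) \}$, it suffices to bound $D(\overline{\mc F}(S_1)||\overline{\mc F}(S_2)) \le \alpha\delta$ and then interchange the roles of $S_1$ and $S_2$.

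First I would take an arbitrary $z \in \overline{\mc F}(S_1)$; by~\eqref{eq:NOST-overline-F-defn} there is some $x \in S_1$ with $z \in \mc F(x)$. Because $\Delta$ is compact (in $d_{\textup{TV}}$) and $S_2$ is closed, $S_2$ is compact, so the infimum defining $d_{\textup{TV}}(x,S_2)$ is attained: there is $y \in S_2$ with $d_{\textup{TV}}(x,y) = d_{\textup{TV}}(x,S_2) \le D(S_1||S_2) \le \delta$. Applying Lemma~\ref{lem:NOST-mc-F-contraction} to $x,y$ gives $d_{\textup{H}}(\mc F(x),\mc F(y)) \le \alpha\, d_{\textup{TV}}(x,y) \le \alpha\delta$, hence in particular $D(\mc F(x)||\mc F(y)) \le \alpha\delta$. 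Now $\mc F(y)$ is the image of the compact parameter set $(\Delta_X)^Y$ under the continuous map $P_{X|Y'} \mapsto F_{P_{X|Y'}}(P_{Y'})$ (continuity is immediate from the bilinear formula~\eqref{NOST-F-linear}), so $\mc F(y)$ is compact; therefore the infimum defining $d_{\textup{TV}}(z,\mc F(y))$ is attained and there is $z' \in \mc F(y) \subseteq \overline{\mc F}(S_2)$ with $d_{\textup{TV}}(z,z') \le D(\mc F(x)||\mc F(y)) \le \alpha\delta$. Thus $d_{\textup{TV}}(z,\overline{\mc F}(S_2)) \le \alpha\delta$.

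Taking the supremum over $z \in \overline{\mc F}(S_1)$ yields $D(\overline{\mc F}(S_1)||\overline{\mc F}(S_2)) \le \alpha\delta$; swapping $S_1 \leftrightarrow S_2$ gives the companion inequality, and taking the maximum gives $d_{\textup{H}}(\overline{\mc F}(S_1),\overline{\mc F}(S_2)) \le \alpha\delta = \alpha\, d_{\textup{H}}(S_1,S_2)$, with $\alpha \in [0,1)$ as in~\eqref{eq:alpha}, which is the claim.

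I do not expect a serious obstacle in this step; the argument is the standard fact that the ``union extension'' of an $\alpha$-Lipschitz set-valued map is itself $\alpha$-Lipschitz in Hausdorff distance. The only points requiring care are the attainment of the two infima (handled by compactness of $S_2$ and of each $\mc F(y)$, the latter being a continuous image of the compact set $(\Delta_X)^Y$) and the bookkeeping that translates bounds on $D(\cdot||\cdot)$ into bounds on $d_{\textup{H}}$. Should one wish to avoid invoking compactness, the same argument goes through with an extra slack $\varepsilon > 0$ inserted into each infimum followed by letting $\varepsilon \to 0$.
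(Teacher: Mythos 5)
Your proof is correct and follows essentially the same route as the paper: reduce the Hausdorff contraction of $\overline{\mc F}$ to Lemma~\ref{lem:NOST-mc-F-contraction} by matching each point of $\overline{\mc F}(S_1)$, via its preimage $x \in S_1$, to a nearby $y \in S_2$ with $d_{\textup{TV}}(x,y) \le D(S_1\|S_2)$, and then swapping the roles of $S_1$ and $S_2$. The only cosmetic difference is that the paper carries out this matching through a chain of $\sup$--$\inf$ manipulations using the max--min inequality, whereas you select explicit (near-)minimizers via compactness, which your closing remark about adding $\varepsilon$-slack correctly shows is not essential.
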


    \begin{proof}
        Take any $S_1,S_2 \in \mathrm{C}(\Delta)$. We first show that ${D \left( \overline{\mc F} (S_1) \ || \ \overline{\mc F} (S_2) \right) \le \alpha \cdot D(S_1 || S_2)}$, and then conclude by interchanging $S_1$ and $S_2$.
        We have
        \begin{align}
            D \left( \overline{\mc F} (S_1) \ || \ \overline{\mc F} (S_2) \right)
            &= \sup_{P_Y \in \overline{\mc F} (S_1)} d_{\textup{TV}}( P_Y , \overline{\mc F} (S_2) )
            && ( \text{definition of } D ) \\
            % \bigg( \text{definition of } D \left( \cdot || \overline{\mc F} (S_2) \right) \bigg) \\
            &= \sup_{P_Y \in \overline{\mc F} (S_1)} \inf_{R_Y \in \overline{\mc F} (S_2)} d_{\textup{TV}}(P_Y,R_Y) && (\text{definition of } d_{\textup{TV}} ) \\
            &= \sup_{P_Y \in \overline{\mc F} (S_1)} \inf_{R_{Y'} \in S_2} d_{\textup{TV}} \left( P_Y,\mc F(R_{Y'}) \right) && (\text{definition of } \overline{\mc F} ) \\
            &= \sup_{P_{Y'} \in S_1} \sup_{P_Y \in \mc F(P_{Y'})} \inf_{R_{Y'} \in S_2} d_{\textup{TV}} \left( P_Y,\mc F(R_{Y'}) \right) && (\text{definition of } \overline{\mc F} ) \\
            &\le \sup_{P_{Y'} \in S_1} \inf_{R_{Y'} \in S_2} \sup_{P_Y \in \mc F(P_{Y'})} d_{\textup{TV}} \left( P_Y,\mc F(R_{Y'}) \right) && (\text{max-min inequality}) \\
            &= \sup_{P_{Y'} \in S_1} \inf_{R_{Y'} \in S_2} D \left( \mc F(P_{Y'}) \ || \ \mc F(R_{Y'}) \right) && ( \text{definition of } D ) \\
            &\le \sup_{P_{Y'} \in S_1} \inf_{R_{Y'} \in S_2} d_{\textup{H}} \left( \mc F(P_{Y'}) \, , \,\mc F(R_{Y'}) \right) && ( \text{definition of } d_{\textup{H}} ) \\
            &\le \sup_{P_{Y'} \in S_1} \inf_{R_{Y'} \in S_2} \alpha \cdot d_{\textup{TV}} (P_{Y'},R_{Y'}) && (\text{by Lemma~\ref{lem:NOST-mc-F-contraction}}) \\
            &= \alpha \cdot \sup_{P_{Y'} \in S_1} d_{\textup{TV}} (P_{Y'},S_2)
            = \alpha \cdot D(S_1 || S_2) .
            && (\text{definitions of } d_{\textup{TV}} , D ) % \\
            % &= \alpha \cdot D(S_1 || S_2) . && ( \text{definition of } D ) \label{eq:NOST-overline-F-contr-1}
        \end{align}
        Interchanging $S_1$ and $S_2$, we obtain
       %  \begin{equation} \label{eq:NOST-overline-F-contr-2}
       %      D \left( \overline{\mc F} (S_2) \ || \ \overline{\mc F} (S_1) \right) \le \alpha \cdot D(S_2 || S_1) .
       %  \end{equation}
       % ~\eqref{eq:NOST-overline-F-contr-1} and~\eqref{eq:NOST-overline-F-contr-2} imply that
        \begin{equation}
            \max \left\{ D \left( \overline{\mc F} (S_1) \ || \ \overline{\mc F} (S_2) \right) \, , \, D \left( \overline{\mc F} (S_2) \ || \ \overline{\mc F} (S_1) \right) \right\}
            \ \le \
            \alpha \cdot \max \left\{ D(S_1 || S_2) , D(S_2 || S_1) \right\} ,
        \end{equation}
        which is equivalent to
        $d_{\textup{H}} \left( \overline{\mc F} (S_1) , \overline{\mc F} (S_2) \right) \le \alpha \cdot d_{\textup{H}} (S_1 , S_2)$.
        % \begin{equation}
        %     d_{\textup{H}} \left( \overline{\mc F} (S_1) , \overline{\mc F} (S_2) \right) \le \alpha \cdot d_{\textup{H}} (S_1 , S_2) ,
        % \end{equation}
        This proves that $\overline{\mc F}$ is a contraction with the same Lipschitz constant $\alpha$ as well, hence Lemma~\ref{lem:NOST-overline-F-contraction} follows.
    \end{proof}

    Because $(\Delta,d_{\textup{TV}})$ is a complete metric space~\cite{kohlberg1982contraction}, we have that $(\mathrm{C}(\Delta) , d_{\textup{H}})$ is a complete metric space as well~\cite[Thm 3.5]{widder2009fixed}. This, together with the fact that the function $\overline{\mc F} : \mathrm{C}(\Delta) \to \mathrm{C}(\Delta)$ is a contraction, gives us the existence of a unique attracting fixed point of $\overline{\mc F}$, as given below.

    \begin{corollary} \label{cor:NOST-fixed-point}
        There is a unique closed subset $K \subset \Delta$ such that
        \begin{equation} \label{eq:NOST-closed-K}
            \lim_{n\to\infty} d_{\textup{H}} \left( \overline{\mc F}^n( \{P_0\} ) , K \right) = 0 \quad \forall P_0 \in \Delta .
        \end{equation}
        Moreover, for each $\epsilon>0$, there is an $N_\epsilon \in \NN$ such that we have
        \begin{equation} \label{eq:NOST-closed-K-2}
            d_{\textup{H}} \left( \overline{\mc F}^n( \{P_0\} ) , K \right) < \epsilon \quad \forall n \ge N_\epsilon \, , \, \forall P_0 \in \Delta .
        \end{equation}
    \end{corollary}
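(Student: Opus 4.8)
The plan is to invoke the Banach fixed point theorem directly. We have already recorded the two ingredients needed: $(\mathrm{C}(\Delta), d_{\textup{H}})$ is a complete metric space, and by Lemma~\ref{lem:NOST-overline-F-contraction} the map $\overline{\mc F} : \mathrm{C}(\Delta) \to \mathrm{C}(\Delta)$ is a contraction with Lipschitz constant $\alpha \in [0,1)$. The contraction mapping theorem then yields a unique $K \in \mathrm{C}(\Delta)$ with $\overline{\mc F}(K) = K$, together with the convergence $\overline{\mc F}^n(S_0) \to K$ in $d_{\textup{H}}$ for every $S_0 \in \mathrm{C}(\Delta)$. Since every singleton $\{P_0\}$ with $P_0 \in \Delta$ is a closed subset of $\Delta$, applying this with $S_0 = \{P_0\}$ gives the pointwise convergence~\eqref{eq:NOST-closed-K}.

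For the uniform statement~\eqref{eq:NOST-closed-K-2}, I would use the quantitative form of the convergence. Because $K$ is fixed by $\overline{\mc F}$, iterating the contraction bound gives
\[
  d_{\textup{H}} \left( \overline{\mc F}^n(\{P_0\}) , K \right) = d_{\textup{H}} \left( \overline{\mc F}^n(\{P_0\}) , \overline{\mc F}^n(K) \right) \le \alpha^n \, d_{\textup{H}} \left( \{P_0\} , K \right) .
\]
It then remains to bound $d_{\textup{H}}(\{P_0\}, K)$ uniformly over $P_0 \in \Delta$. Both $\{P_0\}$ and $K$ lie inside $\Delta$, whose diameter in $d_{\textup{TV}}$ is $1$; hence $D(\{P_0\} || K) = d_{\textup{TV}}(P_0, K) \le 1$ and $D(K || \{P_0\}) = \sup_{q \in K} d_{\textup{TV}}(q, P_0) \le 1$, so $d_{\textup{H}}(\{P_0\}, K) \le 1$ for all $P_0 \in \Delta$. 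Therefore $d_{\textup{H}}(\overline{\mc F}^n(\{P_0\}), K) \le \alpha^n$ uniformly in $P_0$, and choosing $N_\epsilon$ to be any integer with $\alpha^{N_\epsilon} < \epsilon$ establishes~\eqref{eq:NOST-closed-K-2}.

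There is no substantive obstacle here beyond having assembled the ingredients: completeness of $(\mathrm{C}(\Delta), d_{\textup{H}})$, the contraction property of $\overline{\mc F}$, and boundedness of $\Delta$. The only point deserving a word of care is that the abstract fixed point theorem is applied on the hyperspace $\mathrm{C}(\Delta)$ rather than on $\Delta$ itself---the object being iterated is the whole reachable set of output marginals $\overline{\mc F}^n(\{P_0\})$, not a single distribution---and that the uniform rate comes for free from the geometric decay together with the finite diameter of $\Delta$. This uniform $N_\epsilon$ is exactly what is needed downstream in the proof of Lemma~\ref{lem:NOST-indep-init-dist}: it lets one replace any prescribed initial distribution $P_0$ (or any sequence $\widetilde P_0^{(n)}$) by its long-run behaviour inside $K$ while the finitely many atypical early stages contribute vanishingly to the normalized sum.
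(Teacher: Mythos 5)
Your proposal is correct and follows essentially the same route as the paper: Banach's fixed point theorem on the complete hyperspace $(\mathrm{C}(\Delta), d_{\textup{H}})$ with the contraction $\overline{\mc F}$ gives the unique attracting fixed point $K$, and the uniform rate follows from the geometric decay $\alpha^n$ combined with the unit $d_{\textup{TV}}$-diameter of $\Delta$ (the paper writes the bound as $\alpha^{n-1}\cdot\mathrm{diam}(\Delta)$, a cosmetic difference). No gaps.
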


    \begin{proof}
        For~\eqref{eq:NOST-closed-K}, since $\overline{\mc F} : \mathrm{C}(\Delta) \to \mathrm{C}(\Delta)$ is a contraction on a complete metric space $(\mathrm{C}(\Delta) , d_{\textup{H}})$, by Banach fixed point theorem, it has a unique fixed point $K \in \mathrm{C}(\Delta)$ to which its iterates converge---in particular, starting from any singleton set $\{P_0\}$.
        % It is evident that $K \subset \Delta_{\ge\rho}$ because, as discussed earlier, the codomain of $\overline{\mc F}$ is effectively $\mathrm{CB}(\Delta_{\ge\rho})$, and therefore we have $\overline{\mc F} (K) = K \in \mathrm{CB}(\Delta_{\ge\rho})$ due to $K$ being a fixed point of $\overline{\mc F}$.

        For~\eqref{eq:NOST-closed-K-2}, the existence of such an $N_\epsilon$ for any $\epsilon>0$ can be shown as follows. Due to $\overline{\mc F}$ being a contraction with Lipschitz constant $\alpha < 1$, and $K$ being its fixed point, we have
        \begin{align}
            d_{\textup{H}} \left( \overline{\mc F}^n( \{P_0\} ) , K \right) &= d_{\textup{H}} \left( \overline{\mc F}^n ( \{P_0\} ) , \overline{\mc F}^n (K) \right) \\
            &\le \alpha^{n-1} \cdot d_{\textup{H}} \left( \overline{\mc F} ( \{P_0\} ) , \overline{\mc F} (K) \right)
            % &= \alpha^{n-1} \cdot \dist_\mathrm{H} \left( \overline{\mc F} ( \{P_0\} ) , K \right)
            \le \alpha^{n-1} \cdot \mathrm{diam}(\Delta)
            = \alpha^{n-1}
        \end{align}
        where the last inequality follows by $\overline{\mc F} ( \{P_0\} ) , \overline{\mc F} (K) \in \Delta$. Now since $\alpha<1$, if we take $N_\epsilon$ large enough so that $\alpha^{N_\epsilon - 1} < \epsilon$, we have the desired result. Hence, Corollary~\ref{cor:NOST-fixed-point} follows.
    \end{proof}

    Next, we conclude the proof of~\eqref{eq:NOST-indep-init-dist}.
    Recall that the equality we want to prove is
    \begin{equation} \label{eq:NOST-indep-init-dist-copy}
        \begin{aligned}
            \lim_{n \rightarrow \infty} \frac{1}{n} & \max_{ \left\{ P_{ X_t \mid Y_{t-1} } \right\}_{t=1}^n } \sum_{t=1}^n I( X_t ; Y_t \mid Y_{t-1} )
            \ = \
            \lim_{n \rightarrow \infty} \frac{1}{n} \max_{ \left\{ P_{ X_t \mid Y_{t-1} } \right\}_{t=1}^n } \sum_{t=1}^n I( X_t ; Y_t \mid Y_{t-1} ) \\
            & \hspace{0.8cm} \textup{s.t.} \quad Y_0 \sim P_0
            \hspace{5.0cm} \textup{s.t.} \quad Y_0 \sim \widetilde{P}_0^{(n)} .
        \end{aligned}
    \end{equation}
    We prove a stronger fact where both the LHS and RHS have a sequence of distributions instead of the LHS having fixed distribution $P_0$.
    We show that for any two sequence $\{\widetilde R^{(n)}_{Y_0}\}_{n\ge1}$ and $\{\widetilde P^{(n)}_{Y_0}\}_{n\ge1}$ of initial distributions, we have
    \begin{equation} \label{eq:NOST-indep-init-dist-alternative}
        \begin{aligned}
            \lim_{n \rightarrow \infty} \frac{1}{n} & \max_{ \left\{ P_{ X_t \mid Y_{t-1} } \right\}_{t=1}^n } \sum_{t=1}^n I( X_t ; Y_t \mid Y_{t-1} )
            \ = \
            \lim_{n \rightarrow \infty} \frac{1}{n} \max_{ \left\{ P_{ X_t \mid Y_{t-1} } \right\}_{t=1}^n } \sum_{t=1}^n I( X_t ; Y_t \mid Y_{t-1} ) \\
            & \hspace{0.8cm} \textup{s.t.} \quad Y_0 \sim \widetilde{R}_0^{(n)}
            \hspace{4.7cm} \textup{s.t.} \quad Y_0 \sim \widetilde{P}_0^{(n)} ,
        \end{aligned}
    \end{equation}
    which will imply~\eqref{eq:NOST-indep-init-dist-copy} by taking $\{\widetilde R^{(n)}_{Y_0}\}_{n\ge1}$ to be the fixed sequence $\widetilde R^{(n)}_{Y_0} = P_0$.
    To avoid confusion, quantities $P^*, I^*$ will be referring to the LHS of~\eqref{eq:NOST-indep-init-dist-alternative}, whereas quantities $\widetilde P, \widetilde I$ will be referring to the RHS of~\eqref{eq:NOST-indep-init-dist-alternative}.

    % We first prove LHS $\ge$ RHS.
    We prove LHS $\ge$ RHS, and then conclude via symmetry.
    Fix an $\epsilon>0$, and let $\delta>0$ be such that the inequality $d_{\textup{TV}} ( P_{Y,X,Y'} , R_{Y,X,Y'} ) < \delta$ implies
    % the inequality
    $\left| I^{(P)} (X;Y \mid Y') - I^{(R)} (X;Y \mid Y') \right| < \epsilon$, where $I^{(P)} (X;Y \mid Y')$ denotes the mutual information induced by the distribution $P_{Y,X,Y'}$, and analogously for the distribution $R$. The existence of such a $\delta$ follows by the continuity of mutual information
    which implies uniform continuity on the compact domain of $P_{Y,X,Y'}$.

    We define a set-valued function $M^{(n)} : \Delta \to \left( (\Delta_X)^Y \right)^n$ as
    \begin{equation}
        \begin{aligned}
            M^{(n)}(P'_{Y_0}) := &\argmax{ \left\{ P_{ X_t \mid Y_{t-1} } \right\}_{t=1}^n } \sum_{t=1}^n I( X_t ; Y_t \mid Y_{t-1} ) \\
            & \hspace{0.5cm} \textup{s.t.} \quad P_0 = P'_{Y_0}
        \end{aligned}
    \end{equation}
    that maps each $P'_{Y_0} \in \Delta$ to the corresponding set of maximizers $\{ P_{ X_t \mid Y_{t-1} } \}_{t=1}^n$.\footnote{
    If there are multiple maximizers, we arbitrarily choose one as the image under the function $M^{(n)}$.
    }

    % Recall from Corollary~\ref{cor:NOST-fixed-point} that for each $\delta>0$, there is an $N'_\delta \in \NN$ such that we have
    % \begin{equation}
    %     d_{\textup{H}} \left( \overline{\mc F}^n( \{P_0\} ) , K \right) < \delta \quad \forall n \ge N'_\delta \, , \, \forall P_0 \in \Delta .
    % \end{equation}
    % By triangle inequality, we obtain that
    As $\overline{\mc F}$ is a contraction with Lipschitz constant $\alpha < 1$,
    for each $\delta>0$, there is an $N_\delta \in \NN$
    % such that we have
    with
    \begin{equation} \label{eq:NOST-triangle-ineq}
        d_{\textup{H}} \left( \overline{\mc F}^n( \{ \widetilde{R}_0^{(n)} \} ) , \overline{\mc F}^n( \{ \widetilde{P}_0^{(n)} \} ) \right) < \delta \quad \forall n \ge N_\delta .
    \end{equation}

    Now fix an $n > N_\delta$. Take
    \begin{equation} \label{eq:P-tilde-as-maximizer}
        ( \widetilde{P}_{ X_t \mid Y_{t-1} } )_{t=1}^n = M^{(n)} (\widetilde{P}_0^{(n)}) ,
    \end{equation}
    i.e., the variables $( \widetilde{P}_{ X_t \mid Y_{t-1} } )_{t=1}^n$ maximize $\sum_{t=1}^n I( X_t ; Y_t \mid Y_{t-1} )$ subject to $Y_0 \sim \widetilde{P}_0^{(n)}$.
    % (among possibly multiple).
    
    We construct
    a set of variables
    $\left\{ P^*_{ X_t \mid Y_{t-1} } \right\}_{t=1}^n$ as follows. Let $\left\{ P^*_{ X_t \mid Y_{t-1} } \right\}_{t=1}^{N_\delta}$ be such that the resulting $P^*_{Y_{N_\delta}}$
    % where $P_0=P_0^*$
    satisfies $d_{\textup{TV}} \left( P^*_{Y_{N_\delta}} , \widetilde{P}_{Y_{N_\delta}} \right) < \delta$, which is possible due to~\eqref{eq:NOST-triangle-ineq}.
    Let the rest of the variables $\left\{ P^*_{ X_t \mid Y_{t-1} } \right\}_{t=N_\delta+1}^n$ be identical to $\left\{ \widetilde{P}_{ X_t \mid Y_{t-1} } \right\}_{t=N_\delta+1}^n$, i.e., $P^*_{ X_t \mid Y_{t-1} } = \widetilde{P}_{ X_t \mid Y_{t-1} }$ for $N_\delta+1 \le t \le n$.
    
    With this construction, we show by induction that the total variation distance between $P^*_{Y_t}$ and $\widetilde{P}_{Y_t}$ is less than $\delta$ for all $t \ge N_\delta$. The base case $t = N_\delta$ is true by the construction above. For the inductive step, if we assume that it holds for a certain $t \ge N_\delta$, i.e., that $d_{\textup{TV}} ( P^*_{Y_t} , \widetilde{P}_{Y_t} ) < \delta$, then we have
    \begin{align}
        d_{\textup{TV}} ( P_{Y_{t+1}} , \widetilde{P}_{Y_{t+1}} ) &= \frac{1}{2} \sum_{y\in\mc Y} \left| P_{Y_{t+1}} (y) - \widetilde{P}_{Y_{t+1}} (y) \right| \\
        % &= \frac{1}{2} \sum_{y\in\mc Y} \left| \sum_{\substack{x_{t+1}\in\mc X \\ y_t\in\mc Y}} Q_{Y_{t+1}|X_{t+1},Y_t} (y|x_{t+1},y_t) \cdot P_{X_{t+1}|Y_t} (x_{t+1}|y_t) \cdot \left( P_{Y_{t}} (y_t) - \widetilde{P}_{Y_{t}} (y_t) \right) \right| \\
        &\le \frac{1}{2} \sum_{y\in\mc Y} \sum_{\substack{x_{t+1}\in\mc X \\ y_t\in\mc Y}} \hspace{-0.25cm} Q_{Y_{t+1} \mid X_{t+1},Y_t} (y \mid x_{t+1},y_t) \hspace{-0.075cm} \cdot \hspace{-0.075cm} P_{X_{t+1} \mid Y_t} (x_{t+1} \mid y_t) \hspace{-0.075cm} \cdot \hspace{-0.075cm} \left| P_{Y_{t}} (y_t) - \widetilde{P}_{Y_{t}} (y_t) \right| \\
        % &= \frac{1}{2} \sum_{y_t\in\mc Y} \left| P_{Y_{t}} (y_t) - \widetilde{P}_{Y_{t}} (y_t) \right| \sum_{x_{t+1}\in\mc X} P_{X_{t+1}|Y_t} (x_{t+1}|y_t) \sum_{y\in\mc Y} Q_{Y_{t+1}|X_{t+1},Y_t} (y|x_{t+1},y_t) \\
        &= \frac{1}{2} \sum_{y_t\in\mc Y} \left| P_{Y_{t}} (y_t) - \widetilde{P}_{Y_{t}} (y_t) \right| 
        = d_{\textup{TV}} ( P_{Y_{t}} , \widetilde{P}_{Y_{t}} ) 
        < \delta ,
    \end{align}
    as desired. As a result of this induction, we have $d_{\textup{TV}} ( P^*_{Y_t} , \widetilde{P}_{Y_t} ) < \delta$ for all $t \ge N_\delta$.

    Next, we prove that the total variation distances between the joint distributions $P^*_{Y_t,X_t,Y_{t-1}}$ and $\widetilde{P}_{Y_t,X_t,Y_{t-1}}$ for $t > N_\delta$ are less than $\delta$ as well. Fix a $t \ge N_\delta + 1$. We know by the argument above that ${d_{\textup{TV}} ( P^*_{Y_t} , \widetilde{P}_{Y_t} ) < \delta}$. Using this, we have
    \begin{align}
        d_{\textup{TV}} & ( P_{Y_t,X_t,Y_{t-1}} , \widetilde{P}_{Y_t,X_t,Y_{t-1}} ) = \frac{1}{2} \sum_{\substack{ y_t\in\mc Y\\ x_t\in\mc X\\ y_{t-1}\in\mc Y }} \left| P_{Y_t,X_t,Y_{t-1}} (y_t,x_t,y_{t-1}) - \widetilde{P}_{Y_t,X_t,Y_{t-1}} (y_t,x_t,y_{t-1}) \right| \\
        % &= \frac{1}{2} \sum_{y_t\in\mc Y} \sum_{x_t\in\mc X} \sum_{y_{t-1}\in\mc Y} \left| Q_{Y_{t}|X_{t},Y_{t-1}} (y_t|x_{t},y_{t-1}) \cdot P_{X_{t}|Y_{t-1}} (x_{t}|y_{t-1}) \cdot \left( P_{Y_{t-1}} (y_{t-1}) - \widetilde{P}_{Y_{t-1}} (y_{t-1}) \right) \right| \\
        &\le \frac{1}{2} \sum_{y_t\in\mc Y} \sum_{x_t\in\mc X} \sum_{y_{t-1}\in\mc Y} \hspace{-0.2cm} Q_{Y_{t} \mid X_{t},Y_{t-1}} (y_t \mid x_{t},y_{t-1}) \hspace{-0.075cm} \cdot \hspace{-0.075cm} P_{X_{t} \mid Y_{t-1}} (x_{t} \mid y_{t-1}) \hspace{-0.075cm} \cdot \hspace{-0.075cm} \left| P_{Y_{t-1}} (y_{t-1}) - \widetilde{P}_{Y_{t-1}} (y_{t-1}) \right| \\
        &= \frac{1}{2} \sum_{y_{t-1}\in\mc Y} \left| P_{Y_{t-1}} (y_{t-1}) - \widetilde{P}_{Y_{t-1}} (y_{t-1}) \right| \sum_{x_t\in\mc X} P_{X_{t} \mid Y_{t-1}} (x_{t} \mid y_{t-1}) \sum_{y_t\in\mc Y} Q_{Y_{t} \mid X_{t},Y_{t-1}} (y_t \mid x_{t},y_{t-1}) \\
        &= \frac{1}{2} \sum_{y_{t-1}\in\mc Y} \left| P_{Y_{t-1}} (y_{t-1}) - \widetilde{P}_{Y_{t-1}} (y_{t-1}) \right| 
        = d_{\textup{TV}} ( P_{Y_{t-1}} , \widetilde{P}_{Y_{t-1}} ) 
        < \delta ,
    \end{align}
    as desired.
    By the definition of $\delta$ in the beginning of the proof, this implies
    \begin{equation}
        \left| I^*(X_t;Y_t \mid Y_{t-1}) - \widetilde{I} (X_t;Y_t \mid Y_{t-1}) \right| < \epsilon \quad \forall t \ge N_\delta + 1 .
    \end{equation}
    
    As $n > N_\delta$ was arbitrary, we can let it grow arbitrarily large. Because the sums $\frac 1 n \sum_{t=1}^n I( X_t ; Y_t \mid Y_{t-1} )$ are averaged, the contribution of the first $N_\delta$ terms $\{ I (X_t;Y_t \mid Y_{t-1}) \}_{t=1}^{N_\delta}$ vanishes in the limit, hence we obtain
    \begin{equation}
        \left| \lim_{n \rightarrow \infty} \frac{1}{n} \sum_{t=1}^n I^*( X_t ; Y_t \mid Y_{t-1} )
        \ - \
        \lim_{n \rightarrow \infty} \frac{1}{n} \sum_{t=1}^n \widetilde{I} ( X_t ; Y_t \mid Y_{t-1} ) \right| < \epsilon
    \end{equation}
    which, in particular, implies
    \begin{equation} \label{eq:NOST-modified-sufficient-1}
        \lim_{n \rightarrow \infty} \frac{1}{n} \sum_{t=1}^n I^*( X_t ; Y_t \mid Y_{t-1} )
        \ > \
        \lim_{n \rightarrow \infty} \frac{1}{n} \sum_{t=1}^n \widetilde{I} ( X_t ; Y_t \mid Y_{t-1} ) - \epsilon.
    \end{equation}
    
    By the definition of
    % $\{ \widetilde{P}_{ X_t \mid Y_{t-1} } \}_{t=1}^n := M^{(n)} (\widetilde{P}_0^{(n)})$,
    $\{ \widetilde{P}_{ X_t \mid Y_{t-1} } \}_{t=1}^n$ in~\eqref{eq:P-tilde-as-maximizer},
    we have that $\sum_{t=1}^n \widetilde{I} ( X_t ; Y_t \mid Y_{t-1} )$ is equal to the supremum, i.e., we have
    \begin{equation} \label{eq:NOST-modified-sufficient-2}
        \begin{aligned}
            \sum_{t=1}^n \widetilde{I} ( X_t ; Y_t \mid Y_{t-1} )
            \ = \
            & \max_{ \left\{ P_{ X_t \mid Y_{t-1} } \right\}_{t=1}^n } \sum_{t=1}^n I( X_t ; Y_t \mid Y_{t-1} ) \\
            & \qquad \, \textup{s.t.} \quad Y_0 \sim \widetilde{P}_0^{(n)} .
        \end{aligned}
    \end{equation}
    
    However, the same is not true for $\sum_{t=1}^n I^*( X_t ; Y_t \mid Y_{t-1} )$ due to our specific construction of $\left\{ P^*_{ X_t \mid Y_{t-1} } \right\}_{t=1}^n$ rather than taking the maximizer. Thus, we instead have
    \begin{equation} \label{eq:NOST-modified-sufficient-3}
        \begin{aligned}
            \sum_{t=1}^n I^*( X_t ; Y_t \mid Y_{t-1} )
            \ \le \
            &\max_{ \left\{ P_{ X_t \mid Y_{t-1} } \right\}_{t=1}^n } \sum_{t=1}^n I( X_t ; Y_t \mid Y_{t-1} ) \\
            & \hspace{0.8cm} \textup{s.t.} \quad Y_0 \sim \widetilde{R}_0^{(n)} .
        \end{aligned}
    \end{equation}
    
    By~\eqref{eq:NOST-modified-sufficient-1},~\eqref{eq:NOST-modified-sufficient-2},~\eqref{eq:NOST-modified-sufficient-3}, we deduce
    \begin{equation}
        \begin{aligned}
            \lim_{n \rightarrow \infty} \frac{1}{n} & \max_{ \left\{ P_{ X_t \mid Y_{t-1} } \right\}_{t=1}^n } \sum_{t=1}^n I( X_t ; Y_t \mid Y_{t-1} )
            \ > \
            \lim_{n \rightarrow \infty} \frac{1}{n} \max_{ \left\{ P_{ X_t \mid Y_{t-1} } \right\}_{t=1}^n } \sum_{t=1}^n \widetilde{I}( X_t ; Y_t \mid Y_{t-1} ) - \epsilon \\
            & \hspace{0.8cm} \textup{s.t.} \quad Y_0 \sim \widetilde{R}_0^{(n)} \hspace{4.7cm} \textup{s.t.} \quad Y_0 \sim \widetilde{P}_0^{(n)} .
        \end{aligned}
    \end{equation}
    
    Since $\epsilon>0$ was arbitrary, we obtain
    \begin{equation}
        \begin{aligned}
            \lim_{n \rightarrow \infty} \frac{1}{n} & \max_{ \left\{ P_{ X_t \mid Y_{t-1} } \right\}_{t=1}^n } \sum_{t=1}^n I( X_t ; Y_t \mid Y_{t-1} )
            \ \ge \
            \lim_{n \rightarrow \infty} \frac{1}{n} \max_{ \left\{ P_{ X_t \mid Y_{t-1} } \right\}_{t=1}^n } \sum_{t=1}^n \widetilde{I} ( X_t ; Y_t \mid Y_{t-1} ) \\
            & \hspace{0.8cm} \textup{s.t.} \quad Y_0 \sim \widetilde{R}_0^{(n)} \hspace{4.7cm} \textup{s.t.} \quad Y_0 \sim \widetilde{P}_0^{(n)} .
        \end{aligned}
    \end{equation}
    which shows LHS $\ge$ RHS, as desired.
    % The converse inequality, RHS $\ge$ LHS, can also be obtained analogously.
    By symmetry, LHS $\le$ RHS holds as well.
    Together, they imply~\eqref{eq:NOST-indep-init-dist-alternative},
    % (and hence~\eqref{eq:NOST-indep-init-dist}), as desired.
    which in turn implies~\eqref{eq:NOST-indep-init-dist-copy}.
    % As a result, we have obtained
    % \begin{equation}
    %     \begin{aligned}
    %         \lim_{n \rightarrow \infty} \frac{1}{n} & \max_{ \left\{ P_{ X_t \mid Y_{t-1} } \right\}_{t=1}^n } \sum_{t=1}^n I( X_t ; Y_t \mid Y_{t-1} )
    %         \ = \
    %         \lim_{n \rightarrow \infty} \frac{1}{n} \max_{ \left\{ P_{ X_t \mid Y_{t-1} } \right\}_{t=1}^n } \sum_{t=1}^n I( X_t ; Y_t \mid Y_{t-1} ) \\
    %         & \hspace{25pt} \textup{s.t.} \quad Y_0 \sim P_0
    %         \hspace{4.75cm} \textup{s.t.} \quad Y_0 \sim \widetilde{P}_0^{(n)} ,
    %     \end{aligned}
    % \end{equation}
    % as desired.
    This concludes the proof of Lemma~\ref{lem:NOST-indep-init-dist}.
\end{silentproof}

\section{Proof of Proposition~\ref{prop:NOST-achievability}} \label{sec:proof-prop:NOST-achievability}

\begin{silentproof}
    % We prove the claim for any fixed maximization instance on the LHS of~\eqref{eq:NOST-achievability}, i.e., for an arbitrary $n$,
    
    For any fixed $n$, we show
    \begin{equation} \label{eq:NOST-achievability-updated}
        \max_{\substack{ P_0 \\ \left\{P_{X_t \mid Y_{t-1}}\right\}_{t=1}^n }} \frac{1}{n} \sum_{t=1}^n I(X_t ; Y_t \mid Y_{t-1})
        \quad \ge \quad
        \max_{ P_{X,Y'} } I( X ; Y \mid Y' ) \quad \textup{s.t. } P_Y = P_{Y'} .
    \end{equation}
    We do this by showing that for any feasible variable for the RHS of~\eqref{eq:NOST-achievability-updated}, there exists feasible variables for the maximization on the LHS with
    % induced
    objective value at least as large. To that end, fix a feasible $P'_{X , Y'}$ for the RHS of~\eqref{eq:NOST-achievability-updated}.
    % We show the existence of variables for the LHS, i.e., $P_0 , \left\{P_{X_t \mid Y_{t-1}}\right\}_{t=1}^n$ such that the induced joints $P_{X_t , Y_{t-1}}$ for all $1 \le t \le n$ are all equal to $P'_{X , Y'}$. This is simply done by setting
    For the LHS, set $P_0 \coloneq P'_{Y'}$ and $P_{X_t \mid Y_{t-1}} \coloneq P'_{X \mid Y'}$ for all $t$,
    % \begin{equation} \label{eq:NOST-choice}
    %     \begin{aligned}
    %         P_0 &\coloneq P'_{Y'} \, , \\
    %         P_{X_t \mid Y_{t-1}} &\coloneq P'_{X \mid Y'} \ \forall t .
    %     \end{aligned}
    % \end{equation}
    With these choices, we prove $P_{X_t , Y_{t-1}} = P'_{X , Y'} \, \forall t$ through induction. For the base case $t=1$
    we have
    ${P_{X_1 , Y_0} = P_{X_1 \mid Y_0} \cdot P_0 = P'_{X \mid Y'} \cdot P'_{Y'} = P'_{X , Y'}}$,
    % \begin{align}
    %     P_{X_1 , Y_0} &= P_{X_1 \mid Y_0} \cdot P_0 \\
    %     &= P'_{X \mid Y'} \cdot P'_{Y'} \\
    %     &= P'_{X , Y'} ,
    % \end{align}
    as desired.
    For the inductive step, assuming\linebreak ${P_{X_t , Y_{t-1}} = P'_{X , Y'}}$, first notice that this implies
    \begin{align}
        P_{Y_t} &= \sum_{x_t,y_{t-1}} Q_{Y_t \mid X_t , Y_{t-1}} \cdot P_{X_t , Y_{t-1}} = \sum_{x_t,y_{t-1}} Q_{Y \mid X , Y'} \cdot P'_{X , Y'} \\
        &= P'_{Y} = P'_{Y'} \hspace{4cm} \text{(by the constraint } P_Y = P_{Y'} \text{ in }~\eqref{eq:NOST-achievability-updated}\text{)}
    \end{align}
    which gives
    $P_{X_{t+1} , Y_t} = P_{X_{t+1} \mid Y_t} \cdot P_{Y_t} = P'_{X , Y'} \cdot P'_{Y'} = P'_{X , Y'}$,
    % \begin{align}
    %     P_{X_{t+1} , Y_t} &= P_{X_{t+1} \mid Y_t} \cdot P_{Y_t} \\
    %     &= P'_{X , Y'} \cdot P'_{Y'} \\
    %     &= P'_{X , Y'} ,
    % \end{align}
    as desired, which concludes the inductive step. As a result, these variable choices
    % in~\eqref{eq:NOST-choice}
    yield $P_{X_t , Y_{t-1}} = P'_{X , Y'}$ for all $t \ge 1$. This implies
    \begin{align}
        P_{Y_t , X_t , Y_{t-1}} &= Q_{Y_t \mid X_t , Y_{t-1}} \cdot P_{X_t , Y_{t-1}} \\
        &= Q_{Y \mid X , Y'} \cdot P'_{X , Y'} % \\
        = P'_{Y, X , Y'} \quad \forall t\ge1 ,
    \end{align}
    which means that $I(Y_t \mid X_t , Y_{t-1}) = I( X ; Y \mid Y' )$ for all $t\ge1$.
    Hence, for the LHS of~\eqref{eq:NOST-achievability-updated}, for a fixed $n$, these choices of variables result in the objective value
    $I( X ; Y \mid Y' )$.
    % \begin{align}
    %     \frac{1}{n} \sum_{t=1}^n I(X_t ; Y_t \mid Y_{t-1}) \ \Bigg|_{\substack{ P_0 \, = \, P'_{Y'} \hspace{1.85cm} \\ P_{X_t \mid Y_{t-1}} \, = \, P'_{X \mid Y'} \ \, \forall t\ge1 }}
    %     \quad &= \quad
    %     \frac{1}{n} \sum_{t=1}^n I( X ; Y \mid Y' ) \\
    %     &= \quad I( X ; Y \mid Y' ) .
    % \end{align}
    As a result,
    % for any feasible point for the maximization on the RHS of~\eqref{eq:NOST-achievability-updated}, there exists a feasible point for the maximization on the LHS with induced objective value at least as large (equal in this case), which implies
    % LHS $\ge$ RHS in~\eqref{eq:NOST-achievability-updated}.
    % This in turn implies LHS $\ge$ RHS in the limit $n\to\infty$ as well, which implies~\eqref{eq:NOST-achievability}, and thus
    Taking $n \to \infty$
    concludes the proof of Proposition~\ref{prop:NOST-achievability}.
\end{silentproof}

% \newpage

\section{Proof of Proposition~\ref{prop:NOST-converse}} \label{sec:proof-prop:NOST-converse}

\begin{silentproof}
    % The proof follows steps analogous to~\cite[Lemma~4]{NOST}.
    
    Define $D_\epsilon$ as the set of those $P_{X , Y'}$ satisfying the constraints of RHS \emph{$\epsilon$-approximately},~i.e.,
    \begin{equation} \label{eq:NOST-def-D-eps}
            D_{\epsilon} \coloneq \bigg\{ P_{X,Y'} \ : \ \left| P_{Y'}(y) - P_Y(y) \right| \le \epsilon \ \ \forall y \in \mc Y \, \bigg\} .
    \end{equation}

    The proof proceeds as follows.
    % We upper bound each instance of optimizations on the LHS of~\eqref{eq:NOST-converse} by the single-letter expression on the RHS, but with subject to $P_{X , Y'} \in D_{1/n}$. That is, fixing an $n\ge1$, we show
    We show
    \begin{equation} \label{eq:NOST-instance-n}
        \max_{\substack{ P_0 \\ \left\{P_{X_t \mid Y_{t-1}}\right\}_{t=1}^n }} \frac{1}{n} \sum_{t=1}^n I(X_t ; Y_t \mid Y_{t-1})
        \quad \le \quad
        \max_{ P_{X , \ov Y'} \in D_{1/n} } I(X;Y \mid Y') .
    \end{equation}
    Afterwards, the proof will conclude by taking the limit $n\to\infty$ on both sides.

    To prove~\eqref{eq:NOST-instance-n}, we show that for any given feasible point for the maximization on the LHS, there is a feasible point for the maximization on the RHS that has objective value at least as large.
    To that end, take any feasible $P'_{Y_0} , \left\{P'_{X_t|Y_{t-1}}\right\}_{t=1}^n$ for the LHS. In what follows, the quantities associated with the symbol $'$ will indicate that it is induced by these variables.

    The variables $P'_{Y_0} , \left\{P'_{X_t|Y_{t-1}}\right\}_{t=1}^n$ induce joint distributions $\left\{P'_{X_t,Y_{t-1}}\right\}_{t=1}^n$ through the channel.
    %by~\eqref{eq:NOST-induced}.
    Define the joint distribution $\widetilde P^{(n)}_{X , Y'}$ as the average of the joint distributions $\left\{P'_{X_t,Y_{t-1}}\right\}_{t=1}^n$, i.e.,
    % \vspace{-0.5cm}
    \begin{equation} \label{eq:NOST-def-P-tilde}
        \widetilde P^{(n)}_{X , Y'} (x , y') \ \coloneq \ \frac{1}{n} \sum_{t=1}^n P'_{X_t,Y_{t-1}} (x , y') \quad \forall x,y' \in \mc X \times \mc Y .
    \end{equation}
    % We first show that $\widetilde P^{(n)}_{X , Y'}$ $\epsilon$-approximately satisfies the constraints as in~\eqref{eq:NOST-def-D-eps} for $\epsilon = 1/n$ below.
    
    % \vspace{-0.75cm}
    \begin{lemma} \label{lem:NOST-P-tilde-in-D}
        We have $\widetilde P^{(n)}_{X , Y'} \in D_{1/n}$.
    \end{lemma}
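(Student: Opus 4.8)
The plan is to compute explicitly the two marginals of $\widetilde P^{(n)}_{X,Y'}$ that appear in the definition~\eqref{eq:NOST-def-D-eps} of $D_{1/n}$, and observe that their difference telescopes. First I would note that $\widetilde P^{(n)}_{X,Y'}$ is by construction~\eqref{eq:NOST-def-P-tilde} an average of the genuine joint distributions $P'_{X_t,Y_{t-1}}$, each of which arises through the channel from $P'_{X_t\mid Y_{t-1}}$ and $P'_{Y_{t-1}}$. Hence the $Y'$-marginal is
\begin{equation}
    \widetilde P^{(n)}_{Y'}(y') \;=\; \frac1n\sum_{t=1}^n P'_{Y_{t-1}}(y') \;=\; \frac1n\sum_{t=0}^{n-1} P'_{Y_t}(y') .
\end{equation}

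Next I would compute the induced output marginal. Applying the channel constants $Q_{Y\mid X,Y'}$ to $\widetilde P^{(n)}_{X,Y'}$ and using linearity of the map $P_{X,Y'}\mapsto P_Y$, together with the fact that passing $P'_{X_t,Y_{t-1}}$ through $Q_{Y\mid X,Y'}=Q_{Y_t\mid X_t,Y_{t-1}}$ yields exactly $P'_{Y_t}$, gives
\begin{equation}
    P_Y(y) \;=\; \sum_{x,y'} Q_{Y\mid X,Y'}(y\mid x,y')\,\widetilde P^{(n)}_{X,Y'}(x,y') \;=\; \frac1n\sum_{t=1}^n P'_{Y_t}(y).
\end{equation}
Subtracting the two expressions, the common terms $P'_{Y_1},\dots,P'_{Y_{n-1}}$ cancel, leaving the telescoped identity $P_Y(y)-\widetilde P^{(n)}_{Y'}(y)=\frac1n\bigl(P'_{Y_n}(y)-P'_{Y_0}(y)\bigr)$.

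Finally, since $P'_{Y_n}(y)$ and $P'_{Y_0}(y)$ each lie in $[0,1]$, we get $\bigl|P_Y(y)-\widetilde P^{(n)}_{Y'}(y)\bigr|\le \tfrac1n$ for every $y\in\mc Y$, which is exactly the membership condition $\widetilde P^{(n)}_{X,Y'}\in D_{1/n}$ from~\eqref{eq:NOST-def-D-eps}. There is essentially no obstacle here; the only point requiring a little care is confirming that marginalizing-after-channel commutes with the averaging in~\eqref{eq:NOST-def-P-tilde}, which is just linearity of the conditional-expectation map, and keeping the index bookkeeping straight so that the telescoping is visible. I would present it as a short direct computation rather than invoking anything beyond the definitions already in place.
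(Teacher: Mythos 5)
Your proposal is correct and follows essentially the same route as the paper's proof: marginalize the averaged distribution to get $\frac1n\sum_{t=0}^{n-1}P'_{Y_t}$ for the $Y'$-marginal, push the average through the channel constants to get $\frac1n\sum_{t=1}^{n}P'_{Y_t}$ for the $Y$-marginal, and bound the telescoped difference $\frac1n\bigl|P'_{Y_0}(y)-P'_{Y_n}(y)\bigr|$ by $1/n$. No gaps; this matches the paper's argument in Appendix~\ref{sec:proof-prop:NOST-converse}.
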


    % \begin{proof}
    %     See Appendix~\ref{sec:proof-lem:NOST-P-tilde-in-D}.
    % \end{proof}

    \begin{proof}
        For simplicity,
        % throughout this proof,
        we drop the power ${(n)}$ and denote $\widetilde P^{(n)}$ by $\widetilde P$.
        % We first analyze the marginal $\widetilde P_{Y'}$, and then the marginal $\widetilde P_{Y}$.
        % First,
        The marginal $\widetilde P_{Y'} (y')$ equals
        \begin{align}
            \widetilde P_{Y'} (y') &= \sum_{x} \widetilde P_{X,Y'} (x,y')
            % = \sum_{x} \frac{1}{n} \sum_{t=1}^n P'_{X_t,Y_{t-1}} (x,y')
            = \frac{1}{n} \sum_{t=1}^n \sum_{x} P'_{X_t,Y_{t-1}} (x,y') \\
            &= \frac{1}{n} \sum_{t=1}^n P'_{Y_{t-1}} (y')
            = \frac{1}{n} \sum_{t=0}^{n-1} P'_{Y_t} (y') \qquad \forall y' \in \mc Y . \label{eq:NOST-marginal-Y'}
        \end{align}
        Second, the marginal $\widetilde P_{Y} (y)$ can be written as
        \begin{align}
            \widetilde P_{Y} (y) &= \sum_{x,y'} Q_{Y \mid X,Y'} (y \mid x,y') \, \widetilde P_{X , Y'} (x , y') 
            = \sum_{x,y'} Q_{Y \mid X,Y'} (y \mid x,y') \frac{1}{n} \sum_{t=1}^n P'_{X_t,Y_{t-1}} (x,y') \\
            &= \frac{1}{n} \sum_{t=1}^n \sum_{x,y'} Q_{Y \mid X,Y'} (y \mid x,y') \cdot P'_{X_t,Y_{t-1}} (x,y')
            = \frac{1}{n} \sum_{t=1}^n P'_{Y_t} (y) \qquad \forall y \in \mc Y . \label{eq:NOST-marginal-Y}
        \end{align}
        By~\eqref{eq:NOST-marginal-Y'} and~\eqref{eq:NOST-marginal-Y}, the absolute difference of $\widetilde P_{Y'} (y)$ and $\widetilde P_{Y} (y)$ can be upper-bounded as
        \begin{align}
            \left| \widetilde P_{Y'}(y) - \widetilde P_Y(y) \right| &= \left| \frac{1}{n} \sum_{t=0}^{n-1} P'_{Y_t} (y) - \frac{1}{n} \sum_{t=1}^n P'_{Y_t} (y) \right| = \frac{1}{n} \underbrace{\left| P'_{Y_0} (y) - P'_{Y_n} (y) \right|}_{\le \, 1 - 0 \, = \, 1} 
            \le \frac{1}{n} \ \forall y \in \mc Y
        \end{align}
        which, by~\eqref{eq:NOST-def-D-eps}, implies $\widetilde P^{(n)}_{X , Y'} \in D_{1/n}$, as desired. This
        % concludes the proof of
        yields
        Lemma~\ref{lem:NOST-P-tilde-in-D}.
    \end{proof}

    The conditional mutual information $I(X;Y \mid Y')$ is a concave function of $P_{X , Y'}$, as the channel constants $Q_{Y \mid X , Y'}$ are fixed.
    % Hence, as $\widetilde P^{(n)}_{X , Y'}$ are defined through the averaging in~\eqref{eq:NOST-def-P-tilde}, by Jensen's inequality we get
    Since $\widetilde P^{(n)}_{X , Y'}$ is an average, Jensen's inequality gives
    \begin{equation} \label{eq:NOST-jensen}
        \frac{1}{n} \sum_{t=1}^n I'( X_t ; Y_t \mid Y_{t-1} )
        %\, \Bigg|_{ P'_{Y_{\mc I_n}} , \left\{ P'_{X_v \mid Y_{\pa v}} \right\}_{v \in \mc C_n} }
        \ \le \
        I(X;Y \mid Y') \, \Bigg|_{ P_{X , Y'} \, = \, \widetilde P^{(n)}_{X , Y'} }
    \end{equation}
    where the mutual information terms $I'$ on the LHS are induced by the variables $P'_{Y_0} , \left\{P'_{X_t|Y_{t-1}}\right\}_{t=1}^n$.
    % \eqref{eq:NOST-jensen} implies that for any feasible $P'_{Y_0} , \left\{P'_{X_t|Y_{t-1}}\right\}_{t=1}^n$ for the LHS of~\eqref{eq:NOST-instance-n}, there is a feasible $\widetilde P^{(n)}_{X , Y'} \in D_{1/n}$ for the RHS with induced objective value at least as large. Therefore, maximizing LHS over all $P_0 , \left\{P_{X_t \mid Y_{t-1}}\right\}_{t=1}^n$ and maximizing RHS over all $P_{X , Y'} \in D_{1/n}$, the inequality in~\eqref{eq:NOST-jensen} will still hold. That is, we have
    Maximizing both the LHS and the RHS in~\eqref{eq:NOST-jensen}, we get
    \begin{equation} \label{eq:NOST-instance-n-2}
        \max_{\substack{ P_0 \\ \left\{P_{X_t \mid Y_{t-1}}\right\}_{t=1}^n }} \frac{1}{n} \sum_{t=1}^n I(X_t ; Y_t \mid Y_{t-1})
        \ \le \
        \max_{ P_{X , Y'} \in D_{1/n} } I(X;Y \mid Y') .
    \end{equation}
    Each $D_{1/n}$ is a compact set by the definition in~\eqref{eq:NOST-def-D-eps}, and the sequence of sets $\{D_{1/n}\}_{n\ge1}$ is therefore a decreasing sequence of compact sets,
    % due to
    i.e.,
    $D_{1/(n+1)} \subset D_{1/n}$, and hence has nonempty limit $D_0$, given by
    $D_0 \coloneq \left\{ P_{X , Y'} \, \big| \, P_Y = P_{Y'} \right\}$,
    % \begin{equation} \label{eq:NOST-D0}
    %     D_0 \coloneq \left\{ P_{X , Y'} \, \big| \, P_Y = P_{Y'} \right\} ,
    % \end{equation}
    which is precisely the constraint set of the single-letter expression on the RHS of~\eqref{eq:NOST-converse}. Therefore, taking $n$ to infinity on both sides of~\eqref{eq:NOST-instance-n-2},
    % we obtain
    \begin{align}
        \lim_{n\to\infty} \max_{\substack{ P_0 \\ \left\{P_{X_t \mid Y_{t-1}}\right\}_{t=1}^n }} \frac{1}{n} \sum_{t=1}^n I(X_t ; Y_t \mid Y_{t-1})
        \quad &\le \quad
        \lim_{n\to\infty} \max_{ P_{X , Y'} \in D_{1/n} } I(X;Y \mid Y') \\
        &= \quad \max_{ P_{X , Y'} \in D_{0} } I(X;Y \mid Y') \\
        &= \quad \max_{ P_{X,Y'} } I( X ; Y \mid Y' ) \quad \textup{s.t. } P_Y = P_{Y'}
    \end{align}
    as desired. This concludes the proof of Proposition~\ref{prop:NOST-converse}.
\end{silentproof}

% \newpage

% \section{Proof of Proposition~\ref{prop:poset-C-characterize}} \label{sec:proof-prop:poset-C-characterize}

\section{Proof of Proposition~\ref{prop:poset-relax}} \label{sec:proof-prop:poset-relax}

\begin{silentproof}
    % For the sake of simplicity, we deem the initial output variables $Y_v,\,v \in \ms I$ as outputs of auxiliary input variables $X_v,\,v \in \ms I$ without noise, i.e., $Y_v = X_v$. Here, the auxiliary inputs $X_v$ have access to the feedback $Y_{\pa v}$, i.e., the input distributions for $v \in \ms I$ are $\{ P_{ X_v \mid Y_{\pa v} } \}_{v \in \ms I}$. This interpretation with auxiliary initial inputs $X_v, \, v \in \ms I$ results in the same operational characteristics for the channel.

    The proof is based on the capacity characterization of discrete-time channels given in~\eqref{eq:C-lim}.
    Recall the equivalent interpretation of the poset-causal channel where the communication nodes transmit according to the total order $\tau : \ms C \to \NN$.
    % For the $n$\textsuperscript{th} instance of the channel with $\mc V_n,\mc C_n,\mc I_n$, we consider the modification $\tau'_n : \mc I_n \cup \ms C \to \NN$, where $\tau$ is shifted by $|\mc I_n|$ by elements so that $\tau(v) = m$ implies $\tau'_n(v) = m + |\mc I_n|$, and ${\tau'_n}^{-1}([m]) = \mc I_n$, where again for every node $v \in \ms \mc I_n$, all its parents $w \in \pa v \cap \ms C$ precede $v$ in the ordering $\tau'_n$.
    By considering the subsequence $\{\mc C_n\}_{n\ge1}$ via~\eqref{eq:tau-subseq}, this interpretation
    % of the channel
    gives the feedback capacity through~\eqref{eq:C-lim} as
    \begin{equation} \label{eq:poset-time-index}
        C^{\textup{fb}} (P_0) = \lim_{n\to\infty} \frac{1}{|\mc C_n|} \max_{ \left\{ P_{X_t|X^{t-1},Y^{t-1}} \right\}_{t=1}^{|\mc C_n|} } I(X^{|\mc C_n|}\to Y^{|\mc C_n|}) .
    \end{equation}
    We show that for each $n$, the maximization instance in~\eqref{eq:poset-time-index} is upper-bounded by the maximization instance on the RHS of~\eqref{eq:poset-relax}.
    Fix a set of feasible variables $\left\{ P'_{X_t|X^{t-1},Y^{t-1}} \right\}_{t=1}^{|\mc C_n|}$ for~\eqref{eq:poset-time-index}.
    Through the channel, this induces a full-joint distribution $P'_{X_{\mc C_n},Y_{\mc V_n}}$ given by the factorization
    \begin{equation}
        P'_{X_{\mc C_n},Y_{\mc V_n}} = P_0 \cdot \prod_{v \in \mc C_n} P'_{X_{\tau(v)} \mid X^{\tau(v)-1},Y^{\tau(v)-1}} \cdot \prod_{v \in \mc C_n} Q_{Y_v \mid X_v , Y_{\pa v}} .
    \end{equation}
    We set $P_{X_v,Y_{\pa v}} , v \in \mc C_n$ to be the corresponding marginalizations of the full-joint distribution $P'_{X_{\mc C_n},Y_{\mc V_n}}$, i.e.,
    \begin{equation} \label{eq:poset-p-p'}
        P_{X_v,Y_{\pa v}} \coloneq P'_{X_v,Y_{\pa v}} \quad \forall v \in \mc C_n ,
    \end{equation}
    which by construction satisfies the constraints on the RHS of~\eqref{eq:poset-relax}.
    
    On the other hand,
    the directed information on the RHS of~\eqref{eq:poset-time-index} is given by
    \begin{equation}
        I( X^{|\mc C_n|} \to Y^{|\mc C_n|} ) % \\
        = \sum_{t=1}^{|\mc C_n|} I( X^t ; Y_t \mid Y^{t-1} ) .
        % &= \sum_{v \in \mc V_n} I( X_v ; Y_v \mid Y_{\pa v} ) .
    \end{equation}
    Consider each summand $I( X^t ; Y_t \mid Y^{t-1} )$ for some time index $t$. Denoting $t = \tau(v)$, we have
    \begin{align}
        I( X^t ; Y_t \mid Y^{t-1} ) &= \underbrace{ H(Y_t \mid Y^{t-1}) }_{ \le H(Y_t \mid Y_{\pa v}) } - \underbrace{ H( Y_t \mid X^t , Y^{t-1} ) }_{ = H( Y_t \mid X_t , Y_{\pa v} ) } \\
        &\le H(Y_t \mid Y_{\pa v}) - H( Y_t \mid X_t , Y_{\pa v} ) \\
        &= I( X_t ; Y_t \mid Y_{\pa v} ) % \\
        = I( X_v ; Y_v \mid Y_{\pa v} ) .
    \end{align}
    Summing these up,
    % and using \eqref{eq:tau-Vn-sync},
    we obtain
    \begin{align}
        I( X^{|\mc C_n|} \to Y^{|\mc C_n|} ) &\le \sum_{v \in \mc C_n} I( X_v ; Y_v \mid Y_{\pa v} ) % \\
        = \sum_{t=1}^{|\mc C_n|} I( X_{\tau^{-1}(t)} ; Y_{\tau^{-1}(t)} \mid Y_{ \pa{ \tau^{-1}(t) } } ) .
    \end{align}
    This implies the inequality
    \begin{equation} \label{eq:lower-bound-init}
        \max_{ \left\{ P_{X_t|X^{t-1},Y^{t-1}} \right\}_{t=1}^{|\mc C_n|} } I(X^{|\mc C_n|} \to Y^{|\mc C_n|})
        \le
        \max_{ \left\{ P_{X_t|X^{t-1},Y^{t-1}} \right\}_{t=1}^{|\mc C_n|} } \sum_{t=1}^{|\mc C_n|} I( X_{\tau^{-1}(t)} ; Y_{\tau^{-1}(t)} \mid Y_{ \pa{ \tau^{-1}(t) } } ) .
    \end{equation}
    For a fixed $t = \tau(v)$, each term $I( X_{\tau^{-1}(t)} ; Y_{\tau^{-1}(t)} \mid Y_{ \pa{ \tau^{-1}(t) } } ) = I( X_v ; Y_v \mid Y_{\pa v} )$ is a function of the distribution $P_{ X_v , Y_{\pa v} }$
    via the channel constant $Q_{ Y_v \mid X_v , Y_{\pa v} }$.
    Therefore, by the construction in~\eqref{eq:poset-p-p'}, the RHS of~\eqref{eq:lower-bound-init} has the same objective value as the sum $\sum_{v \in \mc C_n} I( X_v ; Y_v \mid Y_{\pa v} )$ induced by ${P_{X_v,Y_{\pa v}} , v \in \mc C_n}$ that are defined in~\eqref{eq:poset-p-p'}.
    Therefore, for each set of feasible variables for the RHS of~\eqref{eq:poset-time-index},
    there exists a set of feasible variables for the RHS of~\eqref{eq:poset-relax}.
    This yields Proposition~\ref{prop:poset-relax}.
\end{silentproof}

\section{Proof of Theorem~\ref{thm:poset-single-letter}} \label{sec:proof-thm:poset-single-letter}

\begin{silentproof}
    % As a result, it is sufficient to prove
    % \begin{equation} \label{eq:poset-upper-bound-suff}
    %     \begin{aligned}
    %         &\max_{ \left\{ P_{Y_v , X_v , Y_{\pa v}} \right\}_{v \in \mc C_n} } \ \frac{1}{|\mc C_n|} \sum_{v \in \mc C_n} I( X_v ; Y_v \mid Y_{\pa v} )
    %         \quad \le \quad
    %         \max_{ P_{X , \ov Y'} } I(X;Y \mid \ov Y') \\[-1.5ex]
    %         & \hspace{1.25cm} \textup{s.t.} \qquad \mathrm{LE^{(n)}_{\text{c}}} \hspace{6cm} \textup{s.t.} \quad P_Y = P_{Y'_i} \ \forall i , \\[-2ex]
    %         & \hspace{10.3cm} P_{Y'_{\mc S}} = P_{Y'_{\mc T}} \ \forall S \sim T .
    %     \end{aligned}
    % \end{equation}
    
    We want to show that for a poset-causal channel that is approximately symmetric, we have
    \begin{equation} \label{eq:poset-single-let-restate}
        \begin{aligned}
            C^{\textup{fb}}_{\textup{u.b.}} \quad \coloneq \quad
            \lim_{n\to\infty} & \max_{\left\{P_{X_v,Y_{\pa v}}\right\}_{v \in \mc C_n}} \frac{1}{|\mc C_n|} \sum_{v \in \mc C_n} I( X_v ; Y_v \mid Y_{\pa v} )
            \quad = \quad
            \max_{ P_{X , \ov Y'} } I(X;Y \mid \ov Y') \\
            & \hspace{-1cm} \qquad \textup{s.t.} \ \ P_{Y_v} = \sum_{x_v , y_{\pa v}} P_{Y_v \mid X_v , Y_{\pa v}} P_{X_v , Y_{\pa v}},
            \hspace{1.9cm} \textup{s.t.} \quad P_Y = P_{Y'_i} \ \forall i , \\[-1ex] % \notag \\[-1ex]
            & \hspace{-2.425cm} \sum_{x_v , y_{\pa v \setminus \pa u}} P_{X_v , Y_{\pa v}} = \sum_{x_u , y_{\pa u \setminus \pa v}} P_{X_u , Y_{\pa u}} \ \forall u,v \in \mc C_n
            \hspace{2.125cm} P_{Y'_{\mc S}} = P_{Y'_{\mc T}} \ \forall \mc S \sim \mc T . % \notag
        \end{aligned}
    \end{equation}
    Denote the consistency linear equality constraints on the LHS of~\eqref{eq:poset-single-let-restate} by $\mathrm{LE^{(n)}_{\text{c}}}$ for the $n$\textsuperscript{th} maximization instance.
    
    Define $D_\epsilon$ as the set of those $P_{X , \ov Y'}$ satisfying the constraints of RHS \emph{$\epsilon$-approximately},~i.e.,
    \begin{equation} \label{eq:poset-upper-b-def-D-eps}
        \begin{aligned}
            D_{\epsilon} \coloneq \bigg\{ P_{X , \overline{Y}'} \ : \ & \left| P_{Y'_i}(y) - P_Y(y) \right| \le \epsilon \ \ \forall y \in \mc Y , \, \forall i \in [d] , \\[-2ex]
            & \left| P_{Y'_{\mc S}}(\ov y) - P_{Y'_{\mc T}}(\ov y) \right| \le \epsilon \ \ \forall \ov y \in \mc Y^{|\mc S|} , \, \forall \mc S \sim \mc T \bigg\} .
        \end{aligned}
    \end{equation}
    Also denote $\delta_n \coloneq 4 \dfrac{|\mc B_n|}{|\mc V_n|}$, which by assumption converges to $0$ as $n$ grows.
    % The proof proceeds as follows.
    
    % We upper bound each instance of maximizations on the LHS of~\eqref{eq:poset-single-let-restate} by the single-letter expression on the RHS, subject to $P_{X , \ov Y'} \in D_{\delta_n}$. That is, fixing an $n\ge1$, we show
    We show for each $n\ge1$ that
    \begin{equation} \label{eq:poset-upper-b-instance}
        \begin{aligned}
            &\max_{ \left\{ P_{X_v , Y_{\pa v}} \right\}_{v \in \mc C_n} } \ \frac{1}{|\mc C_n|} \sum_{v \in \mc C_n} I( X_v ; Y_v \mid Y_{\pa v} )
            \quad \le \quad
            \max_{ P_{X , \ov Y'} \in D_{\delta_n} } I(X;Y \mid \ov Y') . \\[-1.5ex]
            & \hspace{1.0cm} \textup{s.t.} \qquad \mathrm{LE^{(n)}_{\text{c}}}
        \end{aligned}
    \end{equation}
    Afterwards, we conclude the proof by taking $n\to\infty$.

    To prove~\eqref{eq:poset-upper-b-instance}, we show that for any given feasible point for the maximization on the LHS, there is a feasible point for the maximization on the RHS that has objective value at least as large.
    To that end, take any feasible $\left\{ P'_{X_v , Y_{\pa v}} \right\}_{v \in \mc C_n}$ for the LHS. In what follows, the quantities associated with the symbol $'$ will indicate that it is induced by these variables.

    % Each $P'_{Y_v , X_v , Y_{\pa v}}$ can be marginalized to obtain $P'_{X_v , Y_{\pa v}}$.
    Define $\widetilde P^{(n)}_{X , \overline{Y}'}$ as the average of $P'_{X_v , Y_{\pa v}}$ for $v \in \mc C_n$, i.e.,
    \begin{equation} \label{eq:poset-upper-b-def-P-tilde}
        \widetilde P^{(n)}_{X , \overline{Y}'} (x , \overline{y}') \ \coloneq \ \frac{1}{|\mc C_n|} \sum_{v \in \mc C_n} P'_{ X_v , Y_{\pa v} } (x , \overline{y}') \quad \forall x,\overline{y}' .
    \end{equation}
    % The lemma below asserts that $\widetilde P^{(n)}_{X , \overline{Y}'}$ $\epsilon$-approximately satisfies the constraints as in~\eqref{eq:poset-upper-b-def-D-eps} for $\epsilon = \delta_n$.

    % \begin{lemma} \label{lem:P-tilde-in-D-delta}
    %     We have $\widetilde P^{(n)}_{X , \overline{Y}'} \in D_{\delta_n}$.
    % \end{lemma}

    We show that $\widetilde P^{(n)}_{X , \overline{Y}'} \in D_{\delta_n}$.
    % \begin{proof}
    %     See Appendix~\ref{sec:proof-lem:P-tilde-in-D-delta}.
    % \end{proof}
    We omit the power ${(n)}$ and denote $\widetilde P^{(n)}$ by $\widetilde P$ for simplicity.    
    % We show that the two conditions in~\eqref{eq:poset-upper-b-def-D-eps} hold, one of which pertains to the difference of $\widetilde P_{Y'_i}$ and $\widetilde P_Y$, and the other pertains to the difference of $\widetilde P_{Y'_{\mc S}}$ and $\widetilde P_{Y'_{\mc T}}$ for equivalent ordered subsets $S \sim T$ of $[d]$.

    We start with analyzing the marginals $\widetilde P_{Y'_i}$ for a fixed $i$, and then $\widetilde P_Y$.
    Denote ${\ov y' \coloneq (y'_1 , \ldots , y'_d)}$, and let $\overline{y}_{ \setminus \{i\} }'$ denote the ($d-1$)-tuple obtained by removing $y'_i$ from the $d$-tuple~$\overline{y}'$.
    $\widetilde P_{Y'_i} (y'_i)$ can be written as
    \begin{align}
        \widetilde P_{Y'_i} (y'_i) &= \sum_{x , \overline{y}_{ \setminus \{i\} }'} \hspace{-0.1cm} \widetilde P_{X , \overline{Y}'} (x , \overline{y}') = \frac{1}{|\mc C_n|} \sum_{v \in \mc C_n} \sum_{x , \overline{y}_{ \setminus \{i\} }'} \hspace{-0.1cm} P'_{ X_v , Y_{\pa v} } (x , \overline{y}') && \text{(marginalization, \eqref{eq:poset-upper-b-def-P-tilde})} \\
        &= \frac{1}{|\mc C_n|} \sum_{v \in \mc C_n} P'_{ Y_{s_i^{-1} \cdot v} } (y'_i) && \text{(marginalization)} \\
        &= \frac{1}{|\mc C_n|} \sum_{v \in s_i^{-1} \cdot \mc C_n} P'_{ Y_{v} } (y'_i) && \text{(change of variables)} \\
        &= \frac{1}{|\mc C_n|} \hspace{-0.1cm} \left( \sum_{v \, \in \, s_i^{-1} \cdot \mc C_n \, \setminus \, \mc C_n} \hspace{-0.7cm} P'_{ Y_{v} } (y'_i) + \hspace{-0.5cm} \sum_{v \, \in \, s_i^{-1} \cdot \mc C_n \, \cap \, \mc C_n} \hspace{-0.7cm} P'_{ Y_{v} } (y'_i) \right) \, \forall y'_i \in \mc Y . \label{eq:first-sum}
    \end{align}
    Analogously, $\widetilde P_Y(y)$ can be written as
    \begin{align}
        \widetilde P_Y(y) &= \sum_{x,\overline{y}'} Q_{Y \mid X,\overline{Y}'} (y \mid x,\overline{y}') \cdot \widetilde P_{X , \overline{Y}'} (x , \overline{y}') && \text{(channel output)} \\
        &= \sum_{x,\overline{y}'} Q_{Y \mid X,\overline{Y}'} (y \mid x,\overline{y}') \cdot \frac{1}{|\mc C_n|} \sum_{v \in \mc C_n} P'_{ X_v , Y_{\pa v} } (x , \overline{y}') && \text{(by~\eqref{eq:poset-upper-b-def-P-tilde})} \\
        &= \frac{1}{|\mc C_n|} \sum_{v \in \mc C_n} P'_{Y_v} (y) && \text{(channel output)} \\
        &= \frac{1}{|\mc C_n|} \left( \sum_{v \, \in \, \mc C_n \, \setminus \, s_i^{-1} \cdot \mc C_n} \hspace{-0.5cm} P'_{Y_v} (y) + \hspace{-0.5cm} \sum_{v \, \in \, \mc C_n \, \cap \, s_i^{-1} \cdot \mc C_n} \hspace{-0.5cm} P'_{Y_v} (y) \right) \quad \forall y \in \mc Y . \label{eq:second-sum}
    \end{align}

    By~\eqref{eq:first-sum} and~\eqref{eq:second-sum}, the absolute difference of $\widetilde P_{Y'_i} (y)$ and $\widetilde P_Y(y)$ can be upper-bounded as
    \begin{align}
        \left| \widetilde P_{Y'_i} (y) - \widetilde P_Y(y) \right|
        &\le \frac{1}{|\mc C_n|} \left( \sum_{v \, \in \, s_i^{-1} \cdot \mc C_n \, \setminus \, \mc C_n} \hspace{-0.5cm} \left| P'_{ Y_{v} } (y'_i) \right| + \hspace{-0.25cm} \sum_{v \, \in \, \mc C_n \, \setminus \, s_i^{-1} \cdot \mc C_n} \hspace{-0.5cm} \left| P'_{Y_v} (y) \right| \right) && \text{(triangle ineq.)} \\
        &\le \frac{1}{|\mc C_n|} \Big( \left| s_i^{-1} \cdot \mc C_n \setminus \mc C_n \right|+ \left| \mc C_n \setminus s_i^{-1} \cdot \mc C_n \right| \Big) \\
        &\le 2 \frac{\left| \mc B_n \right|}{|\mc C_n|}
        \le 4 \frac{\left| \mc B_n \right|}{|\mc C_n|}
        = \delta_n , \label{eq:lemma-cond-1}
    \end{align}
    where~\eqref{eq:lemma-cond-1} follows because the sets $s_i^{-1} \cdot \mc C_n \setminus \mc C_n$ and $\mc C_n \setminus s_i^{-1} \cdot \mc C_n$ have the same cardinality, and $s_i^{-1} \cdot \mc C_n \setminus \mc C_n$
    is a subset of the set of boundary nodes $\mc B_n \subset \mc V_n$ by the definition of boundary nodes in Definition~\ref{def:approx-sym}.

    Next, we analyze the marginals $\widetilde P_{Y'_{\mc S}}$ and $\widetilde P_{Y'_{\mc T}}$ for $\mc S \sim \mc T$.
    Take any two ordered subsets $S, T$ of $[d]$ with $\mc S \sim \mc T$ with $|\mc S|=|\mc T|=k>0$.
    For any node $v \in \ms C$, denote
    \begin{align}
        \pai{S}{v} &= ( s_{i_1}^{-1} \cdot v \, , \, \ldots \, , \, s_{i_k}^{-1} \cdot v ) , \\
        \pai{T}{v} &= ( s_{j_1}^{-1} \cdot v \, , \, \ldots \, , \, s_{j_k}^{-1} \cdot v ) .
    \end{align}
    By Definition~\ref{def:poset-equiv-ordered-subsets}, there exist nodes
    $v_0, w_0 \in \ms C$ such that $\pai{S}{v_0} = \pai{T}{w_0}$.
    This implies\linebreak ${s_{i_\ell}^{-1} \cdot v_0 = s_{j_\ell}^{-1} \cdot w_0}$ for all $\ell \in [k]$.
    Equivalently, $s_{j_\ell} \cdot s_{i_\ell}^{-1} = w_0 \cdot v_0^{-1}$ for all $\ell \in [k]$.
    This implies
    \begin{equation} \label{eq:relation-S-T-v}
        \pai{S}{v} = \pai{T}{s_{j_\ell} \cdot s_{i_\ell}^{-1} \cdot v} \quad \forall \ell \in [k], \, \forall v \in \ms C .
    \end{equation}

    $\widetilde P_{Y'_{\mc S}}$ can be written as
    \begin{align}
        \widetilde P_{Y'_{\mc S}} (y'_{\mc S}) &= \sum_{x , \overline{y}_{\setminus S}'} \widetilde P_{X , \overline{Y}'} (x , \overline{y}')
        = \sum_{x , \overline{y}_{\setminus S}'} \frac{1}{|\mc C_n|} \sum_{v \in \mc C_n} \hspace{-0.075cm} P'_{ X_v , Y_{\pa v} } (x , \overline{y}') && \text{(marginalization \& \eqref{eq:poset-upper-b-def-P-tilde})} \\
        &= \frac{1}{|\mc C_n|} \sum_{v \in \mc C_n} \sum_{x , \overline{y}_{\setminus S}'} P'_{ X_v , Y_{\pa v} } (x , \overline{y}') \\
        &= \frac{1}{|\mc C_n|} \sum_{v \in \mc C_n} P'_{ Y_{\mathrm{pa}_{\mc S}(v)} } (y'_{\mc S}) \qquad \forall y'_{\mc S} \in \mc Y^{k} . && \text{(marginalization)}
    \end{align}
    Analogously, $P_{Y'_{\mc T}}$ equals
    \begin{align}
        \widetilde P_{Y'_{\mc T}} (y'_{\mc T})
        = \frac{1}{|\mc C_n|} \sum_{v \in \mc C_n} P'_{ Y_{\mathrm{pa}_{\mc T}(v)} } (y'_{\mc T}) \qquad \forall y'_{\mc T} \in \mc Y^{k} .
    \end{align}
    
    By~\eqref{eq:relation-S-T-v}, for each $v \in \mc C_n$,
    whenever $v' \coloneq s_{j_1} \cdot s_{i_1}^{-1} \cdot v$ belongs to the set $\mc C_n$,
    the terms $P'_{ Y_{\mathrm{pa}_{\mc S}(v)} } (y'_{\mc S})$ and $P'_{ Y_{\mathrm{pa}_{\mc T}(v')} } (y'_{\mc S})$ cancel out in $\widetilde P_{Y'_{\mc S}} (y'_{\mc S}) - \widetilde P_{Y'_{\mc T}} (y'_{\mc S})$.
    We thus express $\widetilde P_{Y'_{\mc S}}$ and $\widetilde P_{Y'_{\mc T}}$ as
    \begin{align}
        \widetilde P_{Y'_{\mc S}} (y'_{\mc S}) &= \frac{1}{|\mc C_n|} \sum_{v \in \mc C_n} P'_{ Y_{\mathrm{pa}_{\mc S}(v)} } (y'_{\mc S})
        = \frac{1}{|\mc C_n|} \left( \sum_{\substack{ v \in \mc C_n \\ s_{j_1} \cdot s_{i_1}^{-1} \cdot v \notin \mc C_n } } P'_{ Y_{ \mathrm{pa}_{\mc S}(v) } } (y'_{\mc S}) + \sum_{\substack{ v \in \mc C_n \\ s_{j_1} \cdot s_{i_1}^{-1} \cdot v \in \mc C_n } } P'_{ Y_{ \mathrm{pa}_{\mc S}(v) } } (y'_{\mc S}) \right) ,
    \end{align}
    and
    \begin{equation}
        \widetilde P_{Y'_{\mc T}} (y'_{\mc T})
        = \frac{1}{|\mc C_n|} \left( \sum_{\substack{ w \in \mc C_n \\ s_{i_1} \cdot s_{j_1}^{-1} \cdot w \notin \mc C_n } } P'_{ Y_{ \mathrm{pa}_{\mc T}(w) } } (y'_{\mc T}) + \sum_{\substack{ w \in \mc C_n \\ s_{i_1} \cdot s_{j_1}^{-1} \cdot w \in \mc C_n } } P'_{ Y_{ \mathrm{pa}_{\mc T}(w) } } (y'_{\mc T}) \right) .
    \end{equation}
    Notice that in the two decompositions above, the second sums equal each other when $y'_{\mc S} = y'_{\mc T}$.
    This is because each $v \in \mc C_n$ such that $s_{j_1} \cdot s_{i_1}^{-1} \cdot v \in \mc C_n$ corresponds to a $w \in \mc C_n$ such that $s_{i_1} \cdot s_{j_1}^{-1} \cdot w \in \mc C_n$.
    Therefore,
    \begin{equation}
        \widetilde P_{Y'_{\mc S}} (y'_{\mc S}) - \widetilde P_{Y'_{\mc T}} (y'_{\mc S})
        = \frac{1}{|\mc C_n|} \left( \sum_{\substack{ v \in \mc C_n \\ s_{j_1} \cdot s_{i_1}^{-1} \cdot v \notin \mc C_n } } P'_{ Y_{ \mathrm{pa}_{\mc S}(v) } } (y'_{\mc S}) - \sum_{\substack{ w \in \mc C_n \\ s_{i_1} \cdot s_{j_1}^{-1} \cdot w \notin \mc C_n } } P'_{ Y_{ \mathrm{pa}_{\mc T}(w) } } (y'_{\mc S}) \right) .
    \end{equation}
    We upper-bound the cardinalities of the index sets of the two sums.
    We have
    \begin{align}
        \left| \{ v \in \mc C_n \, \big| \, s_{j_1} \cdot s_{i_1}^{-1} \cdot v \notin \mc C_n \} \right|
        & \ = \ \left| \{ v \in \mc C_n \, \big| \, s_{j_1} \cdot s_{i_1}^{-1} \cdot v \notin \mc C_n \ \text{ and } \ s_{i_1}^{-1} \cdot v \notin \mc C_n \} \right| \\
        & \ \qquad + \left| \{ v \in \mc C_n \, \big| \, s_{j_1} \cdot s_{i_1}^{-1} \cdot v \notin \mc C_n \ \text{ and } \ s_{i_1}^{-1} \cdot v \in \mc C_n \} \right| \\[0.5ex]
        & \ \le \ \left| \{ v \in \mc C_n \, \big| \ \ s_{i_1}^{-1} \cdot v \notin \mc C_n \} \right| \\
        & \ \qquad + \left| \{ v' \in \mc C_n \, \big| \, s_{j_1} \cdot v' \notin \mc C_n \} \right| \qquad \quad (v' = s_{i_1}^{-1} \cdot v) \\[0.5ex]
        & \ \le  2 |\mc B_n| .
    \end{align}
    Analogously, we also have
    $\left| \{ w \in \mc C_n \, \big| \, s_{i_1} \cdot s_{j_1}^{-1} \cdot w \notin \mc C_n \} \right| \le 2 |\mc B_n|$.
    % \begin{equation}
    %     \left| \{ w \in \mc C_n \, \big| \, s_{i_1} \cdot s_{j_1}^{-1} \cdot w \notin \mc C_n \} \right| \ \le \ 2 |\mc B_n| .
    % \end{equation}
    These imply
    \begin{align}
        \left| \widetilde P_{Y'_{\mc S}} (y'_{\mc S}) - \widetilde P_{Y'_{\mc T}} (y'_{\mc S}) \right|
        & \ \le \ \frac{1}{|\mc C_n|} \left( \sum_{\substack{ v \in \mc C_n \\ s_{j_1} \cdot s_{i_1}^{-1} \cdot v \notin \mc C_n } } \left| P'_{ Y_{ \mathrm{pa}_{\mc S}(v) } } (y'_{\mc S}) \right| + \sum_{\substack{ w \in \mc C_n \\ s_{i_1} \cdot s_{j_1}^{-1} \cdot w \notin \mc C_n } } \left| P'_{ Y_{ \mathrm{pa}_{\mc T}(w) } } (y'_{\mc S}) \right| \right) && \text{(triangle ineq.)} \\
        & \ \le \ 4 \frac{|\mc B_n|}{|\mc C_n|} . \label{eq:part2}
    \end{align}

    Inequalities~\eqref{eq:lemma-cond-1} and~\eqref{eq:part2}, according to the definition of $D_\epsilon$ in~\eqref{eq:poset-upper-b-def-D-eps}, imply that $\widetilde P_{X , \overline{Y}'} \in D_{\delta_n}$.

    The conditional mutual information $I(X;Y \mid \ov Y')$ is a concave function of $P_{X , \overline{Y}'}$, as the channel constants $Q_{Y \mid X , \ov Y'}$ are fixed. Hence, due to $\widetilde P^{(n)}_{X , \overline{Y}'}$ being defined through the averaging in~\eqref{eq:poset-upper-b-def-P-tilde}, by Jensen's inequality, we get
    \begin{equation} \label{eq:poset-upper-b-jensen}
        \frac{1}{|\mc C_n|} \sum_{v \in \mc C_n} I'( X_v ; Y_v \mid Y_{\pa v} )
        \quad \le \quad
        I(X;Y \mid \ov Y') \, \Bigg|_{ P_{X , \overline{Y}'} = \widetilde P^{(n)}_{X , \overline{Y}'} }
    \end{equation}
    where the mutual information terms $I'$ on the LHS are induced by the variables $\left\{ P'_{X_v , Y_{\pa v}} \right\}_{v \in \mc C_n}$, and the channel constants.
    \eqref{eq:poset-upper-b-jensen} implies that for any feasible $\left\{ P'_{X_v , Y_{\pa v}} \right\}_{v \in \mc C_n}$ for the LHS of~\eqref{eq:poset-upper-b-instance}, there is a feasible $\widetilde P^{(n)}_{X , \overline{Y}'} \in D_{\delta_n}$ for the RHS with induced objective value at least as large.
    Therefore, maximizing the LHS over all feasible $\left\{ P'_{X_v , Y_{\pa v}} \right\}_{v \in \mc C_n}$ and maximizing the RHS over all $P_{X , \ov Y'} \in D_{\delta_n}$, the inequality in~\eqref{eq:poset-upper-b-jensen} will still hold.
    That is, we have
    \begin{equation} \label{eq:poset-upper-b-instance-2}
        \begin{aligned}
            &\max_{ \left\{ P_{X_v , Y_{\pa v}} \right\}_{v \in \mc C_n} } \ \frac{1}{|\mc C_n|} \sum_{v \in \mc C_n} I( X_v ; Y_v \mid Y_{\pa v} )
            \quad \le \quad
            \max_{ P_{X , \ov Y'} \in D_{\delta_n} } I(X;Y \mid \ov Y') . \\[-1.5ex]
            & \hspace{1cm} \textup{s.t.} \qquad \mathrm{LE^{(n)}_{\text{c}}}
        \end{aligned}
    \end{equation}

    Each $D_{\delta_n}$ is a compact set by the definition in~\eqref{eq:poset-upper-b-def-D-eps}. On the other hand, we know that the sequence $\{\delta_n\}_n$ converges to $0$, and thus has a monotonically decreasing subsequence that converges to $0$ as well. Denote such a subsequence by $\{\delta_{n_k}\}_{k}$. Each $D_{\delta_{n_k}}$ is a compact set, and the sequence of sets $\{D_{\delta_{n_k}}\}_k$ is therefore a decreasing sequence of compact sets, and hence has nonempty limit $D_0$, given by
    \begin{equation} \label{eq:D0}
        D_0 \coloneq \left\{ \left. P_{X , \ov Y'} \, \right| \, P_Y = P_{Y'_i} \ \forall i \, , \ \ P_{Y'_{\mc S}} = P_{Y'_{\mc T}} \ \forall \mc S \sim \mc T \right\} ,
    \end{equation}
    which is precisely the constraint set of the single-letter expression on the RHS of~\eqref{eq:poset-converse}. Therefore, taking $n$ to infinity on both sides of~\eqref{eq:poset-upper-b-instance-2}, we obtain
    \begin{align}
        \lim_{n\to\infty} \max_{ \left\{ P_{X_v , Y_{\pa v}} \right\}_{v \in \mc C_n} } \ \frac{1}{|\mc C_n|} \sum_{v \in \mc C_n} I( X_v ; Y_v \mid Y_{\pa v} )
        \quad &\le \quad
        \lim_{n\to\infty} \, \max_{ P_{X , \ov Y'} \in D_{\delta_n} } I(X;Y \mid \ov Y') \\
        \textup{s.t.} \qquad \mathrm{LE^{(n)}_{\text{c}}} \hspace{4.25cm} &= \quad \max_{ P_{X , \ov Y'} \in D_0 } I(X;Y \mid \ov Y') \\
        &= \quad \max_{ P_{X , \ov Y'} } \, I(X;Y \mid \ov Y') \\[-1.5ex]
        & \hspace{1.1cm} \textup{s.t. } P_Y = P_{Y'_i} \ \forall i , \notag \\[-2ex]
        & \hspace{1.575cm} P_{Y'_{\mc S}} = P_{Y'_{\mc T}} \ \forall \mc S \sim \mc T \notag
    \end{align}
    as desired. This concludes the proof of Theorem~\ref{thm:poset-single-letter}.
\end{silentproof}

% \newpage

\section{} \label{sec:proof-alpha-ge-0.5}

% We prove this fact as follows.
Assume $P_Y = P_{Y'_1} = P_{Y'_2}$ and $\alpha \ge 0.5$.
We have
\begin{align}
    P_Y(1) &= P_M(1) \cdot (1-\alpha) \\
    &= (1-\alpha) \cdot \left[ P_{Y'_1,Y'_2} (1,1) + P_{Y'_1,Y'_2} (1,0) \cdot P_{X \mid Y'_1,Y'_2} (1 \mid 1,0) + P_{Y'_1,Y'_2} (0,1) \cdot P_{X \mid Y'_1,Y'_2} (1 \mid 0,1) \right] \label{eq:z-proof-0} \\
    &\le (1-\alpha) \cdot \left[ P_{Y'_1,Y'_2} (1,1) + P_{Y'_1,Y'_2} (1,0) + P_{Y'_1,Y'_2} (0,1) \right] . \label{eq:z-proof-1}
\end{align}
At the same time, due to $P_Y = P_{Y'_1} = P_{Y'_2}$, we have
\begin{align}
    P_Y(1) &= P_{Y'_1}(1) = P_{Y'_1,Y'_2} (1,1) + P_{Y'_1,Y'_2} (1,0) \label{eq:z-proof-2} \\
    &= P_{Y'_2}(1) = P_{Y'_1,Y'_2} (1,1) + P_{Y'_1,Y'_2} (0,1) . \label{eq:z-proof-3}
\end{align}
By~\eqref{eq:z-proof-1}--\eqref{eq:z-proof-3} we obtain
\begin{align}
    P_{Y'_1,Y'_2} (1,1) + P_{Y'_1,Y'_2} (1,0) &\le (1-\alpha) \cdot \left[ P_{Y'_1,Y'_2} (1,1) + P_{Y'_1,Y'_2} (1,0) + P_{Y'_1,Y'_2} (0,1) \right] , \label{eq:z-proof-4} \\
    P_{Y'_1,Y'_2} (1,1) + P_{Y'_1,Y'_2} (0,1) &\le (1-\alpha) \cdot \left[ P_{Y'_1,Y'_2} (1,1) + P_{Y'_1,Y'_2} (1,0) + P_{Y'_1,Y'_2} (0,1) \right] . \label{eq:z-proof-5}
\end{align}
Summing both sides of these two inequalities yield
\begin{equation} \label{eq:z-proof-6}
    2 \cdot P_{Y'_1,Y'_2} (1,1) + P_{Y'_1,Y'_2} (1,0) + P_{Y'_1,Y'_2} (0,1)
    \le 2(1-\alpha) \cdot \left[ P_{Y'_1,Y'_2} (1,1) + P_{Y'_1,Y'_2} (1,0) + P_{Y'_1,Y'_2} (0,1) \right] .
\end{equation}
\paragraph*{For $\alpha > 0.5$}
\eqref{eq:z-proof-6} implies $P_{Y'_1,Y'_2} (1,1) = P_{Y'_1,Y'_2} (1,0) = P_{Y'_1,Y'_2} (0,1) = 0$.
Therefore,\linebreak ${(Y'_1,Y'_2)=(1,1)}$ deterministically, which implies $M=1$ deterministically. This means $Y$ is independent of $X$, and thus $I(X;Y \mid Y'_1,Y'_2) = 0$.
\paragraph*{For $\alpha = 0.5$}
\eqref{eq:z-proof-6} implies $P_{Y'_1,Y'_2} (1,1) = 0$.
By~\eqref{eq:z-proof-4} and~\eqref{eq:z-proof-5}, we get\linebreak ${P_{Y'_1,Y'_2} (1,0) = P_{Y'_1,Y'_2} (1,0) = p}$.
Then by~\eqref{eq:z-proof-2} % and~\eqref{eq:z-proof-3},
we have that $P_Y(1) = p$ as well.
These together with~\eqref{eq:z-proof-0} force $P_{X \mid Y'_1,Y'_2} (1 \mid 1,0) = P_{X \mid Y'_1,Y'_2} (1 \mid 0,1) = 1$.
All in all, $P_{Y'_1,Y'_2} (1,1) = 0$ and\linebreak ${P_{X \mid Y'_1,Y'_2} (1 \mid 1,0) = P_{X \mid Y'_1,Y'_2} (1 \mid 0,1) = 1}$ imply that $X$ cannot affect $M$ or $Y$. Consequently,\linebreak ${I(X;Y \mid Y'_1,Y'_2) = 0}$.

\newpage

% \bibliographystyle{IEEEtran}
% \bibliography{main.bib}

\begin{thebibliography}{10}
\providecommand{\url}[1]{#1}
\csname url@samestyle\endcsname
\providecommand{\newblock}{\relax}
\providecommand{\bibinfo}[2]{#2}
\providecommand{\BIBentrySTDinterwordspacing}{\spaceskip=0pt\relax}
\providecommand{\BIBentryALTinterwordstretchfactor}{4}
\providecommand{\BIBentryALTinterwordspacing}{\spaceskip=\fontdimen2\font plus
\BIBentryALTinterwordstretchfactor\fontdimen3\font minus \fontdimen4\font\relax}
\providecommand{\BIBforeignlanguage}[2]{{%
\expandafter\ifx\csname l@#1\endcsname\relax
\typeout{** WARNING: IEEEtran.bst: No hyphenation pattern has been}%
\typeout{** loaded for the language `#1'. Using the pattern for}%
\typeout{** the default language instead.}%
\else
\language=\csname l@#1\endcsname
\fi
#2}}
\providecommand{\BIBdecl}{\relax}
\BIBdecl

\bibitem{Shannon1948}
C.~E. Shannon, ``A mathematical theory of communication,'' \emph{The Bell System Technical Journal}, vol.~27, no.~3, pp. 379--423, Oct. 1948.

\bibitem{Gallager1962}
R.~Gallager, ``Low-density parity-check codes,'' \emph{IRE Transactions on Information Theory}, vol.~8, no.~1, pp. 21--28, Jan. 1962.

\bibitem{Berrou1993}
C.~Berrou, A.~Glavieux, and P.~Thitimajshima, ``Near {S}hannon limit error-correcting coding and decoding: Turbo-codes.~1,'' in \emph{Proceedings of ICC '93 - IEEE International Conference on Communications}, vol.~2, May 1993, pp. 1064--1070 vol.2.

\bibitem{Arikan2009}
E.~Arikan, ``Channel polarization: A method for constructing \mbox{capacity-achieving} codes for symmetric binary-input memoryless channels,'' \emph{IEEE Transactions on Information Theory}, vol.~55, no.~7, pp. 3051--3073, Jul. 2009.

\bibitem{Khina2019}
A.~Khina, V.~Kostina, A.~Khisti, and B.~Hassibi, ``Tracking and control of gauss–markov processes over packet-drop channels with acknowledgments,'' \emph{IEEE Transactions on Control of Network Systems}, vol.~6, no.~2, pp. 549--560, Jun. 2019.

\bibitem{Han2023}
B.~Han, O.~Sabag, V.~Kostina, and B.~Hassibi, ``Coded kalman filtering over gaussian channels with feedback,'' in \emph{2023 59th Annual Allerton Conference on Communication, Control, and Computing (Allerton)}, Sep. 2023, pp. 1--8.

\bibitem{Han2024}
B.~Han, V.~Kostina, B.~Hassibi, and O.~Sabag, ``Coded kalman filtering over mimo gaussian channels with feedback,'' in \emph{2024 IEEE International Symposium on Information Theory (ISIT)}, Jul. 2024, pp. 3261--3266.

\bibitem{Shayevitz2011posterior}
O.~Shayevitz and M.~Feder, ``Optimal feedback communication via posterior matching,'' \emph{IEEE Transactions on Information Theory}, vol.~57, no.~3, pp. 1186--1222, Mar. 2011.

\bibitem{Horstein1963}
M.~Horstein, ``Sequential transmission using noiseless feedback,'' \emph{IEEE Transactions on Information Theory}, vol.~9, no.~3, pp. 136--143, Jan. 1963.

\bibitem{Elias1956}
P.~Elias, ``Channel capacity without coding,'' Massachusetts Institute of Technology, Research Laboratory of Electronics, Cambridge, MA, USA, Quarterly Progress Report~54, Oct. 1956, reprinted in \textit{Lectures on Communication System Theory}, E.~Baghdady, Ed. New York, NY, USA: McGraw-Hill, 1961, pp.~363--366.

\bibitem{Dobrushin1963}
R.~L. Dobrushin, ``General formulation of {S}hannon's main theorem in information theory,'' \emph{American Mathematical Society Translations (Series~2)}, vol.~33, pp. \mbox{323--438}, 1963.

\bibitem{Massey1990}
J.~Massey, ``Causality, feedback and directed information,'' in \emph{Proceedings of the International Symposium on Information Theory and Its Applications (ISITA)}, vol.~2, Nov. 1990.

\bibitem{Tatikonda2009}
S.~Tatikonda and S.~Mitter, ``The capacity of channels with feedback,'' \emph{IEEE Transactions on Information Theory}, vol.~55, no.~1, pp. 323--349, Jan. 2009.

\bibitem{Boche2020FSC}
H.~Boche, R.~F. Schaefer, and H.~V. Poor, ``Shannon meets {T}uring: Non-computability and non-approximability of the finite state channel capacity,'' \emph{Communications in Information and Systems}, vol.~20, no.~2, pp. 81--116, Aug. 2020.

\bibitem{Grigorescu2024feedback}
A.~Grigorescu, H.~Boche, R.~F. Schaefer, and H.~V. Poor, ``Capacity of finite state channels with feedback: Algorithmic and optimization theoretic properties,'' \emph{IEEE Transactions on Information Theory}, vol.~70, no.~8, pp. 5413--5426, Aug. 2024.

\bibitem{Butman}
S.~Butman, ``A general formulation of linear feedback communication systems with solutions,'' \emph{IEEE Transactions on Information Theory}, vol.~15, no.~3, pp. 392--400, May 1969.

\bibitem{CoverPombra1989}
T.~Cover and S.~Pombra, ``Gaussian feedback capacity,'' \emph{IEEE Transactions on Information Theory}, vol.~35, no.~1, pp. 37--43, Jan. 1989.

\bibitem{Elia}
N.~Elia, ``When {Bode} meets {Shannon}: control-oriented feedback communication schemes,'' \emph{IEEE Transactions on Automatic Control}, vol.~49, no.~9, pp. 1477--1488, Sep. 2004.

\bibitem{Kim-Stationary-Gaussian}
Y.-H. Kim, ``Feedback capacity of stationary {Gaussian} channels,'' \emph{IEEE Transactions on Information Theory}, vol.~56, no.~1, pp. 57--85, Jan. 2010.

\bibitem{Gattami-Gaussian}
A.~Gattami, ``Feedback capacity of {Gaussian} channels revisited,'' \emph{IEEE Transactions on Information Theory}, vol.~65, no.~3, pp. 1948--1960, Mar. 2019.

\bibitem{Sabag-MIMO-Gaussian}
O.~Sabag, V.~Kostina, and B.~Hassibi, ``Feedback capacity of {MIMO Gaussian} channels,'' \emph{IEEE Transactions on Information Theory}, vol.~69, no.~10, pp. 6121--6136, Oct. 2023.

\bibitem{trapdoor}
H.~Permuter, P.~Cuff, B.~Van~Roy, and T.~Weissman, ``Capacity of the trapdoor channel with feedback,'' \emph{IEEE Transactions on Information Theory}, vol.~54, no.~7, pp. 3150--3165, Jul. 2008.

\bibitem{Ising}
O.~Elishco and H.~Permuter, ``Capacity and coding for the {Ising} channel with feedback,'' \emph{IEEE Transactions on Information Theory}, vol.~60, no.~9, pp. 5138--5149, Sep. 2014.

\bibitem{permuter2014post}
H.~H. Permuter, H.~Asnani, and T.~Weissman, ``Capacity of a {POST} channel with and without feedback,'' \emph{IEEE Transactions on Information Theory}, vol.~60, no.~10, pp. 6041--6057, Sep. 2014.

\bibitem{NOST}
E.~Shemuel, O.~Sabag, and H.~H. Permuter, ``The feedback capacity of {Noisy Output Is the STate (NOST)} channels,'' \emph{IEEE Transactions on Information Theory}, vol.~68, no.~8, pp. 5044--5059, Aug. 2022.

\bibitem{huleihel2024capacity}
B.~Huleihel, O.~Sabag, H.~H. Permuter, and V.~Kostina, ``Capacity of finite-state channels with delayed feedback,'' \emph{IEEE Transactions on Information Theory}, vol.~70, no.~1, pp. 16--29, Jan. 2024.

\bibitem{Sabag-no-consec-ones}
O.~Sabag, H.~H. Permuter, and N.~Kashyap, ``The feedback capacity of the binary erasure channel with a no-consecutive-ones input constraint,'' \emph{IEEE Transactions on Information Theory}, vol.~62, no.~1, pp. 8--22, Jan. 2016.

\bibitem{Sabag-BIBO}
------, ``Feedback capacity and coding for the {BIBO} channel with a no-repeated-ones input constraint,'' \emph{IEEE Transactions on Information Theory}, vol.~64, no.~7, pp. 4940--4961, Jul. 2018.

\bibitem{Peled-0-k-RLL}
O.~Peled, O.~Sabag, and H.~H. Permuter, ``Feedback capacity and coding for the $(0,k)$-{RLL} input-constrained {BEC},'' \emph{IEEE Transactions on Information Theory}, vol.~65, no.~7, pp. 4097--4114, Jul. 2019.

\bibitem{alajaji1994effect}
F.~Alajaji and T.~Fuja, ``Effect of feedback on the capacity of discrete additive channels with memory,'' in \emph{Proceedings of 1994 IEEE International Symposium on Information Theory}, Jun. 1994, p. 464.

\bibitem{alajaji1995feedback}
F.~Alajaji, ``Feedback does not increase the capacity of discrete channels with additive noise,'' \emph{IEEE Transactions on Information Theory}, vol.~41, no.~2, pp. 546--549, Mar. 1995.

\bibitem{song2018capacity}
L.~Song, F.~Alajaji, and T.~Linder, ``Capacity of burst noise-erasure channels with and without feedback and input cost,'' \emph{IEEE Transactions on Information Theory}, vol.~65, no.~1, pp. 276--291, Jan. 2019.

\bibitem{CoverThomas}
T.~M. Cover and J.~A. Thomas, \emph{Elements of Information Theory}, 2nd~ed., ser. Wiley Series in Telecommunications and Signal Processing.\hskip 1em plus 0.5em minus 0.4em\relax Hoboken, NJ, USA: John Wiley \& Sons, 2006.

\bibitem{Mushkin1989}
M.~Mushkin and I.~Bar-David, ``Capacity and coding for the \mbox{Gilbert-Elliott} channels,'' \emph{IEEE Transactions on Information Theory}, vol.~35, no.~6, pp. 1277--1290, Nov. 1989.

\bibitem{Huleihel-upper-bound}
B.~Huleihel, O.~Sabag, H.~H. Permuter, N.~Kashyap, and S.~Shamai~(Shitz), ``Computable upper bounds on the capacity of \mbox{finite-state} channels,'' \emph{IEEE Transactions on Information Theory}, vol.~67, no.~9, pp. 5674--5692, Sep. 2021.

\bibitem{Yang2005}
S.~Yang, A.~Kavcic, and S.~Tatikonda, ``Feedback capacity of finite-state machine channels,'' \emph{IEEE Transactions on Information Theory}, vol.~51, no.~3, pp. 799--810, Mar. 2005.

\bibitem{Sabag-upper-bound}
O.~Sabag, H.~H. Permuter, and H.~D. Pfister, ``A single-letter upper bound on the feedback capacity of unifilar finite-state channels,'' \emph{IEEE Transactions on Information Theory}, vol.~63, no.~3, pp. 1392--1409, Mar. 2017.

\bibitem{Boyd2004convex}
S.~Boyd and L.~Vandenberghe, \emph{Convex optimization}.\hskip 1em plus 0.5em minus 0.4em\relax Cambridge, U.K.: Cambridge University Press, 2004.

\bibitem{chen2005capacity}
J.~Chen and T.~Berger, ``The capacity of finite-state {Markov} channels with feedback,'' \emph{IEEE Transactions on Information Theory}, vol.~51, no.~3, pp. 780--798, Mar. 2005.

\bibitem{berger-2d-ising}
T.~Berger and F.~Bonomi, ``Capacity and zero-error capacity of {Ising} channels,'' \emph{IEEE Transactions on Information Theory}, vol.~36, no.~1, pp. 173--180, Jan. 1990.

\bibitem{ReRAM}
G.~Song, K.~Cai, Y.~Li, and K.~A. Schouhamer~Immink, ``Maximum achievable rate of resistive random-access memory channels by mutual information spectrum analysis,'' \emph{IEEE Transactions on Information \mbox{Theory}}, vol.~69, no.~5, pp. \mbox{2808--2819}, May 2023.

\bibitem{Blackwell1961}
D.~Blackwell, ``Information theory,'' \emph{Modern Mathematics for the Engineer: Second Series}, pp. 183--193, 1961.

\bibitem{berger2002lec}
T.~Berger, ``The generalized {Shannon-Blackwell} billiard ball channel,'' in \emph{Lecture 2 of CSL Distinguished Visiting Professorship}, {Urbana}, IL: Univorsity of Illinois, Apr. 22, 2002, "Information Theory Invades Biology".

\bibitem{Tatikonda2000thesis}
S.~C. Tatikonda, ``Control under communication constraints,'' Ph.D. dissertation, Massachusetts Institute of Technology, 2000.

\bibitem{bonomi1985thesis}
F.~Bonomi, ``Problems in the information theory of random fields,'' Ph.D. dissertation, Cornell University, 1985.

\bibitem{raymond2018symmetric}
A.~Raymond, J.~Saunderson, M.~Singh, and R.~R. Thomas, ``Symmetric sums of squares over $k$-subset hypercubes,'' \emph{Mathematical Programming}, vol. 167, no.~2, pp. 315--354, Feb. 2018.

\bibitem{raymond2018symmetry}
A.~Raymond, M.~Singh, and R.~R. Thomas, ``Symmetry in {Tur{\'a}n} sums of squares polynomials from flag algebras,'' \emph{Algebraic Combinatorics}, vol.~1, no.~2, pp. 249--274, Mar. 2018.

\bibitem{lovasz2012large}
L.~Lov{\'a}sz, \emph{Large networks and graph limits}.\hskip 1em plus 0.5em minus 0.4em\relax Providence, RI: American Mathematical Society, 2012, vol.~60.

\bibitem{brosch_thesis}
D.~Brosch, ``\BIBforeignlanguage{English}{Symmetry reduction in convex optimization with applications in combinatorics},'' Ph.D. dissertation, Tilburg University, 2022.

\bibitem{huber2021positive}
F.~Huber, ``{Positive maps and trace polynomials from the symmetric group},'' \emph{Journal of Mathematical Physics}, vol.~62, no.~2, p. 022203, Feb. 2021.

\bibitem{klep2018positive}
I.~Klep, {\v{S}}.~{\v{S}}penko, and J.~Vol{\v{c}}i{\v{c}}, ``Positive trace polynomials and the universal {Procesi–Schacher} conjecture,'' \emph{Proceedings of the London Mathematical Society}, vol. 117, no.~6, pp. 1101--1134, Jun. 2018.

\bibitem{klep2022optimization}
I.~Klep, V.~Magron, and J.~Vol{\v{c}}i{\v{c}}, ``Optimization over trace polynomials,'' in \emph{Annales Henri Poincar{\'e}}, vol.~23, Jan. 2022, pp. 67--100.

\bibitem{lasry2007mean}
J.-M. Lasry and P.-L. Lions, ``Mean field games,'' \emph{Japanese Journal of Mathematics}, vol.~2, no.~1, pp. 229--260, Mar. 2007.

\bibitem{riener2013exploiting}
C.~Riener, T.~Theobald, L.~J. Andr{\'e}n, and J.~B. Lasserre, ``Exploiting symmetries in {SDP}-relaxations for polynomial optimization,'' \emph{Mathematics of Operations Research}, vol.~38, no.~1, pp. 122--141, Feb. 2013.

\bibitem{levin2023free}
E.~Levin and V.~Chandrasekaran, ``Free descriptions of convex sets,'' \emph{arXiv preprint arXiv:2307.04230}, Jun. 2024.

\bibitem{Szpilrajn1930}
E.~Szpilrajn, ``Sur l'extension de l'ordre partiel,'' \emph{Fundamenta Mathematicae}, vol.~16, pp. 386--389, 1930.

\bibitem{Cayley1878}
A.~Cayley, ``Desiderata and suggestions: No.\ 2—the theory of groups: Graphical representation,'' \emph{American Journal of Mathematics}, vol.~1, no.~2, pp. 174--176, 1878.

\bibitem{Lalley_MC_LecNotes}
S.~P. Lalley, ``Markov chains: {B}asic theory,'' {L}ecture notes for Statistics 312, University of Chicago, 2016. Available: \url{https://galton.uchicago.edu/~lalley/Courses/312/MarkovChains.pdf}.

\bibitem{widder2009fixed}
A.~Widder, ``Fixed point theorems for set-valued maps,'' \emph{Institute for Analysis and Scientific Computing, Vienna University of Technology}, Jul. 2009.

\bibitem{kohlberg1982contraction}
E.~Kohlberg and J.~W. Pratt, ``The contraction mapping approach to the {Perron--Frobenius} theory: Why {Hilbert's} metric?'' \emph{Mathematics of Operations Research}, vol.~7, no.~2, pp. 198--210, May 1982.

\end{thebibliography}
% Generated by IEEEtran.bst, version: 1.14 (2015/08/26)

% \printbibliography

\end{document}